\theoremstyle{plain}
\newtheorem{theorem}{Theorem}[section]
\newtheorem{lemma}[theorem]{Lemma}
\newtheorem{definition}[theorem]{Definition}
\newcommand{\eps}{\varepsilon}
\newcommand{\E}{\mathbb{E}}
\newcommand{\calP}{\mathcal{P}}
\newcommand{\calS}{\mathcal{S}}
\newcommand{\poly}{\mathrm{poly}}
\newcommand{\bigO}[1]{\mathcal{O}\!\left(#1\right)}
\newcommand{\tildeO}{\widetilde{\mathcal{O}}}
\newcommand{\dil}{D}
\newcommand{\cng}{C}
\newcommand{\zeroonedemand}{$\{0, 1\}$-demand\xspace}
\newcommand{\zeroonedemands}{$\{0, 1\}$-demands\xspace}
\newcommand{\Z}{\mathbb{Z}}
\newcommand{\OPT}{\mathrm{OPT}}
\newcommand{\pth}{\mathrm{path}}
\newcommand{\seq}{\mathrm{seq}}
\newcommand{\seqset}{M^{*}}
\newcommand{\state}{\mathrm{state}}
\newcommand{\ind}{\mathrm{ind}}
\newcommand{\virt}{\mathrm{virt}}
\newcommand{\len}{\mathrm{len}}
\newcommand{\pos}{\mathrm{pos}}
\newcommand{\que}{\mathrm{que}}
\newcommand{\seqn}{\mathrm{seq}}
\newcommand{\compl}{\mathrm{compl}}
\newcommand{\greedycompl}{\mathrm{compl}_{\mathrm{greedy}}}
\newcommand{\src}{\mathrm{source}}
\newcommand{\tar}{\mathrm{sink}}
\newcommand{\rev}{\mathrm{rev}}
\newcommand{\ps}{X}
\newcommand{\pa}{x}
\newcommand{\js}{J}
\newcommand{\jo}{j}
\newcommand{\algmargin}{\the\ALG@thistlm}
\algnewcommand{\parState}[1]{\State%
    \parbox[t]{\dimexpr\linewidth-\algmargin}{\strut\hangindent=\algorithmicindent \hangafter=1 #1\strut}}
\renewcommand{\todo}[1]{}
\newcommand*{\alert}[1]{}
\newcommand{\goran}[1]{}
\newcommand{\antti}[1]{}
\newcommand{\shyr}[1]{}
\newcommand{\cliff}[1]{}
\title{Polylog-Competitive Deterministic Local Routing and Scheduling}
\author{
    Bernhard Haeupler \thanks{Supported in part by the European Research Council (ERC) under the European Union's Horizon 2020 Research and Innovation Programme (grant agreement No.~949272).}\\
    \small ETH Z\"urich \& CMU\\
    \small \texttt{bernhard.haeupler@inf.ethz.ch}\\
    \and
    Shyamal Patel \thanks{Supported in part by NSF grants IIS-1838154, CCF-2106429, CCF-2107187, CCF-2218677, ONR grant ONR-13533312, and an NSF Graduate Student Fellowship.}\\  
    \small Columbia University\\
    \small \texttt{shyamalpatelb@gmail.com}\\
    \and
    Antti Roeyskoe \footnotemark[1]\\
    \small ETH Z\"urich\\
    \small \texttt{antti.roeyskoe@inf.ethz.ch}
    \and
    Cliff Stein \thanks{Supported in part by NSF grant CCF-2218677, ONR grant ONR-13533312 and by the Wai T. Chang Chair in Industrial Engineering and Operations Research}\\
    \small Columbia University\\
    \small \texttt{cliff@ieor.columbia.edu}\\
    \and
    Goran Zuzic\\
    \small Google Research\\
    \small \texttt{goranzuzic@google.com}\\
}
\date{\today}
\date{}
\begin{document}
\maketitle
\thispagestyle{empty}

%\alert{add a table of contents}

% \begin{abstract}
%    We prove that for any undirected network, there exist deterministic local rules, following which any point-to-point demand can be routed with polylogarithmic competitiveness --- while the best known results before had completion time relative to the product of the optimal congestion and dilation, we only have polylogarithmic overhead to their sum.
    
%     This is achieved through a combination of recent results in semi-oblivious routing with novel deterministic scheduling algorithms: we provide a deterministic scheduling algorithm that is polylog-competitive when the set of packets to be scheduled is guaranteed to be a subset of a polynomial-size set of \textit{candidate paths}. Furthermore, our scheduling algorithm extends to \textit{scheduling with noise}, where a hidden set of packets, the \textit{signal}, with low congestion and dilation, is mixed with undesirable packets, the \textit{noise}. We show that a $1-o(1)$ fraction of the signal can be scheduled with polylogarithmic overhead to the signal's congestion and dilation, regardless of the congestion and dilation of the noise, even though it is impossible to discern if a packet is part of the signal or noise.
% \end{abstract}

\begin{abstract}
  \noindent This paper addresses point-to-point packet routing in undirected networks, which is the most important communication primitive in most networks. The main result proves the existence of routing tables that guarantee a polylog-competitive completion-time {\bfseries deterministically}:

  \begin{quote}
  \emph{In any undirected network it is possible to give each node simple stateless deterministic local forwarding rules, such that, any adversarially chosen set of packets are delivered as fast as possible, up to polylog factors.}
  \end{quote}
  
  \noindent All previous routing strategies crucially required randomization for both route selection and packet scheduling.% to and need to assume that packet requests are generated independently.  
  
  %Moreover, the resulting congestion and completion time guarantees of known strategies only hold in expectation or with high probability and under the  assumption that packet requests and random choices are independent\footnote{This independence assumption is questionable in practice. Examples include adversarial denial-of-service attacks which know or can adapt to observed routing behavior. Even the fact that TCP/IP resends packets if acknowledgments time-out creates adaptive feedback behaviour that naturally creates more packets for connections that are not well-served.}.

  \medskip

\noindent The core technical contribution of this paper is a new local packet scheduling result of independent interest. This scheduling strategy integrates well with recent sparse semi-oblivious path selection strategies. Such strategies deterministically select not one but several candidate paths for each packet and require a global coordinator to know all packets to adaptively select a single good path from those candidates for each packet. Of course, global knowledge of all packets is exactly what local routing tables  cannot have. Another challenge is that, even if a single path is selected for each packet, no strategy for scheduling packets along low-congestion paths that is both local and deterministic is known. Our novel scheduling strategy utilizes the fact that every semi-oblivious routing strategy uses only a small (polynomial) subset of candidate routes. It overcomes the issue of global coordination by furthermore being provably robust to adversarial noise. This avoids the issue of having to choose a single path per packet because congestion caused by ineffective candidate paths can be treated as noise.

  \medskip

\noindent  Beyond more efficient routing tables, our results can be seen as making progress on fundamental questions regarding the importance and power of randomization in network communications and distributed computing. For example, our results imply the first deterministic universally-optimal algorithms in the distributed supported-CONGEST model for many important global distributed tasks, including computing minimum spanning trees, approximate shortest paths, and part-wise aggregates.
\end{abstract}

%  local routing strategies for , a central primitive in distributed computing. In the most general setting, we give the first randomized poly-competitive routing that is robust against adaptive adversaries. When the network is known in advance, we give a deterministic and polylog-competitive local routing.

  % For any fixed network G, we design the first 
  % When the network is known in advance we give the first deterministic and polylog-competitive local strategy, resolving 
  % knowedge, we design randomized polylog-competitive strategy that is robust to an adaptive adversary.
  % We give the first deterministic and polylog-competitive local routing strategy for point-to-point communication in undirected networks. %, resolving an important open problem in distributed computing.

  % Our result, in conjunction with prior work, completes a solid justification of semi-oblivious routing's usefulness from a theoretical standpoint. This supports prior practical efforts which showed that, empirically, semi-oblivious routing exhibits near-optimal efficiency and superior robustness in network traffic engineering~[NSDI'18].

\newpage

\tableofcontents

\thispagestyle{empty}
\setcounter{page}{0}

% \newpage

% \alert{\colorbox{pink}{Paper-writing collaboration notes:}
% \begin{itemize}
% \item use ``path set'' instead of ``path system''
% \item use ``node'' instead of ``vertices''  
% \item no oblivious or semi-oblivious *routing* should be mentioned anywhere in the paper .. use "path selection" for the \textit{strategy} and "path set" for the object!!!
% \item in the paper, we mostly talk about the deterministic case. maybe we can add a point about the randomness somewhere. perhaps say
% \item Check logarithm bases! Use $\log$ for base-2 logarithm wherever possible, $\ln$ where paired with $\exp$.
% \item Use $\poly(\log n)$, not $\poly \log n$, CHECK THAT THIS IS DONE EVERYWHERE!
% \item Be consistent between permutation demands and \zeroonedemands, which do we want to use?? \antti{the reduction from zeroonedemands to general demands does not work from permutation demands to general demands due to a technicality (can't track if a packet has crossed simulated edge at the start of a path), so better to use zero-one demands}
% \item Sections 5 and 6 (main scheduling and routing sections) have not been reviewed after big changes, please read and comment
% \item (G) I removed the term ``machine-stateless'' since all our algorithms have this property and the amount of new terms was overwhelming. I just use ``stateless'' to mean ``job-stateless''
% \item (G) I removed the distinction between greedy and greedy-enabled as it is not central to the paper and might confuse people.
% \end{itemize}}

\newpage
\section{Introduction}

Point-to-point packet routing is the most important and common communication primitive in most networks and highly efficient routing, congestion control, and scheduling procedures form the backbone of the internet and essentially all modern distributed systems. It is also related to many intensely studied mathematical problems in distributed computing, scheduling, operations research, graph algorithms, network optimization, network information theory, network coding, etc.

This paper focuses on the following extremely natural and clean mathematical formulation:

\begin{quote}
  \emph{In a network, abstracted as an undirected graph, several nodes locally generate packets. Each packet specifies a destination node to be delivered to. In each synchronous time step, nodes can forward up to one packet to each neighbor. The goal is to deliver all packets as fast as possible.}
\end{quote}

In a typical setup, once a network is wired up and routers are configured (based on the network topology but without knowledge about future traffic/packet requests), routing decisions need to be extremely efficient. Indeed, for real-time communications (e.g., even just Zoom calls) the latencies across several dozens of routing hops add up, hence it is crucial that routing decisions are done in microseconds, and even consumer routers now often process millions of packets per second. % with commercial or ISP routers being order-of-magnitude faster.

To accommodate this requirement, essentially all routing strategies construct a set of \textbf{local forwarding rules} called \textbf{routing tables}. Locality here refers to the requirement that a node must make a decision based only on its own view of the network state. The responsibility of routing tables includes both \textbf{path selection}, i.e., deciding which neighbor to forward a received packet to, and \textbf{packet scheduling}, i.e., deciding which packet to prioritize and forward first if multiple packets present at a node should go to the same neighbor. The objective to minimize the time until all packets are delivered is called \textbf{completion time}.

There are many important aspects of routing algorithms. However, one of the most practically important and conceptually fundamental is \textbf{determinism}. This is because routers are simple machines that must be extremely reliable.
% Moreover, generating large amounts of high-quality randomness in a fast, reliable, cheap way is very hard at a hardware level.
Deterministic forwarding rules that are guaranteed to work well for all possible network traffic, even traffic chosen by adversaries that know or observe the router configurations, are crucial in security-relevant settings because of their provable resilience against denial-of-service attacks.
% Deterministic forwarding rules that are guaranteed to work well for all possible network traffic, even ones chosen by adversaries that know or observe the router configurations, are crucial in security-relevant settings because of their provable resilience against denial-of-service attacks.

Unfortunately, despite significant research efforts towards understanding the power of randomization in distributed computation \cite{gerla1973deterministic,panconesi1996complexity,awerbuch1996fast,lenzen2013optimal,chen2016distributed,chang2019time,chang2020deterministic,ghaffari2020network,anagnostides2021deterministic,ghaffari2023faster}, no effective deterministic routing strategy with provable guarantees on every graph has been given. We resolve this problem by proving that for every undirected network, there exist \emph{deterministic} routing tables with polylog-competitive completion time, i.e., every node can be given some local stateless forwarding rules such that routing any set of packets using these deterministic rules delivers all packets essentially as fast as possible. 

\begin{restatable}{theorem}{introRouting}\label{thm:intro-routing}  
  For every graph $G = (V, E)$ with $|E| = \poly(n)$ there exists a deterministic and stateless routing table $A_G$ that satisfies the following. Given any set of source-destination pairs $d = \{(s_i, t_i))_{i=1}^{k}$ with $k \le \poly(n)$, $A_G$ routes each packet from $s_i$ to $t_i$ simultaneously with completion time $\poly(\log n) \cdot \OPT(d)$ where $\OPT(d)$ is the offline-optimal completion time.
\end{restatable}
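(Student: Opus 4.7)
The plan is to combine two ingredients: a deterministic sparse semi-oblivious path-selection scheme, and a new deterministic stateless local packet-scheduling rule that is provably robust to adversarial noise.

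First, for every ordered pair $(s, t)$, precompute a small set $\mathcal{P}_{s,t}$ of at most $\poly(\log n)$ candidate $s$-$t$ paths using existing sparse semi-oblivious routing constructions, and encode them into the per-node forwarding rules of $A_G$. The semi-oblivious guarantee is that for any demand $d = \{(s_i, t_i)\}_{i=1}^k$, some assignment of one path $p_i \in \mathcal{P}_{s_i, t_i}$ per packet achieves congestion $C$ and dilation $D$ with $C + D \le \poly(\log n) \cdot \OPT(d)$. The core difficulty, as the abstract highlights, is that identifying the single ``right'' path $p_i$ for each packet requires global coordination that local routing tables cannot provide.

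The central new ingredient is a deterministic local scheduler with the following robustness property: when packets are injected along an arbitrary set of paths of dilation $D$ and congestion $C$, the scheduler delivers each packet in $\poly(\log n)\cdot(C + D)$ time; moreover, the delivery time of a particular packet is governed essentially by the congestion \emph{along its own path}, independent of congestion contributed elsewhere by unrelated ``noise'' packets. Given such a scheduler, $A_G$ simply injects a copy of each packet $i$ along \emph{all} $\poly(\log n)$ of its candidate paths in $\mathcal{P}_{s_i,t_i}$. One of those copies travels along the ``correct'' low-congestion path $p_i$ and is delivered on time; the other copies are absorbed as noise. The blow-up in total congestion is at most $\poly(\log n)$, preserving the polylog-competitive completion time, and duplicate arrivals at the destination are discarded trivially.

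The main obstacle is constructing this deterministic, stateless, noise-robust scheduler. All previously known $O(C + D)$-type scheduling results (Leighton--Maggs--Rao and its local and online descendants) rely fundamentally on random priorities or random initial delays, and even without the noise-robustness requirement no deterministic analogue is known. The plan is to define per-packet priorities as a deterministic function of the path identifier and hop count (both stored in the routing tables), and to bound delays via a potential argument that charges each unit of slack at an edge to a local unit of congestion. Noise-robustness would then follow from an edge-by-edge amortization showing that adversarial extra packets crossing an edge can inflate a legitimate packet's delay by at most the additional local congestion they induce. This scheduling lemma is precisely the core technical contribution advertised in the abstract; once established, composing it with sparse semi-oblivious path selection yields Theorem~\ref{thm:intro-routing}.
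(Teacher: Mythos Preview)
Your high-level decomposition into semi-oblivious path selection plus a noise-robust scheduler matches the paper, but there is a genuine gap in how you formulate the noise-robustness and in the claim that ``the blow-up in total congestion is at most $\poly(\log n)$.'' The semi-oblivious guarantee says only that \emph{one} choice of path per pair yields congestion $\poly(\log n)\cdot\OPT(d)$; the remaining $\alpha-1$ candidate paths per pair carry no congestion bound whatsoever and can pile arbitrarily onto a single edge. Consequently, the congestion that the noise induces \emph{on the edges of a good path} is not controlled, and an edge-by-edge amortization that charges a packet's delay to the congestion it sees along its own path would yield a bound proportional to the total (unbounded) congestion, not to $\OPT(d)$. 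The paper's scheduler is robust in a different sense: it exploits only that the noise is bounded in \emph{number} (at most $\alpha$ times the signal), not in congestion. It does not try to deliver the ``good'' copy specifically; instead it guarantees that a constant fraction of \emph{any} $(\beta,T)$-good subset completes in $\tilde{O}(\beta T)$ time, via a random-delay scheme with a threshold of $4\beta l$ collisions (so that small-drop events waste at most $|J_S|/4$ signal packets) and a ``bad pattern'' union bound for large-drop events. Repeating $O(\log n)$ times then delivers everything.

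Two further points. First, a stateless routing table cannot duplicate packets, so ``inject a copy along all candidates'' is not available; the paper instead iterates over the $\alpha$ candidates one at a time and uses a scheduling-with-return subroutine so that undelivered packets come back to their sources between attempts. Second, the derandomization is not via explicit deterministic priorities and a potential argument, but via the probabilistic method: the random-delay scheduler is shown to fail on a fixed job set of size $s$ with probability $\exp(-\Omega(s\log\log|M|))$, and since all paths come from a polynomial-size domain $\calP$, a union bound over all $|\calP|^s$ possible job sets of each size $s$ leaves a good seed. This exponential-in-$s$ failure probability (enabled by only asking to complete \emph{half} the jobs) is the core technical step your proposal does not address.
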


Here, ``stateless'' means that the algorithm is not allowed to modify the packets when forwarding them.  

The fact that routing can be done deterministically in general networks with polylog-competitive algorithms at all is very surprising. The only interesting classes of networks for which fast deterministic routing is known are hypercubes~\cite{ajtai19830,leighton1998hypercubic} (and more generally, expanders~\cite{chang2020deterministic}), and these involve notoriously complicated sorting networks and expander-decomposition algorithms.

Furthermore, one typically decomposes the routing problem into two subproblems: \textbf{path selection} and \textbf{packet scheduling}. The former asks to select a single path between the source and destination for each packet, while the latter asks to determine at which time steps packets advance across the next edge on their predetermined path. Even in isolation, these subproblems do not offer satisfactory deterministic solutions. Any deterministic and oblivious path selection on hypercubes must suffer a competitiveness ratio of at least $\tilde{\Omega}(\sqrt{n})$~\cite{KaklamanisKT91}, and the randomized version was resolved only recently~\cite{GhaffariHZ21} for completion time. Worse yet, packet scheduling has an extremely simple polylog-competitive randomized strategy~\cite{LeightonMR94}, but no non-trivial deterministic one is known.

We bypass all of these results by leveraging the recently developed \textbf{semi-oblivious path selection}~\cite{GBA23}. While deterministic, semi-oblivious path selection previously lacked a local and efficient (even randomized!) packet selection strategy, hence did not give notable theoretical guarantees for routing. We design such a scheduling strategy in this paper, derandomize it, and finally resolve the question of deterministic routing.

While our packet scheduling is especially adapted to semi-oblivious paths and is not fully general, we believe it might be of independent interest. Beyond being deterministic, it is provably resilient to adversarial noise --- a crucial property required to route semi-oblivious paths. Furthermore, it can be combined with \textbf{greedy} strategies: it can handle arbitrary opportunistic advancements ahead of schedule, a very important practical consideration that enables efficiency in low-congestion conditions~\cite{cidon1995greedy}.

\textbf{Computational aspects.} The primary goal of this paper is to prove the existence of deterministic routing tables, a combinatorial object unrelated to computational issues. However, we provide an efficient Monte Carlo construction for these routing tables where randomness is used but the constructed object is good with high probability. Moreover, if successful, the routing table is then robust against even an adaptive adversary (that can see all sampled random bits and any demand pattern indefinitely in the future. We leave the question of an efficient deterministic construction for future work.

\textbf{Consequences for the principles of distributed computing.} Point-to-point routing is a fundamental primitive in distributed computing upon which many other algorithms are built. Specifically, a long line of work in the theory of low-congestion shortcuts and universal optimality~\cite{GH16,haeupler2016near,haeupler2018minor,2022sssp,GHR21,haeupler2020network,ghaffari_haeupler2021shortcuts_in_minor_closed,ghaffari2021universally,anagnostides2021almost,goranci2022universally,haeupler2021universally} culminated with the conclusion that near-optimal point-to-point routing is the \emph{only} barrier required to achieve near-optimal distributed algorithms for many important problems including the minimum spanning tree, approximate shortest path, part-wise aggregation, etc. Moreover, a recent paper by Ghaffari and Zuzic~\cite{ghaffari2021universally} derandomized the entire framework except for the point-to-point routing; the missing piece is contributed by this paper. An important consequence of our work is that we can now design \emph{deterministic} \textbf{universally-optimal algorithms}~\cite{GKP98,haeupler2021universally} for all of the above problems: Namely, for every network $G$, we design \emph{deterministic} algorithms that are $\poly(\log n)$-competitive with the fastest possible algorithm on $G$. In terms of distributed models, our results apply to the supported-CONGEST model~\cite{schmid2013exploiting}: a standard message-passing model with $O(\log n)$-messages per time step where the topology is known in advance (but not the input). The supported-CONGEST model is appropriate for applications where the network is fixed in advance.
\begin{restatable}{theorem}{thmMainSupported}\label{thm:main-supported} 
  There exists a deterministic universally-optimal algorithm in supported-CONGEST for the following tasks: exact minimum spanning tree, $(1+\frac{1}{\poly(\log n)})$-approximate single-source shortest path, and part-wise aggregation.
\end{restatable}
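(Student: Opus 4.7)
The plan is to combine the deterministic routing tables of Theorem~\ref{thm:intro-routing} with the existing black-box reduction from universally-optimal algorithms for these three tasks to a polylog-competitive point-to-point routing primitive. The line of work on low-congestion shortcuts and universal optimality cited in the introduction establishes that, for exact MST, $(1+\frac{1}{\poly(\log n)})$-approximate SSSP, and part-wise aggregation, a universally-optimal distributed algorithm on any network $G$ can be obtained with only polylog overhead from a polylog-competitive point-to-point routing subroutine on $G$. The Ghaffari--Zuzic derandomization of this framework makes every ingredient deterministic \emph{except} for the routing subroutine itself, which is exactly the missing piece that Theorem~\ref{thm:intro-routing} now supplies.

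First, I would invoke Theorem~\ref{thm:intro-routing} on each supported network $G$ to obtain the deterministic stateless routing table $A_G$. In the supported-CONGEST model the entire topology is known to all nodes in advance, so each node can locally compute (offline, with no communication) its own portion of $A_G$ from the shared topology and thereafter answer routing queries deterministically. Second, I would plug this primitive into the derandomized shortcut/universal-optimality pipeline, replacing its lone randomized component, to obtain the three deterministic universally-optimal algorithms claimed in the theorem. The final $\poly(\log n)$-competitiveness then follows by multiplying the polylog overheads of the reduction to routing, of the derandomized surrounding framework, and of Theorem~\ref{thm:intro-routing} itself.

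The main obstacle, and essentially the only point requiring care, is to verify that the routing subproblems emitted inside the shortcut-based algorithms conform to the interface of Theorem~\ref{thm:intro-routing}: adversarially chosen source--destination pairs of polynomial total size on $G$, with success measured by completion time relative to the offline optimum $\OPT(d)$ on $G$. This matches exactly the routing interface used throughout the low-congestion shortcut literature and the one derandomized by Ghaffari--Zuzic, so the composition is immediate; the only subtlety is that the Ghaffari--Zuzic reduction is typically phrased in terms of a congestion$+$dilation routing bound, which is subsumed by the stronger completion-time guarantee we now have, so no modification of the surrounding framework is needed.
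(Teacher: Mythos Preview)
Your proposal is correct and follows essentially the same approach as the paper: invoke Theorem~\ref{thm:intro-routing} to supply the deterministic routing primitive, and plug it into the existing (derandomized) shortcut/universal-optimality framework. The paper structures the argument a bit more explicitly---it first establishes deterministic universally-optimal part-wise aggregation (via the shortcut-quality parameter $\mathrm{SQ}(G)$ and the fact that the routing instances arising there are \emph{connectable}, hence have $\OPT(d)\le \mathrm{SQ}(G)\cdot\poly(\log n)$), inserts a small Cole--Vishkin derandomization step for the reduction in \cite{haeupler2021universally}, and then lifts to MST and approximate SSSP through the deterministic Minor-Aggregation simulation of \cite{ghaffari2021universally} and \cite{2022sssp}---but these are refinements of the same route you outline rather than a different argument.
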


We note that, while we expect our results to be useful for the distributed setting, the construction of the routing table is not distributed: the construction requires the global knowledge of the graph $G$, hence the supported-CONGEST model. Studying algorithms under this model has been increasingly popular as many applications can safely assume the network is fixed~\cite{schmid2013exploiting,KumarYYFKLLS18}: e.g., a supercomputer's interconnect topology doesn't change, data centers rarely move, a software-defined network controller knows the global topology, etc. 

Moreover, resolving the known-topology restriction might be hard as an efficient distributed algorithm for this task would circumvent the pervasive ``expander-routing barrier'' where the task is to find a distributed $\poly(\log n)$ round CONGEST algorithm for solving the point-to-point routing task in an expander. The barrier is not solved even in the randomized setting. Partial results towards resolving it include \cite{ghaffari2017distributed,GL18b,GHR21,chang2020deterministic}.

%Indeed, many barriers remain in the CONGEST model: even with randomization, we currently do not know how to obtain near-optimal universally-optimal results for all graphs.

\textbf{Consequences for principled network traffic engineering.} Recent studies in the context of network traffic engineering (TE) have explored routing strategies that use semi-oblivious path selection~\cite{kumar18semi,KumarYYFKLLS18}. Namely, they consider a small constant number of candidate paths (in their case, $4$) between each pair of nodes and exclusively send traffic along those paths. However, even after fixing the paths, this approach requires one to choose for each packet which candidate path to use and to schedule the packets along the chosen path.

In principle, these problems might require a lot of global coordination and might have very fragile solutions that need to be frequently recomputed. On the other hand, Kumar et al.~\cite{kumar18semi,KumarYYFKLLS18} show that a simple global coordinator performs well in practice. Our work partially explains this success by proving that the problems of packet scheduling and deciding between the candidates are solvable with local algorithms (i.e., without global coordination) and the solutions are extremely robust to even adversarial noise. Furthermore, our work might pave the way for future deterministic, more local, and more efficient strategies.

\textbf{Consequences for job shop scheduling.} % \alert{GZ: explain model, explain results. Basically, jobs can only move forward (no backward movement I guess). This is achivable since there is no noise.} 
%\alert{CS: I added a few paragraph about job shop scheduling here. 
% Since the FOCS reviewers questioned the model and local is probably less motivated for job shop, I don't think we should say more than what I wrote, and possibly less.  I also did not formally define the problem, as I don't think it is worth the space to do so.}
There is a close relationship between packet scheduling and the classical job shop scheduling problem (see e.g. \cite{Pinedo2022}). In job-shop scheduling, you have a set of machines, and a job consists of an ordered series of operations, where each operation has a particular processing time on a particular machine. We can cast the packet scheduling problem as a job-shop scheduling problem by calling each edge a machine and each path a job, where the operations are unit-sized and the path defines the ordered set of machines. We contribute to the development of job shop scheduling algorithms that are deterministic and local. While these properties have not been traditionally the focus of the scheduling literature, our other contributions of greedy-enabled (see \Cref{sec:scheduling-model}) algorithms that are near-optimal to the congestion+dilation are central~\cite{ShmoysSW94,GoldbergPSS01,FeigeS02,MastrolilliS11} and may have further ramifications.
%The connection to job shop may have further ramifications, since the state-of-the art approximation algorithms for job shop use, as a lower bound, the analog of congestion plus dilation (maximum machine load plus maximum job length), and techniques such as \cite{LeightonMR94, LeightonMR99} are used in some of the algorithms.   The approximations for these problems depend on the variant, and are roughly $\log$ or $\log^2$ of the lower bound ften with near-matching lower bounds.

% Randomised algorithms that only work against \textbf{oblivious} adversaries can break down when packet traffic adapts to how routing is done, for example if clients with delayed or dropped packets are re-sending them; the guarantees of deterministic algorithms hold regardless of how the packet set is chosen.

\section{Preliminaries}
\label{sec:prelims}

Routing can be decomposed into two separate subtasks: \textbf{path selection} and \textbf{packet scheduling}. The former asks to select a single path between the source and destination of each packet, while the latter asks to determine at which time steps packets advance across the next edge on their predetermined path. We note that papers often do not differentiate between the two; either subtask is often referred to as ``routing''. However, this paper draws a clear distinction between them.

\textbf{(Classic) Packet scheduling.} Suppose a set of paths $P$ to be used for routing is predetermined (e.g., computed by a previous phase of an algorithm) and our goal is to compute a schedule that minimizes completion time. This goal is closely tied to the \textbf{congestion} and \textbf{dilation} parameters, defined below.
\begin{definition}
  The length $\len(p)$ of a single path $p$ is the number of edges on that path (we consider unit graphs). The \textbf{dilation} of a path set $P$ is $D(P) := \max_{p \in P} \len(p)$, i.e., the largest length of a path in the set. The \textbf{congestion} of the path set $P$ is $C(P) := \max_{e \in E} |\{ p : P \mid e \in p \}|$, i.e., the maximum number of paths that contain any one edge in $G$.
\end{definition}

Trivially, no schedule can take fewer than $D(P)$ time steps as this is insufficient to schedule even the longest path. Similarly, ignoring congestion issues on all but the most-congested edge requires $C(P)$ time steps. Therefore, any schedule requires $\Omega(C(P) + D(P))$ time steps. An $O(C(P) + D(P))$ upper bound, matching the lower bound up to constants, was proven in the seminal paper of Leighton, Maggs, and Rao~\cite{LeightonMR94}. In other words, the optimal completion time to schedule packets over a path set $P$ is $\Theta(C(P) + D(P))$. However, achieving this bound requires extensive global coordination and is infeasible for local algorithms.

On the other hand, no similar result is known that can be applied in the deterministic and local setting. It is easy to show that arbitrary forwarding yields a schedule with completion time $C(P) \cdot D(P)$, but no deterministic local algorithm has been able to bypass even the $C(P) \cdot D(P)$ barrier by more than a polylogarithmic factor. We bypass this question by tackling both the path selection and scheduling problems together and give a deterministic end-to-end routing that is $\poly(\log n)$-competitive in terms of completion time with the optimal offline solution. %\antti{Would maybe also be nice to be able to say that path selection must be random as well before this, we bypass two problems!} \shyr{I'm a little wary about selling the path selection too hard, since we mostly use ZHR to compute these. But maybe this is fine?} \antti{that's true, I guess we are "only" bypassing the problem of requiring centralised control to select the good path subset from the semi-oblivious routing instead of the full path selection problem. This is probably good (but talking about motivation is really hard so I cannot really be sure here)}

Our approach crucially relies on two recent breakthroughs in the theory of oblivious routing: hop-constrained oblivious routing~\cite{GhaffariHZ21} and sparse semi-oblivious routings~\cite{GBA23}, both explained below. We now describe these crucial ingredients.

\textbf{Oblivious path selection.} To give some context, a path-selection strategy is called \textbf{oblivious}~\cite{gupta2006oblivious,Racke08,englert2009oblivious,Racke019} if the path assigned to each source-destination pair $(s_i, t_i)$ depends only on $s_i$ and $t_i$ (and the underlying network $G$). The path selection algorithms used most heavily by the Internet are in practice either oblivious or close to oblivious. In a seminal result, Raecke~\cite{Racke08} showed that for every network there exists an oblivious and stochastic path-selection strategy that is in expectation $O(\log n)$-competitive with the offline optimum in terms of $C(P)$ for all demands. However, this path selection produces paths of unbounded dilation hence is not appropriate for minimizing completion time.

More recently, Ghaffari, Haeupler, and Zuzic~\cite{GhaffariHZ21} designed the ``hop-constrained oblivious path selection'' which produces stochastic paths that are $\poly(\log n)$-competitive in expectation with the \emph{offline-optimum completion time}. We first formally define the term.

\begin{restatable}{definition}{defDemand}\label{def:demand}
  A \textbf{demand} $d$ is a collection $d = \{ (s_i, t_i) \}_{i=1}^k$ of $k$ node pairs, where the same pair of nodes is allowed to appear more than once. A demand is a \textbf{permutation demand} if no two pairs share a common source or sink, i.e., $s_i \neq s_j$ and $t_i \neq t_j$ when $i \neq j$. A demand is a \textbf{$\{0,1\}$-demand} if no pair appears more than once. A path set $P$ \textbf{satisfies} the demand $\{(s_i, t_i)\}_{i=1}^k$ if for every node pair $(a, b)$, if the pair appears $m$ times in $d$, there are at least $m$ $(a, b)$-paths in $P$.
\end{restatable}

\begin{definition}\label{def:offline-opt-completion-time}
  Given a demand $d := \{(s_i, t_i)\}_{i=1}^k$ in a graph $G$, the \textbf{offline-optimum completion time} is defined as $\OPT(d) := \min_{P^*} C(P^*) + D(P^*)$, where the $\min$ is taken over all path sets $P^*$ that satisfy the demand.

  A path selection or routing strategy is $\alpha$-\textbf{competitive} in terms of completion time if, for all demands, it gives a strategy $P$ with $C(P)+D(P)$ or completion time (resp.) that is at most $\alpha$ times the offline-optimum completion time.
\end{definition}
%\antti{what is a "routing solution"?} \shyr{I think we can just say "A path selection..." here. Doesn't seem like ``routing solution'' is used anywhere else.}

In more detail, given a demand $d$ and $OPT(d)$, the oblivious stochastic path selection of \cite{GhaffariHZ21} produces near-optimal paths $P$ with dilation $D(P) = \poly(\log n) \cdot OPT(d)$ and expected congestion $\E[ C(P) ] \le \poly(\log n) \cdot \OPT(d)$.

Of course, oblivious path selection strategies are inherently randomized and no deterministic structure with similar properties can exist \cite{KaklamanisKT91}. %exists (or can exist, as proven in~\cite{KaklamanisKT91}).
%Moreover, it is unclear how to design deterministic packet scheduling even if they were known.

\textbf{Semi-oblivious path selection.} Raecke's $O(\log n)$-competitive oblivious path selection matches the (existential) $\Omega(\log n)$ lower bound. Motivated by this success, a prominent extension called ``semi-oblivious path selection'' was suggested by Hajiaghayi, Kleinberg, and Leighton~\cite{HajiaghayiKL07} with the hope of surpassing the $\Omega(\log n)$ barrier.

In a nutshell, a semi-oblivious path selection is allowed to (obliviously) select \emph{multiple} \textbf{candidate paths} for each source-destination node pair. The setting guarantees the existence of a single path for each pair from the candidates such that the chosen paths are competitive with the offline optimum. However, finding these choices is a priori a hard problem requiring global coordination.

Furthermore, theoretical results for semi-oblivious path selection have been primarily negative: \cite{HajiaghayiKL07} showed that any such selection with polynomially-many candidate paths cannot be $o(\frac{\log n}{\log \log n})$-competitive in terms of edge congestion --- essentially no better than standard oblivious path selection in the worst case. Similarly, Czerner~\cite{czerner2020semi} shows similar semi-oblivious barriers in directed graphs.

The only positive result was contributed recently by Zuzic, Haeupler, and Roeyskoe~\cite{GBA23} who consider \textbf{deterministic} $\alpha$-\textbf{sparse} semi-oblivious path selections. In this setting, at most $\alpha \ge 1$ many candidate paths are selected between each source-destination pair.
% For example, setting $\alpha = 1$ reduces to the oblivious setting \alert{!!! chance this!}.
\cite{GBA23} show that every graph has a deterministic $O(\log n)$-sparse semi-oblivious path selection that is $\poly(\log n)$-competitive in terms of congestion $C(P)$. Moreover, by combining their results with hop-constrained oblivious routings~\cite{GhaffariHZ21}, they also obtain $O(\log^2 n)$-sparse semi-oblivious path selection that is $\poly(\log n)$-competitive in terms of completion time.

Of course, several challenges remain in using semi-oblivious selections for deterministic and local routing. No local strategy for selecting a path from the candidates was available prior to this paper. Furthermore, even if assuming a path set is selected, no deterministic packet scheduling strategy is known.

\textbf{Routing tables.} In this paper, a \textbf{routing table} is any local algorithm for routing.

\section{Technical contribution}
\label{sec:techoverview} 
\textbf{Our contribution: Deterministic local routing when the topology is known.}
Our main result is a deterministic and local end-to-end routing that is $\poly(\log n)$-competitive in terms of completion time with the optimal offline solution. We formalize our routing model in \Cref{sec:routing-model}.

\introRouting*

% \begin{restatable}{corollary}{deterministicroutingexists}\label{cor:deterministicroutingexists}
%   In any graph, there exists a deterministic, local stateless routing algorithm $A$, such that for all polynomially-sized demand $d$, the completion time of $A$ on $d$ is $\tildeO{(\OPT(X))}$ where $\OPT(d)$ is the offline optimum.
% \end{restatable} 

We achieve this routing result by combining the semi-oblivious path selection~\cite{GBA23} with a new robust packet scheduling strategy developed in this paper.
%In the above statement, the permutation requirement is a technical condition that can be removed in many ways, but is sufficient for our distributed computing application.

\textbf{Technical contribution: Robust scheduling on a ``domain'' path set.} Our main technical result is a novel \emph{scheduling} algorithm for semi-oblivious routing that is local and deterministic and might be of independent interest. Furthermore, our scheduling is \textbf{greedy-enabled}, which means that the algorithm can handle an arbitrary number of opportunistic packet forwards that are ahead of schedule. Our scheduling model is formalized in \Cref{sec:scheduling-model}.

% \antti{was wrong citation? switched it}% \cite{Albers05}. %And an algorithm is stateless if it requires no mutable bits in the packets' header.

While we design our scheduling algorithm to be used with semi-oblivious path selection, our scheduler requires only one inherent assumption: we require that the paths $P$ we want to schedule must be a subset of some (so-called) \textbf{domain path set} $\calP$ and we require that $\calP$ has at most polynomial size. One should think of $\calP$ as the set of all possible paths that can appear in the input. However, the actual demand $d$ can be any subset of $\calP$ and our algorithms are oblivious to the actual demand while being polylog-competitive with $\OPT(d)$ (and not $\OPT(\calP) \gg \OPT(d)$).

We develop a deterministic and randomized version of the algorithm. Our deterministic scheduler needs to know both $\calP$ and the underlying graph $G$. On the other hand, our randomized scheduler is fully general in that it requires upfront knowledge of either $\calP$ or the graph $G$; and unlike prior work is robust even against an adaptive adversary.

We emphasize that, while our scheduler requires the domain-set assumption, the final deterministic routing algorithm (i.e., \Cref{thm:intro-routing}) does not! This is because we combine our scheduler with the semi-oblivious path-selection strategy that quietly provides a polynomially-sized set of domain paths such that restricting our paths to this domain allows for polylog-competitive routing for all demands. In other words, the domain-set assumption is the crucial technical insight that is ultimately internal to our routing algorithm.

% The deterministic scheduler depends on $\calP$ and therefore requires $\calP$ to be known along with the underlying graph $G$. This assumption can be dropped if one allows for randomization in choosing the deterministic scheduler (which then still works deterministically for all demands). \shyr{Should we also mention here that our deterministic algorithm is not efficient?} \antti{the existence result one is efficient, the actually constructive one isn't, so it's fine}

However, even outside of our scheduling algorithm, the domain-path-set restriction is reasonable in many applications, such as, say, a large company choosing polynomially-many potential paths between each pair of its data centers.

% The competitiveness of our scheduling algorithm only depends on $\calP$ via a $O(\log |\calP|)$ term, therefore allowing for (polynomially) large domain path sets. 

\alert{TODO: make sure Crefs that should go to the below result actually go to it, and those to the random version go to that, and vice versa, as their labels got switched at some point.}

\begin{restatable}{theorem}{introPathScheduling}
\label{thm:intro-path-scheduling}
Let $\calP$ be a domain path set and $G = (V,E)$ a directed graph where $\calP, V, E$ have cardinality at most $\poly(n)$. There exists a deterministic and local algorithm with completion time $\tildeO(\OPT(\ps)) = \tildeO(C(\ps) + D(\ps))$ on any subset $\ps \subseteq \calP$. The algorithm can either be stateless, or can be greedy-enabled and have 1-bit states.
\end{restatable}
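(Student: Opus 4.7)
The plan is to leverage the offline knowledge of $\calP$ and $G$ to precompute a \emph{scale-adaptive} family of local schedules. Specifically, for each scale $T \in \{2^0, 2^1, \ldots, \poly(n)\}$, I would precompute a \emph{level-$T$ schedule} that delivers any sub-demand $\ps \subseteq \calP$ with $C(\ps) + D(\ps) \le T$ in $\tildeO(T)$ steps, and then run these schedules in sequence with geometrically doubling $T$. By geometric summation, the total completion time would be $\tildeO(T^*) = \tildeO(C(\ps) + D(\ps))$, where $T^*$ is the smallest power of $2$ exceeding the true offline optimum. The stateless local rule at each node would simply be: during level $T$ at time $t$, forward packet $p$ according to its predetermined schedule entry $\sigma^T(p, t)$.

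The core building block for each scale $T$ is a deterministic delay assignment $\delta^T: \calP \to [0, T \cdot \poly(\log n)]$ such that packet $p$ is scheduled to traverse its $i$-th edge at time $\delta^T_p + i$, with local FIFO buffering to resolve residual ties. For random delays, the classical Leighton--Maggs--Rao analysis shows that the per-(edge, time)-slot load concentrates to $\poly(\log n)$ via Chernoff, and a short buffering phase resolves collisions with polylog slowdown. To derandomize, I would combine $k$-wise independent hashing (with $k = \poly(\log n)$) with the method of conditional probabilities, exploiting the fact that $|\calP| = \poly(n)$ to enumerate the $\poly(n)$-many bad events during offline precomputation.

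The main obstacle is that a single $\delta^T$ must work simultaneously for \emph{all} (exponentially many) sub-demands. I would address this by controlling the ``domain load'' $|\{p \in \calP : e \in p,\ \delta^T_p + \pos(p, e) = t\}|$ uniformly over every $(e,t)$, from which every sub-demand with $C(\ps) \le T$ automatically inherits the same polylog load bound. The delicate case is edges $e$ where the domain congestion $c_e^{\calP}$ vastly exceeds $T$: here a flat uniform delay over $[0, T \cdot \poly(\log n)]$ cannot spread all $\calP$-paths thinly enough. I would handle these by augmenting $\delta^T$ with per-(path, edge) offsets or by a recursive round-based decomposition that exploits the constraint $|\{p \in \ps : e \in p\}| \le T$, ensuring that even when the adversary selects the $T$ most-concentrated paths through $e$, the sub-demand load at each slot stays polylog.

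To extend to the greedy-enabled $1$-bit-state variant, I would augment each packet with a single flag marking whether it is ``on schedule'' or ``ahead of schedule''. The local rule prioritizes on-schedule packets but permits ahead-of-schedule packets to advance opportunistically whenever an outgoing edge is free and no on-schedule packet is competing for it; the flag is updated accordingly when a free edge is consumed. This preserves the worst-case $\tildeO(C(\ps) + D(\ps))$ completion time while enabling faster delivery in locally underused regions of the network.
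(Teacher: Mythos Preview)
Your high-level scaffolding (geometric doubling over scales $T$, precompute local rules per scale, run them in sequence) matches the paper. The gap is in the core building block: you attempt to find, for each scale $T$, a \emph{single} deterministic delay assignment $\delta^T:\calP\to[0,T\cdot\poly(\log n)]$ whose per-$(e,t)$ load is $\poly(\log n)$ for \emph{every} sub-demand $\ps\subseteq\calP$ with $C(\ps)+D(\ps)\le T$. No such object can exist in general. Take an edge $e$ with domain congestion $N:=|\{p\in\calP:e\in p\}|$ satisfying $N\gg T\cdot\poly(\log n)$ (which is unavoidable, since $|\calP|=\poly(n)$ while $T$ can be as small as $O(1)$). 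Whatever $\delta^T$ you fix, the schedule times $\delta^T_p+\pos(p,e)$ lie in a window of length $T\cdot\poly(\log n)$, so by pigeonhole some slot $(e,t^\star)$ receives at least $N/(T\cdot\poly(\log n))$ domain paths. The adversary now selects any $T$ of those paths as $\ps$; this $\ps$ has $C(\ps)\le T$ trivially (it contains only $T$ paths), yet its load at $(e,t^\star)$ is $\min\{T,\,N/(T\cdot\poly(\log n))\}$, which is $\omega(\poly(\log n))$ whenever, say, $T=\sqrt{n}$ and $N=n$. Your suggested patches (per-(path,edge) offsets, recursive decomposition) do not escape this counting argument: any scheme that commits every path in $\calP$ to a specific time in a window of length $\tildeO(T)$ is vulnerable to the same adversary. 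Relatedly, the $k$-wise independence plus conditional probabilities route only yields $1/\poly(n)$ failure per fixed $\ps$, which cannot survive a union bound over the $|\calP|^{|\ps|}$ possible sub-demands.

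The paper sidesteps this impossibility by \emph{not} insisting that a single delay function work. It instead builds a \emph{weak} scheduler (completing only half the jobs) whose failure probability on a fixed $\ps$ is $\exp(-\Omega(|\ps|\log\log n))$ rather than $1/\poly(n)$; this is obtained by a ``bad-pattern'' counting argument that exploits the fact that every failure event (a drop) involves $>\ell=\Theta(\log n/\log\log n)$ jobs simultaneously colliding. That exponential-in-$|\ps|$ bound is exactly what makes the union bound over all $|\calP|^{|\ps|}$ sub-demands go through. They then sample $k=\tildeO(\log|\calP|)$ independent hash functions and run the weak scheduler with each one in sequence; with high probability some $h_i$ is good for \emph{every} sub-demand, and iterating the half-completion $O(\log n)$ times finishes all jobs. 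The $1$-bit state in the greedy-enabled variant is not an ``ahead-of-schedule'' flag as you propose, but an ``eliminated'' flag set when a job is dropped at an overcongested slot, which is what makes the virtual-time analysis robust to adversarial pushes.
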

\begin{proof}
  This is a rephrasing of \Cref{cor:deterministicstatelessschedulingexists} and \Cref{cor:deterministicgreedyschedulingexists}, proven later.
\end{proof}

Here, ``1-bit states'' means that we require a single mutable bit in the header of a packet that the scheduling algorithm can read and write.

\textbf{Robustness to adversarial noise.} In fact, our scheduling algorithm solves a harder problem than the one above.

Suppose the set of paths to schedule $P$ can be decomposed, in a way not revealed to the algorithm, into $P = S \sqcup N$. Here, the path set $S$ represents the \textbf{signal} that can be scheduled with very low completion time of $\OPT := C(S) + D(S)$ time steps. On the other hand, the path set $N$ represents arbitrary \textbf{noise} the adversary injects. Suppose that the whole set of paths $P$ has large $\OPT(P)$, but there exists a large subset $S$ with a much smaller $\OPT(S)$. We can adapt our algorithm to schedule a constant fraction of $S$ in time competitive with $\OPT(S)$. Formally, If the signal-to-noise ratio is bounded from below by some $\beta^{-1} < 1$ (i.e., $|S| / |S \sqcup N| \ge \beta^{-1}$), then we can schedule at least a constant fraction of $S$-paths in completion time competitive with $S$, namely in $\OPT \cdot O(\beta \log n)$ time steps.

\begin{theorem}\label{thm:intro-scheduling-with-noise}
  Let $\calP$ be a domain path set and $G = (V,E)$ a directed graph where $\calP, V, E$ have cardinality at most $\poly(n)$. There exists a deterministic, greedy-enabled scheduling algorithm with 1-bit packet states that takes parameters $\beta > 1, T > 1$ and path set $P \subseteq \calP$ as input, completes in $\tilde{O}(T)$ time steps, and satisfies the following. For all subsets $S \subseteq P$ such that $|S| \geq \frac{1}{\beta}|P|$ and $\OPT(S) \le T$ we have that at least $\frac{1}{4}$-fraction of packets in $S$ are completed.
\end{theorem}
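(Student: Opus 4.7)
I would boost the deterministic greedy-enabled scheduler of \Cref{thm:intro-path-scheduling} into a noise-robust variant by overlaying a random-priority pruning that exploits the hypothesis $|S|/|P| \geq 1/\beta$. Concretely, I would assign each packet $p \in P$ an independent pseudorandom label $\ell_p$ drawn from a small family of hash functions indexed by the polynomial-size domain $\calP$, and interpret $\ell_p$ as both a release time in $[0, \tildeO(\beta T)]$ and as an edge-level priority. The base scheduler of \Cref{thm:intro-path-scheduling} then runs on the full set $P$ with an added per-edge queue cap $\tau = \poly(\log n)$, where the packets with the largest labels are dropped whenever the cap is exceeded. The algorithm terminates at time $\tildeO(\beta T) = \tildeO(T)$ after absorbing $\beta$ into the polylog factor, and the 1-bit packet state records whether a given packet has been released.

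For the analysis I would fix an arbitrary valid signal set $S \subseteq P$ with $\cng(S), \dil(S) \leq T$ and focus on a single signal packet $s \in S$. Because labels are uniform on an interval of width $\tildeO(\beta T) \gg \cng(S) = T$, an LMR-style window argument gives that the expected number of signal competitors of $s$ at any edge within a $\poly(\log n)$-wide window is $o(1)$; hence signal-signal delay is negligible and the total noise-induced delay is bounded via the queue cap by $\tau \cdot \dil(s) = \tildeO(T)$. Consequently $s$ completes by time $\ell_s + \dil(s) + \tildeO(T) = \tildeO(\beta T)$, provided it is not itself dropped. A signal packet is dropped at an edge only if its uniformly random label ranks outside the top-$\tau$ queue contents; by symmetry and the $|S|/|P| \geq 1/\beta$ hypothesis, the per-edge survival probability is bounded below by a positive constant. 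Aggregating over the $\leq T$ edges of $s$'s path and then over all of $S$, Markov's inequality delivers $\geq |S|/4$ completed signal packets in expectation, and the whole scheme is derandomized by enumerating the $\poly(n)$-size hash family extracted from $\calP$ and choosing the instance that satisfies the guarantee uniformly over all admissible $S$.

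The hard part is the signal-survival argument in the presence of adversarially concentrated noise: when the adversary packs $c_e(N) \gg \beta T$ noise paths onto an edge shared by many signal paths, the queue cap forces cascades of drops, and one must show that signal packets are not disproportionately eliminated. The resolution hinges on the labels being uniform and independent of the adversary's choice of $S$, so that from each queue's perspective every packet with a given label is symmetric and the fraction of surviving slots allocated to signal matches the $\geq 1/\beta$ arrival ratio. Propagating this constant survival probability across the successive edges of a single signal trajectory, while honestly handling the correlations introduced by queue dynamics at consecutive hops, is the most delicate piece of the proof; the polynomial size of $\calP$ is simultaneously what enables the final derandomization via a union bound over the hash family.
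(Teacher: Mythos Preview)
Your proposal has two genuine gaps that prevent it from going through, and both stem from missing the key mechanism the paper actually uses.

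First, the per-edge survival argument does not compose along a path. You argue that at each edge a signal packet survives the queue cap with probability bounded below by a constant, and then write that ``propagating this constant survival probability across the successive edges of a single signal trajectory \ldots\ is the most delicate piece of the proof.'' It is not merely delicate; as stated it fails. A constant $c<1$ survival probability per edge, iterated over $\dil(s)\le T$ edges, gives $c^{T}$, which is exponentially small in $T$. The paper avoids this entirely: it never tracks a single packet's survival. Instead it raises the drop threshold from $l$ to $l'=4\beta l$ and uses a \emph{global counting argument}. Any drop event removes at least $l'$ packets total, so there are at most $|P|/l'$ events in which fewer than $l$ signal packets are involved; these ``small'' events together remove at most $(|P|/l')\cdot l=|P|/(4\beta)\le |S|/4$ signal packets. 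The remaining ``large'' events each drop more than $l$ signal packets, and those are bounded by the noiseless bad-pattern analysis applied to $S$ alone. No per-packet survival probability enters.

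Second, your derandomization is not strong enough. Markov's inequality on the expected number of completed signal packets gives only a constant success probability for a \emph{fixed} $S$. But the theorem must hold simultaneously for \emph{every} $(\beta,T)$-good subset $S\subseteq P$ and every admissible $P\subseteq\calP$, of which there are $\exp(\Omega(|S|\log|\calP|))$. Enumerating a $\poly(n)$-size hash family and ``choosing the instance that satisfies the guarantee uniformly over all admissible $S$'' presumes such an instance exists, which is exactly what has to be proved. The paper obtains existence by showing, via the bad-pattern combinatorics, that a random hash function fails on a fixed $S$ with probability $\exp(-\Omega(|S|\ln l))$; chaining $k=\tildeO(\log|\calP|/\log l)$ independent hash functions drives this below $\exp(-|S|\ln|\calP|)$, which beats the union bound. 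Your Markov-based analysis cannot reach these exponentially small tails.
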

\begin{proof}This is a rephrasing of \Cref{thm:noisyschedulingexists}, proven later.\end{proof}

\textbf{Putting path selection and scheduling together.} Combining the above robust scheduling with completion-time sparse semi-oblivious path selection, we obtain our headline result. Namely, we start with the very recent semi-oblivious path selection~\cite{GhaffariHZ21,GBA23}: For every graph $G$, there exists a deterministic collection $\calP$ of $O(\log^2 n)$ candidate paths between each pair of nodes such that, for each demand $d$, the union of these paths support a solution with completion time $\poly(\log n) \cdot \OPT(d)$. More precisely, for any fixed demand (i.e., set of source-sink pairs $d = \{(s_i, t_i)\}_i$) with offline-optimum completion-time $\OPT(d)$, there is a hidden subset $S \subseteq \calP$ (called the signal) that achieves $C(S) + D(S) = \poly(\log n) \cdot \OPT(d)$. The rest of the candidate paths are noise $N := \calP \setminus S$. Furthermore, the sparsity guarantees a polylogarithmic signal-to-noise ratio, hence we can use our robust and deterministic scheduling algorithm restricted to the $O(n^2 \log^2 n)$-sized path set $\mathcal{P}$ (the domain). This schedules at least a constant fraction of the demand in $\poly(\log n) \cdot \OPT(d)$ time steps; we identify the pairs that were satisfied, remove them from the demand, and repeat $O(\log n)$ time to get the result.

\subsection{Proof Overview: Deterministic Scheduling}
We start by discussing the proof of \Cref{thm:intro-path-scheduling}. Our proof is via the probabilistic method \cite{alon2016probabilistic}. Namely, we show that there is a randomised local scheduling algorithm that with high probability is competitive against an adaptive adversary on a domain path set. Note that this implies a deterministic scheduling algorithm as some seed must be good.

For consistency with the remainder of the paper, we will change our terminology from packet scheduling the job-shop scheduling that is more general and common in the literature. In a nutshell, one simply substitutes ``packet'' $\to$ ``job'', ``edge'' $\to$ ``machine'', and ``path'' $\to$ ``sequence of machines''. In this setting (formalized in \Cref{sec:scheduling-model}), there is a set of machines $M$ and we denote the set of all machine-sequences as $\seqset$. There is a domain sequence set $\calS \subseteq \seqset$ with polynomial size, and a set of jobs $\js$, each with a sequence of machines (from $\calS$) that must work on it, in the order given by the sequence, to complete it. The jobs start on the first machines of their sequences, and each machine can work on up to one job per time step, at which point that job moves to the next machine in its sequence. The scheduling has to be done locally and without maintaining state: each time step, each machine must select which job to work on based only on the jobs it holds, the current time step, the number of machines and an upper bound on polynomially bounded quantities, and the shared randomness. The goal is to minimize the \textit{completion time}: the time until all jobs are completed.

\textbf{Idea: exponential success probability.}
For convenience, for this discussion we assume access to an upper bound $L$ on the congestion + dilation of the job set. This assumption can be eliminated by starting with constant $L$ and repeatedly doubling it.

There are simple randomised scheduling algorithms that succeed with high probability, for example, the standard random delay algorithm. This algorithm samples a uniformly random integer $h(\jo) \sim \{ 0, 1, 2, \ldots, L-1 \}$ for each job $\jo$. The job is inactive until time step $h(\jo)$, at which point it activates, and starts moving through its sequence. Up to a few minor technicalities, this approach completes in $\tilde{O}(L)$ time steps \emph{with high probability}, meaning that it succeeds with probability at least $1 - n^{-b}$ where the constant $b > 0$ can be made sufficiently large. However, with high probability guarantees are \emph{woefully insufficient} for our setting: even for polynomial $|\calS| = |M|^{\Theta(1)}$, there are $|\calS|^{\poly |M|}$ possible polynomial-size sequence sets for jobs, which is exponential in $|M|$. We get around the requirement of exponential success probability by instead requiring success probability exponential \emph{in the size of the job set}: We require that the probability the algorithm fails on a job set $\js$ of size $s := |\js|$ is at most $\exp(-s (\ln |\calS| + \Omega(\log |M|)))$. Since there are $|\calS|^{s} = \exp(s \ln |\calS|)$ possible job sequence sets of size $s$, the probability of there existing a failing set of size $s$ is $\exp(-s \cdot \Omega(\log |M|)) < 1$. Therefore, it is sufficient to design an algorithm whose success probability is exponential in the size of the job set.

\textbf{Strategy: weak scheduling.} To achieve the desired probability bounds, we focus on an easier problem, \textit{weak scheduling}, where success is defined as completing \emph{at least half} of the jobs within the time bound. We design a ``weak scheduling algorithm'' where the probability of $k$ jobs failing to be completed is roughly $\exp(-\tilde{\Omega}(k))$, which will be sufficient for our result.

\textbf{Note: tolerating arbitrary starting positions (greedy-enableness).}
Consider a simple idea to increase the probability of success: sample multiple, independent, weak scheduling algorithms and run them back-to-back. If a sampled algorithm fails on a job set with probability $p$, the probability each of $k$ independently sampled algorithms would fail on the job set is $p^k$. 

However, after a failed call to a weak scheduler, jobs have made partial progress. This changes the remaining job set! Even if one of the sampled algorithms would have succeeded on the initial job set, it might not succeed with jobs starting further along their sequences, and less than half of the jobs might be delivered during the whole process. Hence, to increase the success probability with this method, our weak scheduling algorithm's success cannot depend on the initial positions of the jobs --- we need to tolerate arbitrary starting positions.

\textbf{The local weak scheduling subroutine.} We now give details of the randomized weak scheduling subroutine. For simplicity, we assume here that no two jobs have the same sequence of jobs. It will run for $2L$ so-called \textit{large time steps}, each of which will be subdivided into $l = \bigO{\log n}$ \textit{small time steps} (for a total of $2 \cdot L \cdot l$ time steps). The small time steps effectively allow machines to work on $l$ jobs per time step instead of just one, i.e., they enable $l$ jobs to pass through a machine each large time step.

In addition to the time step lengths $L$ and $l$, the subroutine receives as input a \textit{hash function} $h$ that maps sequences in $\calS$ to integers in $\{0, \dots, L - 1\}$.

\textbf{Strategy: virtual times.} In the standard random delay algorithm, delays are handled by having jobs wait at their initial machines for that many time steps, then moving once per $\bigO{\log |M|}$ time steps (i.e., once per large time step) to simulate machines with higher capacity. However, this approach does not tolerate starting positions that are chosen by an adaptive adversary, i.e., in response to the fixed randomness. A more robust approach is needed. 

To tolerate arbitrary starting positions, and further, to achieve a greedy-enabled variant of the scheduling algorithm, we define \textit{virtual times} for jobs, which correspond to the last large time step the job $\jo$ can be at machine $\seq(\jo)_i$ while remaining "on schedule". For a job $j$, we define its virtual time for machine $\seq(\jo)_i$ as $h(\jo) + i$. 

With these definitions in mind, the weak scheduling algorithm is as follows. At machine $m$, at the start of large time step $T$, if there are strictly more than $l$ jobs with virtual time $T$ at $m$'s queue, all of them are \textit{dropped}, removed from consideration for the rest of the subroutine. Otherwise, during the large time step $T$, the machine has time to work on every job with virtual time $T$.

Now by definition, non-dropped jobs with virtual time $T$ at machine $m$ cannot arrive to $m$ during or after large time step $T$: all jobs that fall behind schedule are dropped. As every virtual time is will be bounded by $L + D(\js)$ and $L \geq D(\js)$, after $2L$ large time steps, every job is either completed or dropped. Thus, it suffices to bound the probability that more than half of the jobs are dropped.

The analysis of the success probability of the algorithm on a fixed set of jobs is done solely through virtual times. Virtual times only depend on the hash function and job sequences, not on the positions of jobs or even the greedy adversary in the greedy-enabled model.

\textbf{Idea: bad patterns.} Dropping jobs on machines introduces dependencies between the jobs. For example, a dropped job will not congest later machines in its sequence. This can potentially cause a machine with high congestion in the job sequence set to not drop packets, complicating any analytical arguments. In order to address these dependencies in an elegant way, we use a notion of \textit{bad patterns}.

As discussed earlier, we must show that the probability of failing to complete half of the jobs is \emph{exponentially small} in the size of the job set. We do this by defining \textit{bad patterns}, which describe how the virtual times defined by a hash function might cause a failure. A bad pattern is a collection of disjoint subsets $B_{T, m}$ of (job, position) pairs indexed by (large time step, machine)-pairs, such that the sets $B_{T, m}$ are either empty or have size greater than $l$, and the total size of the subsets $\sum |B_{T, m}|$ is more than $|\js| / 2$. We say a bad pattern \textit{occurs} if for every $(T, m)$, every job in $B_{T, m}$ has virtual time at machine $m$ equal to $T$.

Note that a prerequisite for a set of jobs to be dropped at machine $m$ at large time step $T$ is for all of those jobs to have virtual time $T$ at $m$. Hence, if more than half of the jobs are dropped, some bad pattern is guaranteed to occur. In particular, if $B'_{T, m}$ is the set of jobs dropped at time $T$ on machine $m$, then the bad pattern $B_{T, m} = B'_{T, m}$ occurred. This collection $B$ is indeed a bad pattern, as whenever jobs are dropped at a machine, more than $l$ are dropped, and as more than half of the jobs are dropped, $\sum |B_{T, m}| > |\js| / 2$. Thus, it suffices to bound the probability a bad pattern occurs. In other words, bad patterns in a sense overcount the number of congestions, giving us a manageable upper bound.

The probability of any fixed bad pattern occurring is at most $L^{-\sum_{T, m} |B_{T, m}|} \leq \exp(-1/2 \cdot |\js| \ln L)$, which is exponential in the size of the job set. This is because the probability of any job having a fixed virtual time at a machine is at most $L^{-1}$ since the delays are sampled from $\{0, \ldots, L-1\}$ and $\sum_{T, m} |B_{T, m}| > |\js| / 2$. Through simple combinatorial calculations, we can bound the number of bad patterns to complete the proof. 

\subsection{Proof Overview: Scheduling with Noise}
\label{sec:techoverview-noise}

We move to scheduling with noise (\Cref{thm:intro-scheduling-with-noise}). In this setting, the set of jobs $\js$ has high congestion and dilation, but there is a good subset $\js_{S}$: a large low-congestion low-dilation subset of jobs. The scheduling algorithm receives no information on what packets are part of the good subset but receives a \textit{noise level} parameter $\beta$ satisfying $|\js| \leq \beta |\js_S|$. The goal is to schedule jobs such that at least a constant fraction, namely a fourth, of $\js_S$ is delivered in time proportional to $\beta \cdot [C(\js_{S}) + D(\js_{S})]$, despite the "noise", the part of the job set causing it to have high congestion and dilation.

Since the whole job set has high optimal completion time, it is impossible to expect all jobs or even just all jobs in the good set to be delivered. Therefore, weak scheduling is the best possible result. To see this, consider the case where we take any set of jobs $\js_{S}$, single out a single job $\jo$, and pad out the whole job set with $(\beta - 1)|\js_S|$ copies of the singled out job $j$. It is impossible to differentiate between the copies of $\jo$, and only a small number $\beta \cdot [ C(\js_S) + D(\js_S) ] \ll \beta |\js_{S}|$ can be completed within the allotted time. Hence, even for a randomised algorithm, the probability of completing the singled out job is at most $[ C(\js) + D(\js) ] / |\js|$, i.e., negligible. Further, the factor $\beta$ is necessary: if every job has the same sequence, completing a 1/4-fraction of the good subset requires completing a 1/4-fraction of the entire job set, which requires $\bigO{C(\js) + D(\js)} = \bigO{\beta C(\js) + D(\js)}$ work.

For the noisy setting, our main contribution is to observe that the same weak scheduling subroutine can be modified to function in the setting, simply through small modifications in the step length $l$. Specifically, by enlarging $l' := 4 \beta l$, we show that for a good subset $\js_{S}$, noise level $\beta$ and uniformly random hash function $h$, with failure probability exponentially small in $|\js_{S}|$, the subroutine completes at least a fourth of $\js_{S}$ regardless of the noisy set $\js \supseteq \js_{S}$, as long as $|\js| \leq \beta |\js_{S}|$. This is done through analysis with bad patterns as before, but an additional idea is required.

\textbf{Idea: large and small events.} Directly applying the bad pattern analysis to scheduling with noise fails, as we are not guaranteed that a large number of jobs from the good subset are dropped every time jobs are dropped (a majority of the jobs dropped might be from the noise). We get around this by splitting events where jobs from the good subset drop into two sets: large events, those where at least $\frac{l'}{4\beta}$ jobs \emph{from the good subset} are dropped, and small events, those where less than $\frac{l'}{4\beta}$ are.

In each small event, while more than $l'$ jobs are dropped, out of those at most $\frac{l'}{4\beta}$ are from the good subset. At most $|\js| / l'$ events of this kind can happen (as at that point every job has been exhausted), thus at most $\frac{|\js|}{l'} \cdot \frac{l'}{4\beta} = \frac{|\js|}{4\beta} \leq \frac{|\js_S|}{4}$ good jobs are dropped in small events.

To bound the number of jobs dropped in large events, we again use bad patterns, this time with sets $B_{T, m} \subseteq \js_{S}$ with the requirement that $|B_{T, m}| > \frac{l'}{4\beta} = l$ instead of just $l'$, and $\sum |B_{T, m}| > \frac{|\js_{S}|}{4}$ instead of $\frac{|\js|}{2}$. The analysis reverts to the case one without noise! The small and large events together account for at most $\frac{|\js_S|}{4} + \frac{|\js_S|}{2} = \frac{3|\js_S|}{4}$ drops of jobs from the good subset.

\subsection{Combination with Semi-Oblivious Path Selection}
Finally, we combine the noisy scheduling with sparse semi-oblivious path sets to achieve \Cref{thm:intro-routing}. An $\alpha$-sparse $\gamma$-competitive semi-oblivious path set is a collection of up to $\alpha$ paths for every $(s, t)$-pair $\{R_1(s,t), R_2(s, t), \ldots, R_\alpha(s, t)\}$. The path set has the property that for any demand $d = \{(s_i, t_i)\}_{i=1}^k$ (\Cref{def:demand}), there exists a subset $P' \subseteq P$ of the paths $P := \bigcup_{i=1}^k \bigcup_{j=1}^\alpha R_j(s_i, t_i)$ sourced from the semi-oblivious path set. This $P'$ satisfies the demand $d$ and has congestion + dilation at most at most $\gamma$ times the offline optimum. Note that the graph $G$ is fixed and known to the nodes, hence we can use the $\poly(\log n)$-sparse $\poly(\log n)$-competitive semi-oblivious path set developed in the recent paper~\cite{GBA23}.

Our main contribution is conceptual: we note that our scheduling with noise nicely combines with semi-oblivious path set to yield local routing algorithms. Given a demand $d$, our strategy will be to repeatedly route at least half of the demand. To do this, we take the path set $P$ with all up to $\alpha$ $(s, t)$-paths in the semi-oblivious path set for every $(s, t)$ with nonzero demand and use noisy scheduling on this path set. By the competitiveness of the semi-oblivious path set, there exists a low-congestion low-dilation path set $S$ (the signal), and as it contains one $(s, t)$-path for every $(s, t)$ with nonzero demand, we have $|P| \leq \alpha |S|$. As the noise level is polylogarithmic and the congestion and dilation of the signal have polylogarithmic overhead to the offline optimal routing of the demand (as the semi-oblivious routing is $\poly(\log n)$-competitive), half of the packets are delivered with only polylogarithmic overhead to the offline optimum. Repeating this a logarithmic number of times routes the whole demand.

After the repetitions, we achieve strong routing (i.e., we satisfy the entire demand). This is in spite of scheduling with noise being unable to provide such guarantees. The reason behind this contrast is that each time a single packet is delivered between $s$ and $t$, all the other $\alpha - 1$ paths between $s$ and $t$ are removed from $P$.

\section{Related Work}\label{sec:related-work}

\textbf{Local packet scheduling.} Given a set of paths with congestion $C = C(P)$ and dilation $D = D(P)$, the simplest deterministic solution that forwards arbitrary packets completes in $C\cdot D$ time steps. Moreover, no deterministic local algorithm for arbitrary networks and arbitrary paths has broken the barrier of requiring $\Omega(CD / \poly(\log n))$ time steps prior to this paper. This lack of progress has spurred research into several subcases (see below), as well as lower bounds. For example, \cite{LeightonMR94} prove that any ``non-predictive'' deterministic strategy requires at least $\Omega( CD / \log C )$ time steps. Here, non-predictive means that the order of forwards across an edge $e$ does not depend on the path's route beyond $e$. Cidon, Kutten, Mansour, and Peleg~\cite{cidon1995greedy}
showed that many natural deterministic greedy strategies may require $\Omega(D \sqrt{k} + k)$ time steps, where $k$ is the number of packets.

On the upper bound for local scheduling, a simple randomized delay gives a schedule with expected completion time $O( (C + D) \log n )$ for arbitrary graphs and paths. Leighton, Maggs, Ranade, and Rao gave a randomized local $O(C + L + \log n)$-time-step scheduling for bounded-degree $L$-level network~\cite{leighton1994randomized}. A $L$-level network is one where edges are directed from level-$i$ nodes to level-$(i+1)$ nodes and the levels are within $[0, L]$. For arbitrary graphs, Ostrovsky and Rabani~\cite{ostrovsky1997universal} designed a $O(C + D + \log^{1+\eps} n)$-time local randomized algorithm.

Several works considered the case when the paths are the shortest paths in the network. Of course, this can significantly increase the congestion $C$ compared to the optimal (in terms of completion-time) paths. Mansour and Patt-Shamir~\cite{mansour1991greedy} show that (local) greedy routing on shortest paths has completion time $D+k-1$. This is often suboptimal as, in general, $C \ll k$. Meyer auf der Heide and Vocking~\cite{auf1995packet} devised a simple local randomized algorithm that routes all packets to their destinations in $O(C + D + \log kD)$ steps, provided that the packets follow shortest paths. See the nice survey of Maggs~\cite{maggs2006survey} for further references.

\textbf{Offline scheduling and path selection.} An offline algorithm has access to both the complete description of the network, as well as the entire input (the demand for routing, or paths for scheduling). Offline packet scheduling in the offline setting originated from the job scheduling literature as the graph edges can be interpreted as machines and paths as jobs that need access to different machinexs in some order. In a seminal work, Leighton, Maggs, and Rao~\cite{LeightonMR94} showed an offline algorithm for packet scheduling which, given a path set $P$, gives a scheduling algorithm with completion time $O( C(P) + D(P) )$, matching the $\Omega(C(P) + D(P))$ lower bound up to constant factors. Haeupler, Hershkowitz, and Wajc~\cite{haeupler2019near} consider rooted trees instead of paths and show that there exist schedules in $O(C + D + \log^2 n)$ time steps, as well as a lower bound saying the result cannot be improved to $O(C + D) + o(\log n)$. Ghaffari~\cite{ghaffari2015near} considered DAGs instead of paths and showed there exist examples with completion time $\Omega(C + D\log n)$ (up to $\log \log n$ factors).

For the path selection problem, Srinivasan and Teo~\cite{srinivasan1997constant} design a constant-approximation and poly-time path selection algorithm. In other words, given a demand $d$, they produce paths $P$ with completion time $C(P) + D(P) = O( \OPT(d) )$. Bertsimas and Gamarnik~\cite{bertsimas1999asymptotically} output paths with completion time $(1 + o(1))\OPT(d)$ for sufficiently large $\OPT(d)$.

%\goran{mention lenzen and patt-shamir}

\textbf{Distributed routing on fixed networks.} Hypercubes pose an attractive theoretical topology due to their simplicity and expanderness. Early parallel systems even interconnected their processors in (variants of) hypercube topologies~\cite{cypher1994hypercube}. A large amount of effort was invested into developing algorithms specialized for hypercubes. Valiant~\cite{valiant1982scheme} proposed a simple, randomized, and optimal $O(\log n)$-competitive routing algorithm for hypercubes. A deterministic variant was given by Batcher~\cite{batcher1968sorting} who designed the bitonic sorting network, which immediately gives a $O(\log^2 n)$-competitive routing algorithm in hypercubes. This was improved by the seminal result of Ajtai, Komlos, and Szemeredy~\cite{ajtai19830} who proved the existence of a $O(\log n)$-depth sorting network, proving the existence of optimal deterministic routing algorithms. Rabin~\cite{rabin1989efficient} gave fault-tolerant and constant-queue-size routing algorithms for the hypercube. For other networks, randomized oblivious path-selection algorithms in terms of completion time were developed for many specialized networks. These include meshes~\cite{busch2008optimal}, hole-free networks~\cite{busch2010optimal}, geometric networks~\cite{busch2005oblivious}, particular extensions of hypercubes called HyperX networks~\cite{ahn2009hyperx}, deterministic algorithms in expanders~\cite{chang2020deterministic}, etc.

\textbf{A comparison with~\cite{GBA23}} This paper can be seen as a continuation of the line of work on sparse semi-oblivious routing strategies started by \cite{GBA23}. While \cite{GBA23} proves the existence of a competitive and sparse semi-oblivious routings, its main drawback was the lack of an appropriate scheduling algorithm adapted to the setting that does not require extensive global coordination. Moreover, there was little reason that any local scheduling exists --- such scheduling must be able to handle adversarial paths (inserted by the semi-oblivious selection), making the problem seem much harder than the standard setting (for which no deterministic and local algorithm exists!). In spite of these challenges, this paper offers a satisfactory conclusion to this line of work by showing that end-to-end routing via the (sparse) semi-oblivious path selection is viable and even offers theoretical advantages over other routing strategies. From a technical perspective, this paper uses and expands upon the techniques used by \cite{GBA23}. Specifically, it shows that chaining $\poly(\log n)$ random delays and allowing for a constant fraction of failures will make a demand fail with probability that is exponentially small in the size of the demand. Furthermore, this paper observes that growing the overcongestion threshold by a multiplicative factor of $O(\beta)$ makes the algorithm robust to sets with adversarial noise (with signal-to-noise ratio at least $\beta^{-1}$). With small-but-important adaptations, the developed techniques of bad patterns can be readily applied to both of these analyses. Finally, this paper shows that the contributions of the two papers can be chained together in a synergistic way and applied to important open problems from distributed computing.

\section{Deterministic Local Scheduling}\label{sec:detlocsec}

In this section we contribute our scheduling algorithm. We first formally introduce the model in \Cref{sec:scheduling-model} in the job-shop terminology. Then we present our formal results in \Cref{sec:scheduling-formal-results}. We present the main building block of our scheduling (called ``weak scheduling'') in \Cref{sec:weakdetscheduling}. We then present our full scheduling algorithm in \Cref{sec:localrulescheduling}, and finally we extend the algorithms to deal with adversarial noise in \Cref{sec:noisydetsec}.

\subsection{Scheduling Model}\label{sec:scheduling-model}

Recall the packet scheduling problem, discussed in the introduction:

\textbf{(Classic) Packet scheduling.} There is a graph and a set of packets. Each packet has a fixed routing path that it must follow, and starts at the first edge of its path. Each time step, each edge can select up to one packet, which moves to the next edge on its path.

Instead of using the packet scheduling model, we present our results in the \textbf{job-shop scheduling model with unit-sized jobs}.

\begin{definition}[Job-Shop Scheduling Model With Unit-Sized Jobs]
  There is a set of machines $M$ and a set of jobs $\js$. Each job has a fixed sequence of machines that must work on it in order, and starts at the first machine of its sequence. Each time step, each machine can work on up to one job, which then moves to the next machine in its sequence. Note that in this model, we have \textbf{unit-sized jobs}, meaning every machine takes one unit of time to finish working on a given job.
\end{definition}

A job is completed when all of the machines on its sequence have worked on it in order. The goal is to minimise the \textbf{completion time}, i.e. time until every job is completed, of our algorithm $A$, denoted $\compl(A,\js)$.

Recall that this model generalizes packet scheduling, as it can emulate packet scheduling by setting edges to machines, packets to jobs, and paths to job sequences. Further, as the paths in packet scheduling are routing paths, they can often be chosen to be simple, while in the job-shop scheduling model, sequences of machines can contain the same machine multiple times to encode that a machine may need to complete multiple tasks on a particular job.

We also set up some notation for this model. Let $\seqset$ denote the set of all machine sequences. Let $\seqn(\jo) \in \seqset$ denote the sequence of machines for a job $\jo \in \js$. Additionally, we assume each job has a unique identifier $\ind(j) \in \mathbb{N}$ that is polynomial bounded in $|M|$, to allow ourselves to distinguish between jobs with the same sequence. $\pos(\jo)_t \in \{0, \dots, \len(\seqn(\jo))\}$ will denote the next machine to work on $j$ at time $t$. Additionally, jobs have a state $\state(j)_t \in \mathbb{N}$, which is initially $0$ and can be modified by the scheduling algorithm. Each machine $m \in M$ has a \textbf{queue}, $\que(m)$, containing the jobs at that machine: $\que(m)_t := \{\jo \in \js : \seqn(\jo)_{\pos(\jo)_t} = m\}$ at time step $t$.

For a job set $\js$, we denote the \textbf{congestion}, maximum number of times any machine appears in the job sequences in total, of the job set by $\cng(\js) := \max_{m \in M} \sum_{\jo \in \js} |\{i : \seq(\jo)_i = m\}|$, and the \textbf{dilation}, length of the longest job sequence by $\dil(\js) := \max_{\jo \in \js} \len(\seq(\jo))$. Additionally, all our results will assume that the machine sequences come from a \textbf{domain sequence set} $\calS \subseteq M^*$.

\textbf{Locality.} An algorithm is \textbf{local} if each machine can select the job to work on and new states for the jobs in its queue based \emph{only} on the current time step and the history of contents of the machine's queue. The full job set to be scheduled or the contents of any other machines' queues are unknown to and unusable by machines for making
decisions. In this paper, we only consider local scheduling algorithms, and further, only consider \textbf{machine-stateless} algorithms, where machines must make decision only based on the the current time step and the current contents of the machine's queue. Notably they cannot access the history of their queue.

\textbf{Statelessness.} An algorithm is \textbf{stateless} if it does not access or change jobs' states. An algorithm is said to have \textbf{$b$-bit job states} if jobs have internal states from $\{0, \dots, 2^{b} - 1\}$. In this paper, we only consider algorithms that are either (job-)stateless or have 1-bit job state.

\textbf{Greedy-Enabledness.} We say an algorithm is \textbf{greedy-enabled} if it can handle arbitrary opportunistic job advancements. Formally, we extend the the job-shop scheduling setting by introducing an adversary. Each time step, the adversary first learns the position and state of every job. Then, the adversary selects any number of jobs, moving them further along their sequences, incrementing $\pos(\jo)$ by an arbitrary nonnegative amount $\mathrm{push}(\jo)$ for each of them. Then, the scheduling algorithm performs the time step with the updated job positions as in the job-shop scheduling model. The adversary is adaptive in case of randomized algorithms. We define the completion time of the scheduling algorithm on the job set in the model as the maximum completion time over adversaries and denote it by $\greedycompl(A, \js)$. 

\textbf{Noisy Scheduling.} We will also be interested with the model with noise, which naturally arises when one tries to combine our scheduling algorithm with semi-oblivious path selection algorithms. For a set of jobs $\js$, call a subset of jobs $\js_{S}$ $(\beta, T)$-good if it is both large (it satisfies $|\js_{S}| \geq \frac{1}{\beta} |\js|$) and has low congestion and dilation (it satisfies $\cng(\js_{S}) + \dil(\js_{S}) \leq T$). We extend our scheduling algorithms to show that if there exists a $(\beta, T)$-good subset of the job set $\js$, we can complete at least a $\bigO{\beta^{-1}}$-fraction of the entire job set in roughly $\beta \cdot T$ time. In fact, we show a stronger statement: roughly $\beta \cdot T$ time suffices to complete a constant fraction of \emph{every} $(\beta, T)$-good subset.

\textbf{Discussion: Domain Sets.} Our scheduling algorithms require the existence of a \textbf{domain sequence set} $\calS \subseteq \seqset$ that the jobs' sequences are pulled from: we are guaranteed that the job set $\js$ is \textbf{$\calS$-supported}, meaning $\seq(\jo) \in \calS$ for all $\jo \in \js$. A factor of $\log |\calS|$ appears in the completion time of our scheduling algorithms; this term is logarithmic for polynomial $|\calS|$. For our deterministic scheduling algorithms, $\calS$ must be known by every machine. However, we in addition give a scheduling algorithm with shared randomness that does not require access to $\calS$ (only $|\calS|$) and works even against an adaptive adversary, that can select the ($\calS$-supported) job set based on the random bits of the algorithm.

\subsection{Formal Results}\label{sec:scheduling-formal-results}

In this section, we give the formal version of our scheduling results that we prove in the rest of \Cref{sec:detlocsec}.

%In this section, we prove two variations of the shared-randomness adaptive-adversary scheduling result, \Cref{thm:statelessschedulingexists} for a stateless algorithm and \Cref{thm:greedyschedulingexists} for a greedy-enabled $1$-bit job state algorithm.

% \begin{restatable}{theorem}{statelessschedulingexists}\label{thm:statelessschedulingexists}
% For every graph $G$ and domain set $\calP$, \Cref{alg:localrulescheduler} with access to $\calP$ is a stateless algorithm for local rule scheduling that schedules any path set $P \subseteq \calP$ with completion time $[C(P) + D(P)] \cdot (\log |\calP|) \cdot \poly(\log n)$.
% \end{restatable}

% \begin{restatable}{theorem}{greedyschedulingexists}\label{thm:greedyschedulingexists}
% For every graph $G$ and domain set $\calP$, \Cref{alg:greedylocalrulescheduler} with access to $\calP$ is a greedy-enabled algorithm for the local rule scheduling problem with $1$-bit packet states that schedules any path set $P \subseteq \calP$ with completion time $[C(P) + D(P)] \cdot (\log |\calP|) \cdot \poly(\log n)$.
% \end{restatable}

We first present the full scheduling result in the randomized setting that is local and stateless.
\begin{restatable}{lemma}{statelessschedulingexists}\label{thm:statelessschedulingexists}
  Let $M$ be a set of machines and $\calS \subseteq \seqset$ be a \textit{domain set} of size at most $|\calS| \le |M|^c, c = O(1)$. \Cref{alg:localrulescheduler} (algorithm $A$) is a randomised algorithm with shared randomness for the job-shop scheduling problem that has the following properties:

\begin{itemize}
    \item The algorithm is local and stateless. That is, each machine decides which job in its queue to work on based only on the jobs in its queue, the current time step, $|M|$, $c$, and the shared random bits.
    \item The algorithm is competitive against an adaptive adversary: with high probability over the shared random string, $A$ is competitive for all $\calS$-supported job sets: for all job sets $\js$ with polynomially bounded size, congestion and dilation that satisfy $\seqn(\jo) \in \calS$ for $\jo \in \js$, $A$ has completion time $\compl(A, \js) = \tildeO(C(\js) + D(\js))$.
\end{itemize}
\end{restatable}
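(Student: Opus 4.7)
The plan is to build the algorithm as a nested loop of independent \emph{weak scheduling} subroutines, each of which succeeds (in the sense of completing at least half the jobs) with probability $1 - \exp(-|\js| \cdot \Omega(\log|M|))$, and then boost success to a probability large enough to union-bound simultaneously over every $\calS$-supported job set. The weak scheduling subroutine uses parameters $L$ (a guess for $C(\js) + D(\js)$) and $l = \Theta(\log |M|)$, plus a uniformly random hash function $h : \calS \to \{0,\dots,L-1\}$ drawn from the shared randomness. A job $\jo$ is assigned \emph{virtual time} $h(\seq(\jo)) + i$ at its $i$-th machine $\seq(\jo)_i$. Each ``large'' time step $T$ consists of $l$ consecutive (small) time steps: at the start of large step $T$, a machine $m$ looks at the jobs in its queue whose virtual time at $m$ is exactly $T$; if there are more than $l$ such jobs they are \emph{dropped}, otherwise all of them are processed during the $l$ small steps. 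The subroutine runs for $2L$ large time steps. Since a job that is not dropped always reaches machine $\seq(\jo)_i$ no later than its virtual time $h(\seq(\jo)) + i \le 2L$, after $2L$ large steps each job is either completed or dropped.

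The next step is to bound the probability that the weak subroutine drops more than half the jobs. Following the overview I would use \emph{bad patterns}: a bad pattern is a family $\{B_{T,m}\}$ of pairwise disjoint subsets of $\js$, indexed by (large step, machine) pairs, with every nonempty $B_{T,m}$ of size $> l$ and $\sum_{T,m} |B_{T,m}| > |\js|/2$, and a bad pattern \emph{occurs} under $h$ if every $\jo \in B_{T,m}$ has virtual time $T$ at $m$. Whenever the subroutine drops more than half the jobs the family of actually-dropped sets constitutes a bad pattern that occurred, so it suffices to union-bound over bad patterns. Crucially, the virtual time of $\jo$ at $m$ depends only on $h(\seq(\jo))$ (it is $T$ iff $h(\seq(\jo)) = T - i$ for the unique relevant $i$), so for any fixed bad pattern the events ``$\jo \in B_{T,m}$ has virtual time $T$ at $m$'' are determined by independent uniform values $h(\seq(\jo))$ on distinct jobs, giving probability at most $L^{-\sum_{T,m}|B_{T,m}|} \le L^{-|\js|/2}$. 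A straightforward combinatorial count of bad patterns (for each job either a choice of $(T,m)$ location or ``not included'', with the constraint that nonempty groups have size $>l$) shows their number is at most $\exp\!\bigl(|\js|(\log(L \cdot |M|) - \Omega(\log l))\bigr)$, so choosing $l$ large enough as a multiple of $\log|M|$ makes the failure probability at most $\exp(-|\js| \cdot c' \log|M|)$ for any desired constant $c'$.

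The weak subroutine's analysis depends only on virtual times, not on job positions, so it behaves identically when jobs are started at arbitrary positions along their sequences; this is exactly what allows chaining. Given a guess $L \ge C(\js) + D(\js)$, I would run $k = \Theta(\log|M|)$ independently-seeded copies of the weak subroutine back-to-back. On a fixed job set $\js$ each copy independently succeeds with probability $\ge 1/2$ (say), so with probability $1 - 2^{-k}$ at least one copy halves the surviving jobs; repeating this outer loop $O(\log|M|)$ times drives the number of uncompleted jobs below $1$. To remove knowledge of $L$ I would wrap the whole construction in a standard doubling: try $L = 1, 2, 4, \dots$, each consuming time $\tildeO(L)$; the total is still $\tildeO(C(\js) + D(\js))$. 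Putting it together, the full algorithm is local and stateless: each machine only needs $\calS$ (to evaluate $h$), the current time step, $|M|$ and $c$, and the shared randomness, which determines the sequence of hash functions.

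The main obstacle is the \emph{adaptive} adversary: the shared random bits are fixed first, and then an adversary knowing them can choose any polynomial-size $\calS$-supported job set. To handle this I would use the exponentially-small-in-$|\js|$ success bound to union-bound over all such sets. The number of $\calS$-supported job sets of size $s$ is at most $\binom{|\calS| \cdot \poly(|M|)}{s} \le \exp(s \cdot O(\log|M|))$ (using $|\calS| \le |M|^c$ and polynomial job identifiers), while the per-set failure probability after chaining is $\exp(-s \cdot c' \log|M|)$ for a $c'$ we can take as large as needed by increasing $l$ and $k$. Summing over $s$ from $1$ to $\poly(|M|)$ yields overall failure probability $|M|^{-\Omega(1)}$, so a single draw of shared randomness is simultaneously good for every adversarial choice. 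This adversary-oblivious union bound is exactly why weakening to ``high probability on a fixed input'' is insufficient and why the exponential-in-$|\js|$ success probability from the bad-pattern analysis is the central technical ingredient.
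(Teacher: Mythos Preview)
Your proposal is essentially the paper's approach: weak scheduling via random delays and virtual times, bad-pattern union bound giving failure probability exponentially small in $|\js|$, chaining $k$ independent hash functions to make that probability small enough to union-bound over all $\calS$-supported job sets, an outer $O(\log|M|)$ loop to repeatedly halve, and geometric doubling on $L$.

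Two small points to tighten. First, your hash function is $h:\calS\to\{0,\dots,L-1\}$, keyed only on the sequence; the paper keys on $(\text{sequence},\text{identifier})$ pairs. This matters for the bad-pattern probability bound $L^{-\sum|B_{T,m}|}$: the events ``$\jo$ has virtual time $T$ at $m$'' must be independent across distinct jobs, which fails if two jobs share a sequence and hence share $h(\seq(\jo))$. Second, the sentence ``each copy independently succeeds with probability $\ge 1/2$ \dots with probability $1-2^{-k}$'' is not the argument you actually want (and would not survive the adaptive union bound); the correct reasoning is the one in your final paragraph, namely that a single hash function is bad for a fixed size-$s$ set with probability $\exp(-\Omega(s\log l))$, so all $k$ are bad with probability $\exp(-\Omega(sk\log l))$, and with $k=\Theta(\log|M|/\log l)$ this beats the $\exp(O(s\log|M|))$ count of size-$s$ $\calS$-supported sets. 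The paper packages this as: the \emph{set} of $k$ hash functions is ``good for $\calS$'' (some $h_i$ works on \emph{every} $\calS$-supported job set), which then makes the outer halving loop trivially succeed on each remaining subset.
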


We also present a version of the above theorem for the greedy-enabled setting with 1-bit states.
\begin{restatable}{lemma}{greedyschedulingexists}\label{thm:greedyschedulingexists}
Let $M$ be a set of machines and $\calS \subseteq \seqset$ be a \textit{domain set} of size at most $|\calS| \le |M|^c, c = O(1)$. \Cref{alg:greedylocalrulescheduler} (algorithm $A$) is a randomised algorithm with shared randomness for the greedy-enabled job-shop scheduling problem that has the following properties:

\begin{itemize}
    \item The algorithm is local and uses 1-bit job states. That is, each machine decides which job in its queue to work on and new states for jobs in its queue based only on the jobs in its queue, the current time step, $|M|$, $c$, and the shared random bits, and the algorithm only ever changes jobs' states to $0$ or $1$.
    \item The algorithm is competitive against an adaptive adversary: with high probability over the shared random string, $A$ is competitive for all $\calS$-supported job sets: for all job sets $\js$ with polynomially bounded size, congestion and dilation that satisfy $\seqn(\jo) \in \calS$ for $\jo \in \js$, $A$ has completion time $\greedycompl(A, \js) = \tildeO(C(\js) + D(\js))$.
\end{itemize}
\end{restatable}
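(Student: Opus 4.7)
The proof of \Cref{thm:greedyschedulingexists} will closely parallel the approach sketched for \Cref{thm:statelessschedulingexists}, adapted to tolerate an adaptive greedy adversary by means of a single job-state bit. First, I will introduce a randomized \emph{weak scheduling} subroutine $\mathrm{WEAK}(L, h)$ parameterized by a length budget $L$ and a hash $h : \calS \to \{0, 1, \ldots, L-1\}$. The subroutine partitions time into $2L$ \emph{large time steps}, each consisting of $l = \Theta(\log |M|)$ small steps. Each job $\jo$ is assigned, at position $i$ of its sequence, a \emph{virtual time} $h(\seq(\jo)) + i$; crucially this depends only on $\seq(\jo)$ and $i$, never on $\pos(\jo)$ or on any greedy pushes. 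At the start of large time step $T$, machine $m$ inspects its queue: if strictly more than $l$ jobs with state $0$ have virtual time $T$ at $m$, all of them are \emph{dropped} by flipping the 1-bit state to $1$; otherwise all such jobs are served during the $l$ small steps. The state bit is the only mechanism needed beyond the queue: dropped jobs are henceforth ignored by every machine regardless of where the adversary pushes them.

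Second, \Cref{alg:greedylocalrulescheduler} will guess $L \in \{1, 2, 4, \ldots\}$ by doubling, and for each guess runs $\Theta(\log |M|)$ independent copies of $\mathrm{WEAK}(L, \cdot)$ back-to-back with freshly drawn randomness from the shared string, resetting all state bits to $0$ between runs. I will argue that, once $L \ge C(\js) + D(\js)$, a single run of $\mathrm{WEAK}$ leaves at least half the currently-undelivered jobs completed except with probability $\exp(-|\js'| \cdot \Omega(\log L))$, where $\js'$ is the set of jobs not yet delivered. Chaining $\Theta(\log |M|)$ rounds reduces the undelivered count to zero with probability exponentially small in $|\js|$, and a union bound over the at most $|\calS|^{|\js|} \le |M|^{c|\js|}$ possible $\calS$-supported job sets of each size produces the ``high probability against an adaptive adversary'' guarantee once the constant in $l$ is large enough. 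The overall completion time is $O(L \cdot l \cdot \log |M|) = \tildeO(C(\js) + D(\js))$.

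The main analytic step is the weak-scheduler guarantee, which I will prove via the \emph{bad pattern} framework used for the stateless version. A bad pattern is a family $\{B_{T,m}\}$ of pairwise disjoint subsets of (job, position) pairs, each either empty or of size strictly greater than $l$, with $\sum_{T,m} |B_{T,m}| > |\js|/2$; the pattern occurs if every pair $(\jo, i) \in B_{T,m}$ satisfies $h(\seq(\jo)) + i = T$. If more than half the jobs are dropped in a given run, some bad pattern is realized by taking $B_{T,m}$ to be the set of dropped jobs at $m$ during large step $T$. Because each virtual time is uniform in $\{0,\ldots,L-1\}$ and independent across distinct sequences, a fixed pattern occurs with probability at most $L^{-\sum |B_{T,m}|} \le L^{-|\js|/2}$; a standard combinatorial count of the number of bad patterns then gives the desired exponential tail.

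The main obstacle is verifying that the greedy adversary does not subvert the virtual-time argument. The crucial observation is that the virtual time of $\jo$ at the $i$-th machine of its sequence is a function of $(\seq(\jo), i)$ only, so adversarial pushes can only cause a job to arrive at a machine \emph{before} its virtual time, never after. A job that arrives early simply waits in the queue until its virtual time is reached and is then subject to exactly the same drop-or-serve rule as in the adversary-free setting. I will formalize this with an induction on large time steps showing that whenever the bad-pattern analysis certifies that $\jo$ is not dropped, $\jo$ is in fact served at or before its virtual time at every machine on its sequence, regardless of the adversary's choices. This closes the loop between the combinatorial probabilistic argument and the operational semantics of the greedy-enabled model, and the 1-bit state is exactly what is needed to implement the ``never reconsider a dropped job'' rule locally without any further machine-side memory.
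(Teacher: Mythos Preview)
Your weak-scheduling subroutine, the virtual-time mechanism, the 1-bit drop state, and the bad-pattern analysis are all essentially what the paper does, and your observation that virtual times depend only on $(\seq(\jo),i)$---hence are immune to adversarial pushes---is exactly the right reason the greedy-enabled variant goes through. The gap is in how you compose weak scheduling into full scheduling.

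You propose running $\Theta(\log|M|)$ copies of $\mathrm{WEAK}$ with fresh hashes and claim that ``chaining $\Theta(\log|M|)$ rounds reduces the undelivered count to zero with probability exponentially small in $|\js|$,'' then union-bound over the $|\calS|^{|\js|}$ initial job sets. This does not work as stated. The per-round failure bound you have is $\exp(-|\js'|\cdot\Omega(\ln l))$ (not $\Omega(\log L)$; after multiplying by the bad-pattern count the surviving exponent is $\ln l$, cf.\ \Cref{lem:weakroutingprob}), where $\js'$ is the \emph{current} remaining set. As rounds progress $|\js'|$ shrinks, so the tail for the last few rounds is only $l^{-\Theta(1)}$, which is nowhere near small enough to beat $|\calS|^{|\js|}$; increasing the constant in $l$ cannot fix this because you would need $\ln l \gtrsim \ln|\calS|$, i.e.\ $l$ polynomial in $|M|$. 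Worse, the adversary is adaptive and sees all the shared randomness, so the remaining set $\js'$ at round $i$ may depend on $h_i$ itself, and the independence you would need for a product bound is not available. The paper sidesteps both issues by changing the quantifier order: it samples a fixed set of $k=\Theta(\log|M|/\log\log|M|)$ hash functions per scale and shows (\Cref{lem:iterategoodprob}) that with high probability this set is good \emph{for the entire domain $\calS$}, meaning that for \emph{every} $\calS$-supported $\js'$ at least one $h_i$ is good. This event is independent of the instance and the adversary. Conditioned on it, one full pass over the $k$ hashes is guaranteed to halve the remaining set (for any starting positions and any greedy adversary, via \Cref{lem:badpatterncertifierlemma}), so $c\log|M|+1$ passes finish the job deterministically. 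You should replace your chaining-plus-union-bound over initial $\js$ with this domain-level goodness argument; once you do, the rest of your outline goes through.
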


Results for randomised algorithms against an adaptive adversary imply the existence of deterministic algorithms as the following trivial corollaries:%\alert{TODO: explain why these are trivial corollaries: above algorithm works with high probability, thus there exists a random string for which it works. Hardcode that random string to the algorithm, it's now a deterministic algorithm that always succeeds.}

\begin{restatable}{corollary}{deterministicstatelessschedulingexists}\label{cor:deterministicstatelessschedulingexists}
Let $M$ be a set of machines and $\calS \subseteq \seqset$ be a polynomial-size \textit{domain set} of polynomial-length sequences of machines. Then, there exists a deterministic, local, and stateless job-shop scheduling algorithm $A$ that, for all $\calS$-supported job sets $\js$ with polynomial size has completion time $\compl(A, \js) = \tildeO(C(\js) + D(\js))$.
\end{restatable}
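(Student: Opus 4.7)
The proof is a textbook application of the probabilistic method: the corollary follows from \Cref{thm:statelessschedulingexists} by fixing a good sample of the shared random string. First, I would check that the hypothesis of the corollary is a special case of the hypothesis of \Cref{thm:statelessschedulingexists}. Because $\calS$ has polynomial size and consists of polynomial-length sequences, every polynomial-size $\calS$-supported job set $\js$ automatically has polynomially bounded congestion $C(\js)$ and dilation $D(\js)$, so the family of job sets considered in the corollary is contained in the family handled by the lemma.

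Next I would invoke \Cref{thm:statelessschedulingexists}, which states that the randomized algorithm $A$, drawn from a shared random string $r$, is competitive with completion time $\tildeO(C(\js) + D(\js))$ \emph{simultaneously for all} $\calS$-supported polynomial-size job sets $\js$, with probability at least $1 - n^{-\Omega(1)}$ over $r$. Since this probability is strictly positive, there exists at least one fixed string $r^{\star}$ that makes $A$ competitive on every such $\js$. Hard-coding $r^{\star}$ into the local rules (as an $|M|$-indexed constant table available to each machine) turns $A$ into a deterministic algorithm $A^{\star}$; because the original algorithm only consults its queue, the time step, $|M|$, $c$, and the random bits, hard-coding the random bits preserves locality and statelessness. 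The competitive guarantee is inherited directly from $r^{\star}$ being a ``good'' seed.

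There is essentially no further obstacle at this step: the genuinely non-trivial content -- getting a union-bound-friendly failure probability that is exponentially small in $|\js|$, so that the good-seed event has positive probability even though the class of $\calS$-supported job sets has size $|\calS|^{\poly(|M|)}$ and is therefore exponential in $|M|$ -- has already been absorbed into \Cref{thm:statelessschedulingexists} via the bad-pattern analysis sketched in \Cref{sec:techoverview}. Given that lemma, the derandomization reduces to the one-line probabilistic-method argument above, with no additional calculations required.
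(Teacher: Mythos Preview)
Your proposal is correct and matches the paper's approach exactly: the paper states that this corollary follows ``trivially'' from \Cref{thm:statelessschedulingexists} by the probabilistic-method argument of fixing a good seed, which is precisely what you do. Your additional observation that polynomial-size $\calS$-supported job sets automatically have polynomially bounded congestion and dilation (so the hypotheses of the lemma are met) is a useful detail the paper leaves implicit.
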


\begin{restatable}{corollary}{deterministicgreedyschedulingexists}\label{cor:deterministicgreedyschedulingexists}
Let $M$ be a set of machines and $\calS \subseteq \seqset$ be a polynomial-size \textit{domain set} of polynomial-length sequences of machines. Then, there exists a deterministic, local, and 1-bit job state greedy-enabled job-shop scheduling algorithm $A$ that, for all $\calS$-supported job sets $\js$ with polynomial size has completion time $\greedycompl(A, \js) = \tildeO(C(\js) + D(\js))$.
\end{restatable}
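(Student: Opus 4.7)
The plan is to derive this corollary directly from Lemma \ref{thm:greedyschedulingexists} via a standard probabilistic method / derandomization by enumeration of the shared random seed. The randomized statement already gives competitiveness against an adaptive adversary, which is precisely what lets us strip the randomness: the quantifier "for all $\calS$-supported job sets" is inside the "with high probability" in the lemma's guarantee, not outside it.

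First I would invoke Lemma \ref{thm:greedyschedulingexists} with the given $M$ and $\calS$, producing Algorithm \ref{alg:greedylocalrulescheduler}. Because $|\calS|$ and the maximum job-set size are polynomially bounded in $|M|$, the number of bits in the sampled shared random string can be taken to be polynomial in $|M|$, so the lemma's ``with high probability'' can be made to read: with probability at least $1 - |M|^{-\omega(1)}$ over the shared random string, the sampled deterministic forwarding rule $A_r$ is greedy-enabled, local with $1$-bit job states, and simultaneously satisfies $\greedycompl(A_r, \js) = \tildeO(C(\js)+D(\js))$ for \emph{every} $\calS$-supported job set $\js$ of polynomial size. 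In particular the failure probability is strictly less than $1$, so by the probabilistic method there exists at least one fixed seed $r^{\ast}$ that simultaneously witnesses the guarantee for all such $\js$.

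I would then let $A$ be the deterministic algorithm obtained by hardwiring $r^{\ast}$ into Algorithm \ref{alg:greedylocalrulescheduler}. Since $r^{\ast}$ is a fixed string known to every machine as part of the algorithm's description, $A$ inherits locality and the $1$-bit state property from the randomized algorithm without modification, and the greedy-enabledness carries through because Lemma \ref{thm:greedyschedulingexists} established it against an adaptive adversary (which subsumes any particular greedy adversary once the randomness is fixed). The resulting completion-time bound is exactly $\greedycompl(A, \js) = \tildeO(C(\js)+D(\js))$ for every $\calS$-supported polynomial-size $\js$, as required.

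The only subtlety, and the one step I would be careful about, is confirming that the ``adaptive adversary'' phrasing in Lemma \ref{thm:greedyschedulingexists} really does quantify over all job sets \emph{inside} the probability. This is essential: a bound of the form ``for every $\js$, with high probability $A$ succeeds on $\js$'' would not suffice, since a union bound over the (superpolynomially many) $\calS$-supported job sets could fail. The adaptive-adversary formulation precisely guarantees a single high-probability event under which all job sets are handled at once, so a single good seed $r^{\ast}$ exists and the corollary follows with no further calculation.
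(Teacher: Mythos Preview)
Your proposal is correct and takes essentially the same approach as the paper, which simply remarks that ``results for randomised algorithms against an adaptive adversary imply the existence of deterministic algorithms as the following trivial corollaries'' and gives no further proof. Your write-up is in fact more detailed than the paper's, and your emphasis on the quantifier order (all job sets inside the high-probability event) is exactly the point that makes the derandomization immediate.
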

% \deterministicgreedyschedulingexists*

The construction of the algorithms for the above two statements is split into two subsections. \Cref{sec:weakdetscheduling} covers the subroutines for \textit{weak scheduling}, the subproblem of completing at least half of the jobs. The subroutines take as input a \textit{hash function}, and guarantee that for a uniformly random hash function, we fail to complete at least half of the jobs with probability exponentially small in the size of the job set. Then, in \Cref{sec:localrulescheduling}, the subroutine is used to construct the algorithms.

The last subsection, \Cref{sec:noisydetsec}, adapts these algorithms for noisy scheduling. There, we prove the following result:

% \begin{restatable}{theorem}{noisyschedulingexists}\label{thm:noisyschedulingexists}
% Let $G$ be a graph, $\calP$ be a domain path set, and $\beta > 1$ the fixed \textit{noise level}. Then, \Cref{alg:localnoisyscheduler} (parameterized by $\calP$ and $\beta$) is a stateless algorithm for local rule scheduling, such that:

% For every signal $S \subseteq \calP$, for every noisy set $P \subseteq \calP$ containing the signal of small enough size ($|P| \leq \beta |S|$), the algorithm when ran on a packet set corresponding to $P$ delivers at least a $1/4$-fraction of the packets corresponding to $S$, the signal packets, in $[C(S) + D(S)] \cdot \beta \log |\calP| \cdot \poly(\log n)$ time steps.
% \end{restatable}

\begin{restatable}{lemma}{noisyschedulingexists}\label{thm:noisyschedulingexists}
Let $M$ be a set of machines, $c$ a constant (such that "polynomially bounded" means "at most $|M|^c$") and $\calS \subseteq \seqset$ be a polynomial-size \textit{domain set} of polynomial-length sequences of machines. \Cref{alg:localnoisyscheduler} is a randomised algorithm with shared randomness for the job-shop scheduling problem that has the following properties:

\begin{itemize}
    \item The algorithm is local and stateless. That is, each machine decides which job in its queue to work on based only on the jobs in its queue, the current time step, $|M|$, $c$, $\beta$ and $T$ (defined below), and the shared random bits. The algorithm does not use job states.
    \item The algorithm can competitively schedule noisy subsets against an adaptive adversary: with high probability over the shared random string, the following holds:
    
    For all $\calS$-supported job sets $\js$ with polynomial size, congestion and dilation, for all noise levels $\beta$ and polynomial completion time bounds $T$,
    \begin{itemize}
        \item The algorithm, with parameters $\beta$ and $T$, takes $\tildeO(\beta T)$ time steps
        \item For every $(\beta, T)$-good subset of jobs $\js_{S} \subseteq \js$, at least a fourth of $\js_{S}$ is completed
    \end{itemize}
    (A subset $\js_{S} \subseteq \js$ of jobs is $(\beta, T)$-good if it is both large (it satisfies $|\js_{S}| \geq \frac{1}{\beta} |\js|$) and has low congestion and dilation (it satisfies $\cng(\js_{S}) + \dil(\js_{S}) \leq T$).)
\end{itemize}
\end{restatable}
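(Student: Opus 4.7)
The plan is to adapt the weak-scheduling subroutine of \Cref{sec:weakdetscheduling} to the noisy setting by enlarging the overcongestion threshold from $l = \Theta(\log|M|)$ to $l' := 4\beta l$, and to invoke the probabilistic method to derandomize the shared hash seed exactly as for \Cref{thm:statelessschedulingexists}. Concretely, \Cref{alg:localnoisyscheduler} will run for $2T$ large time steps, each consisting of $l'$ small time steps (total $\tildeO(\beta T)$ steps). With a shared uniformly random hash function $h : \calS \to \{0,\dots,T-1\}$, the virtual time of a job $\jo$ at the $i$-th machine of its sequence is $h(\seq(\jo)) + i$; when a machine holds strictly more than $l'$ jobs whose virtual time equals the current large time step, all are dropped, otherwise all such jobs are processed during that large time step. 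The running-time bound is then immediate.

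For correctness, fix a candidate $(\beta,T)$-good subset $\js_S \subseteq \js$ and bound the probability that more than $|\js_S|/2$ jobs of $\js_S$ are dropped. Following \Cref{sec:techoverview-noise}, I will classify each drop event (a (large-time, machine) pair at which more than $l'$ jobs are dropped) as \textbf{small} if it destroys fewer than $l'/(4\beta) = l$ members of $\js_S$, and \textbf{large} otherwise. Since every drop event consumes more than $l'$ jobs from $\js$, there are at most $|\js|/l'$ drop events in total; hence small events destroy at most $(|\js|/l') \cdot (l'/(4\beta)) = |\js|/(4\beta) \le |\js_S|/4$ members of $\js_S$, \emph{completely deterministically}. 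It therefore suffices to show that, with probability exponentially small in $|\js_S|$, large events destroy at most $|\js_S|/4$ additional members of $\js_S$.

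To bound large events, I will recycle the bad-pattern machinery from \Cref{sec:weakdetscheduling}, but restricted to $\js_S$ with threshold $l$ (not $l'$). A bad pattern is a family of pairwise disjoint nonempty subsets $B_{T',m} \subseteq \js_S$, each of size more than $l$, with $\sum_{T',m}|B_{T',m}| > |\js_S|/4$, which \emph{occurs} if every $\jo \in B_{T',m}$ has virtual time exactly $T'$ at $m$. Any execution in which large events drop more than $|\js_S|/4$ members of $\js_S$ certifies such a bad pattern (take $B_{T',m}$ to be the members of $\js_S$ dropped at $(T',m)$ in a large event). Since virtual times depend only on $h(\seq(\jo))$, the probability that a fixed bad pattern occurs is at most $T^{-\sum|B_{T',m}|} \le \exp(-\tfrac{1}{4}|\js_S|\ln T)$, and the standard enumeration from \Cref{sec:weakdetscheduling} bounds the number of bad patterns. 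For the adaptive-adversary guarantee with high probability over the shared seed, I will chain $\tildeO(1)$ independent hash functions, and take a union bound over all $\calS$-supported job sets of each size $s$ (cost $\le |\calS|^s$), over subsets $\js_S$ (cost $\le 2^s$), and over a polylog-size grid of power-of-two $(\beta,T)$-pairs.

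The main obstacle I anticipate is that the per-pattern probability $T^{-|\js_S|/4}$ need not dominate the combinatorial overhead when $T$ is small (even constant), while the bad-pattern count and the $|\calS|^s$ enumeration factor grow with $|M|$ and $|\calS|$. I expect to resolve this by drawing $h$ from $\{0,\dots,\max(T,|M|^{c'})-1\}$ for a sufficiently large constant $c'$ and inflating the horizon to $\max(T,|M|^{c'})$ large time steps, which costs only polylog factors in the completion time and makes every per-pattern probability at most $|M|^{-c'|\js_S|/4}$, dominating all enumerations. The same adjustment simultaneously handles the regime $\beta = 1$, where the argument must smoothly reduce to the noiseless weak scheduling of \Cref{sec:weakdetscheduling}.
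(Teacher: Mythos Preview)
Your core argument --- the small/large event split and bad patterns restricted to $\js_S$ with threshold $l$ rather than $l'$ --- is exactly the paper's approach (\Cref{lem:scaleupstillgood}). However, the ``obstacle'' you anticipate is not real, and your proposed fix would break the result.

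The bad-pattern count is not a function of $|M|$ and $|\calS|$ alone; it scales as $(eC(\js_S)/l)^s$ (\Cref{lem:badpatterncount} applied to $\js_S$). Since $\js_S$ is $(\beta,T)$-good, $C(\js_S) \le T \le L$, so the per-pattern probability $L^{-s}$ cancels the $C(\js_S)^s$ factor, leaving $(e/l)^s$, and the failure probability becomes $\exp(-\Theta(|\js_S|\ln l))$, \emph{independent of $T$}. No horizon inflation is needed. Your proposed fix of enlarging the horizon to $\max(T,|M|^{c'})$ large time steps would make the running time $\tildeO(\beta\,|M|^{c'})$ when $T$ is small, destroying the $\tildeO(\beta T)$ guarantee the lemma claims.

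There is also a cleaner way to do the union bound than enumerating pairs $(\js,\js_S)$. The large-event bad pattern lives entirely inside $\js_S$ and its occurrence depends only on $h$ restricted to $\js_S$ --- the noise $\js\setminus\js_S$ plays no role. So if $h$ is $(L,l)$-good for $\js_S$ in the sense of \Cref{def:goodhash}, the subroutine with threshold $l'=4\beta l$ completes a fourth of $\js_S$ for \emph{every} admissible $\js\supseteq\js_S$ and every adversary. You therefore only need the sampled hash set to be $(L,l)$-good for $\calS$, which is exactly \Cref{lem:iterategoodprob}, and then every $(\beta,T)$-good $\js_S$ is automatically covered. This removes the extra $2^s$ factor and the enumeration over $\js$ entirely.
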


\shyr{Maybe better to use a different letter than $T$ since we use that for time step blocks in Algorihtms 1 and 2.} \antti{let's change for full version}

There is a corresponding greedy-enabled 1-bit job state version that we omit for brevity. The deterministic corollary, \Cref{cor:deterministicnoisyschedulingexists}, trivially follows.

\begin{restatable}{corollary}{deterministicnoisyschedulingexists}\label{cor:deterministicnoisyschedulingexists}
Let $M$ be a set of machines and $\calS \subseteq \seqset$ be a polynomial-size \textit{domain set} of polynomial-length sequences of machines. Then, there exists a deterministic, local, and stateless job-shop scheduling algorithm $A$, such that for all $\calS$-supported job sets $\js$ with polynomially bounded size, congestion and dilation, noise levels $\beta$ and polynomial completion time bounds $T$, the algorithm with parameters $\beta$ and $T$ on the job set $\js$ satisfies the following:
\begin{itemize}
    \item The algorithm takes $\tildeO(\beta T)$ time steps
    \item For every $(\beta, T)$-good subset of jobs $\js_{S} \subseteq \js$, at least a fourth of $\js_{S}$ is completed
\end{itemize}
\end{restatable}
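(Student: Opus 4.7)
The plan is to obtain \Cref{cor:deterministicnoisyschedulingexists} as an immediate derandomization of \Cref{thm:noisyschedulingexists} via the probabilistic method. \Cref{thm:noisyschedulingexists} provides a randomized algorithm with shared randomness whose universal guarantee---over all $\calS$-supported polynomial-size job sets $\js$, over all noise levels $\beta$, and over all polynomial completion-time bounds $T$---holds \emph{with high probability over the shared random string}. Crucially, the adaptive-adversary formulation means that this is a single high-probability event containing the ``for all'' quantifier inside, rather than a separate high-probability event for each input. Since this event has probability strictly greater than zero, there exists at least one concrete realization $r^*$ of the shared random string for which the guarantee is satisfied simultaneously for every admissible triple $(\js, \beta, T)$.

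First I would fix such an $r^*$ and define $A$ to be \Cref{alg:localnoisyscheduler} with its shared randomness hardcoded to $r^*$. The resulting algorithm is deterministic by construction, and it inherits locality and statelessness from \Cref{alg:localnoisyscheduler}: replacing the shared random bits by a fixed constant does not enlarge the set of quantities each machine consults when deciding which job to work on, so ``based only on the jobs in its queue, the current time step, $|M|$, $c$, $\beta$, and $T$'' remains satisfied. Second, I would observe that both quantitative conclusions---the $\tildeO(\beta T)$ running time and the completion of at least one fourth of any $(\beta, T)$-good subset $\js_S$---are literally part of the universally quantified guarantee that $r^*$ was chosen to satisfy, so they transfer verbatim to the deterministic algorithm.

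I do not anticipate a real technical obstacle, since the novelty of \Cref{cor:deterministicnoisyschedulingexists} resides entirely in \Cref{thm:noisyschedulingexists}; the derandomization step is exactly the reason the underlying lemma was proved in the adaptive-adversary formulation in the first place. The one bookkeeping point worth checking is that the inner ``for all'' quantifier in the high-probability event ranges over a well-defined collection: because $|M|$ is finite, $\calS$ has polynomial size, and the admissible $|\js|$, $\beta$, and $T$ are all polynomially bounded in $|M|$, the set of admissible inputs is finite (in fact polynomial), so the underlying event is a finite conjunction of algorithm-correctness events and the probabilistic-method argument applies without measurability subtleties.
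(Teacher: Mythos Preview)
Your proposal is correct and is exactly the paper's approach: the paper states that the deterministic corollary ``trivially follows'' from \Cref{thm:noisyschedulingexists} by fixing a good realization of the shared randomness, which is precisely the probabilistic-method argument you spell out. The only minor quibble is your finiteness remark---$\beta$ is not a priori bounded---but this is irrelevant since the high-probability event in \Cref{thm:noisyschedulingexists} already carries the universal quantifier over $\beta$ inside it (the sampled hash functions being $(L,l)$-good depends only on $L$, not on $\beta$), so no separate union bound over $\beta$ is needed.
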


\subsection{Weak Scheduling With Exponentially Small Failure Probability}\label{sec:weakdetscheduling}

In this subsection we present both a stateless and a 1-bit job state greedy-enabled \textbf{weak} scheduling subroutine. The subroutine takes as input a \textbf{hash function} $h$, which maps jobs (or more accurately, (sequence, identifier)-pairs) to integers. We show that for any fixed job set $\js$, the probability that for a uniformly random input hash function, the subroutine fails to complete at least half of the jobs, is exponentially small in $|\js|$, which later allows an union bound over possible job sets. We note a few things before stating the subroutines.

First, recall the model. The algorithm is only allowed to select which job to have a machine $m$ work on and new states for jobs based on the current time step $t$ and the jobs $\que(m)_t$ in the machine's queue. Thus, the correct form of an algorithm for the model is a function that takes as input both $m$ and $t$, and maps to up to one job in $\que(m)_t$ to work on and new states for jobs in the queue. While \Cref{alg:localrulescheduler} and \Cref{alg:greedylocalrulescheduler} are stated this way, we will state later algorithms iteratively for convenience.

The model guarantees that job identifiers are polynomial. We will in particular assume that $\ind(\jo) \leq |M|^c$ for some global constant $c$, and denote the set of job identifiers by $I := [|M|^c]$. We will additionally, without loss of generality, assume $|M| \geq 32$.

% The greedy-enabled subroutine \emph{is not greedy}. As discussed earlier, the greedy-completion time of any algorithm in the greedy-enabled model is at most the completion time of the original algorithm, so this only makes our results for the subroutine \emph{stronger}. The algorithm is stated with no job being worked on on line 10 to make the similarities between the stateless and greedy-enabled algorithm more clear. In fact, to avoid doing the same analysis twice, we show that the two algorithms are \textbf{equivalent} in the basic (non-greedy-enabled) scheduling model. 

An important property of the stateless variant is that it is \textbf{tolerant to arbitrary starting positions}: our guarantees for the subroutine hold even if jobs are not required to have initial position $0$. Note that this property trivially holds for the greedy-enabled variant, as the adversary can move jobs to any positions before the first time step. % Indeed, our guarantee that either the subroutine completes at least half of the jobs is made based only on the hash function $h$ and the set of (sequence, identifier) pairs of the job set. Thus, it holds regardless of the initial positions of the jobs (and the arbitrary decisions made inside the subroutines).

The subroutines are split into $2L$ \textit{large time steps}, each consisting of $l$ \textit{small time steps} (each of which is just a single actual time step). This split is done to simulate machines having the ability to work on $l$ jobs per time step instead of just one. The subroutine takes the large step count $L$ and small step count $l$ as parameters, and the values required to complete at least half of the jobs will be $L = \Omega(C(\js) + D(\js))$ and $l = \Omega(\log |M| / \log \log |M|)$.

Both variants use the notion of \textit{virtual times}. The subroutines are somewhat analogous to the random delay scheduling algorithm in that jobs have a initial random delay, determined by the hash function. The delay determines a schedule, where the job first waits for this delay, then starts moving one machine per large time step. As the delay is determined by the hash function that every machine has oracle access to, every machine knows when a job is scheduled to be worked on by any machine in its sequence. This time is the \textit{virtual time} of the job at the machine (for nonsimple sequences, sequence position). Virtual times can enable greedy-enabledness and arbitrary starting positions, as jobs "ahead of schedule" can be freely worked on or ignored until they fall back to being "on schedule" without affecting the analysis of the algorithm, as we'll see later.

Now, after formally defining virtual times, we are ready to state the two subroutines.

\textbf{Virtual times.} For a fixed hash function $h : \seqset \times I \mapsto \{0, \dots, L - 1\}$ (with which we abuse notation and write $h(\jo) := h(\seq(\jo), \ind(\jo))$), we define the \textit{virtual time} of a (job, sequence position) pair $(\jo, i)$ as $\virt(\jo, i) := h(\jo) + i$ if $i < \len(\seq(\jo))$ and $\infty$ if $i = \len(\seq(\jo))$, and the virtual time of a job $\jo$ at a time step $t$ as $\virt(\jo)_t := \virt(\jo, \pos(\jo)_t)$.

\begin{algorithm}[H]
  \caption{Stateless Weak Scheduling Subroutine}
  \label{alg:weakschedulingsubroutine}
  \begin{algorithmic}[1]
%    \State \textbf{Global inputs:} Every edge knows the values of the ``large'' time-step length $L$ and ``small'' time step length $l$. Each edge has access to a hash function $h$ from $\calP$ to $\{0, \dots, L - 1\}$.
%     \State \textbf{Initialization:} There is a hidden set of packets $\ps$ defined by their unique paths $P = \{\pth(\pa) : \pa \in \ps\}$ and starting positions $\pos(\pa)_0 := S(\pth(\pa))$ (the algorithm does not access the identifiers).
%    \State We define the \textit{virtual time} of a (path, edge index) pair $(p, i)$ as $\virt(p, i) := h(p) + i$ and the virtual time of a packet $\pa$ at time step $t$ as $\virt(\pa)_t := \virt(\pth(\pa), \pos(\pa)_t)$.
    \Function{StatelessWeakScheduler}{m, L, l, h, t}
    \State Let $T := \lfloor t / l \rfloor$
%    \State Let $\mathrm{rem} := l - (t - Tl)$
    \State Let $Q := \{\jo \in \que(m)_t : \virt(\jo)_t = T\}$
%    \If{$0 < |Q| \leq \mathrm{rem}$}
    \If{$0 < |Q| \leq l$}
    \State Work on an arbitrary $\jo \in Q$
    \Else
    \State Do nothing
    \EndIf
    \EndFunction
  \end{algorithmic}
\end{algorithm}

\begin{algorithm}[H]
  \caption{Greedy-Enabled Weak Scheduling Subroutine}
  \label{alg:greedyweakschedulingsubroutine}
  \begin{algorithmic}[1]
%    \State \textbf{Global inputs:} Every edge knows the values of the ``large'' time-step length $L$ and ``small'' time step length $l$. Each edge has access to a hash function $h$ from $\calP$ to $\{0, \dots, L - 1\}$.
%     \State \textbf{Initialization:} There is a hidden set of packets $\ps$ defined by their unique paths $P = \{\pth(\pa) : \pa \in \ps\}$ and starting positions $\pos(\pa)_0 := S(\pth(\pa))$ (the algorithm does not access the identifiers). Initially, each packet has state $\state(\pa)_0 = 0$.
%    \State We define the \textit{virtual time} of a (path, edge index) pair $(p, i)$ as $\virt(p, i) := h(p) + i$ and the virtual time of a packet $\pa$ at time step $t$ as $\virt(\pa)_t := \virt(\pth(\pa), \pos(\pa)_t)$.
    \Function{GreedyWeakScheduler}{m, L, l, h, t}
    \State If $t = 0$, let $\state(\jo) \leftarrow 0$ for all $\jo \in \que(m)_t$
    \State Let $T := \lfloor t / l \rfloor$
    \State Let $Q := \{\jo \in \que(m)_t : \virt(\jo)_t = T \text{ AND } \state(\jo)_t = 0\}$
    \If{$t = 0\ (\text{mod } l)$ and $|Q| > l$}
    \State Change states of jobs in $Q$: $\state(\jo)_{t + 1} \leftarrow 1$
    \EndIf
    \If{$0 < |Q| \leq l$}
    \State Work on an arbitrary $\jo \in Q$
    \Else
    \State Do nothing
    \EndIf
    \EndFunction
  \end{algorithmic}
\end{algorithm}

We start by analyzing \Cref{alg:greedyweakschedulingsubroutine}. First, we define some terminology. At time step $t$ (large time step $T = \lfloor t / l \rfloor$), we say a job $\jo$ is \textbf{eliminated} if $\state(\jo)_t = 1$ and that it is \textbf{live} if $\state(\jo)_t = 0$. We say the job is \textbf{behind schedule} if $\virt(\jo)_t < T$, \textbf{on schedule} if $\virt(\jo)_t = T$ and \textbf{ahead of schedule} if $\virt(\jo)_t > T$. Whenever the if-condition on line 5 is satisfied and $Q$ is cleared, we say the jobs in $Q$ are \textbf{dropped}. The following is an important invariant:

\begin{lemma}\label{lem:livemeansnotbehindschedule}
    In \Cref{alg:greedyweakschedulingsubroutine}, if a job $\jo$ has at any time step $t' \leq t$ been \textit{behind schedule} ($\virt(\jo)_{t'} < T$), it is eliminated ($\state(\jo)_t = 1$) at time step $t$.
\end{lemma}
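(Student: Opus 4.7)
The plan is to prove the contrapositive by induction on $t$, establishing the invariant $(\star)$: every live job at time $t$ satisfies $\virt(\jo)_t \geq \lfloor t/l \rfloor$ (live implies not behind schedule). The lemma then follows by state-monotonicity: line~6 is the only place $\state$ is ever modified after the $t = 0$ initialization and it only sets states to $1$, so a job that is behind schedule at some $t' \leq t$ must have been eliminated by $t'$ (by the contrapositive of $(\star)$ at $t'$) and therefore remains eliminated at $t$. I interpret the "$T$" in the lemma statement as $\lfloor t'/l \rfloor$, i.e., the large time step index at the moment the job is observed behind schedule.

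The base case $t = 0$ is vacuous since $\virt(\jo)_0 = h(\jo) + \pos(\jo)_0 \geq 0$. For the inductive step, let $T := \lfloor t/l \rfloor$ and fix a live job $\jo$ at time $t$. When $t \not\equiv 0 \pmod l$, we have $\lfloor (t-1)/l \rfloor = T$ and $\jo$ is also live at $t-1$ by state-monotonicity, so IH gives $\virt(\jo)_{t-1} \geq T$, and $\virt(\jo)_t \geq \virt(\jo)_{t-1} \geq T$ follows from the non-decreasing nature of $\pos$. The substantive case is $t = Tl$ with $T \geq 1$; assume towards contradiction that $\virt(\jo)_t \leq T-1$. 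Combining the IH-derived lower bound $\virt(\jo)_{t''} \geq T-1$ across all $t'' \in \{(T-1)l, \ldots, t-1\}$ (each live by state-monotonicity), the upper bound $\virt(\jo)_t \leq T-1$, and the monotonicity of $\virt$ forces $\virt(\jo)_{t''} = T-1$ throughout the entirety of large time step $T-1$. Equivalently, $\pos(\jo)$ is constant during large time step $T-1$, so $\jo$ sits at a fixed machine $m$, is live throughout, and is in $Q$ at $m$ at every small time step of large time step $T-1$.

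The core of the argument is then the subclaim that, after the start-of-step actions at $(T-1)l$, the set $Q$ at $m$ is non-increasing across the remaining small time steps of large time step $T-1$. Two potential growth sources must be ruled out. First, natural arrivals: any job worked on at a neighboring machine $m'$ during large time step $T-1$ had $\virt = T-1$ at $m'$ by membership in $Q_{m'}$, so after advancing one position its virtual time becomes $T$, and it therefore lands at $m$ ahead of schedule, not in $Q$. Second, adversarial pushes at interior time steps $t'' \geq (T-1)l + 1$: to raise a job's virtual time to $T-1$, the job must have had $\virt < T-1$ just before the push, i.e., been behind schedule at time $t'' - 1 \geq (T-1)l$; by IH this job is already eliminated and hence not live after the push either. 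Given the subclaim and $\jo \in Q$ at time $(T-1)l$, a case split finishes: if $|Q_{(T-1)l}| > l$ then line~6 eliminates $\jo$ at the start-of-step, contradicting liveness at $t$; otherwise $|Q| \leq l$ holds throughout, line~8 fires at every small time step with $|Q| > 0$, and the $l$ available small time steps suffice to work on all $\leq l$ jobs in $Q_{(T-1)l}$, forcing $\jo$'s position to advance and contradicting $\virt(\jo) \equiv T-1$.

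The main obstacle I expect is precisely the non-growth subclaim on $Q$, which is the only step that genuinely invokes the inductive hypothesis (beyond the trivial monotonicity of $\state$ and $\pos$). The subtle point worth care is the start-of-step $(T-1)l$: the adversary is legitimately permitted to push a previously-on-schedule-at-$T-2$ job into $m$ on schedule at $T-1$, and this source job is live by IH; such arrivals must be counted into $|Q_{(T-1)l}|$ and handled by the line~6 / line~8 dichotomy, rather than mistakenly treated as interior growth.
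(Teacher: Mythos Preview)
Your proposal is correct and follows essentially the same inductive argument as the paper: you establish the invariant that live jobs are never behind schedule, with the substantive case being the large-step boundary $t = Tl$, where (as in the paper) you argue that a live job stuck with $\virt = T-1$ throughout large step $T-1$ forces either elimination via the $|Q|>l$ check or a forced advance via the work-on-$Q$ rule, and that no live job can newly enter $Q$ mid-large-step without having been behind schedule at an earlier step covered by the inductive hypothesis. Your treatment is somewhat more explicit than the paper's in separating natural arrivals (worked-on jobs arrive with $\virt = T$) from interior adversarial pushes, and in flagging that the push at $t=(T-1)l$ itself is absorbed into $Q_{(T-1)l}$ rather than counted as interior growth; the paper compresses all of this into the single observation that any job entering $Q$ strictly after $t'=(T-1)l$ must have had strictly smaller virtual time just before and hence was already behind schedule.
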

\begin{proof}
    Proof by induction. The claim holds at time step $t = 0$, as all jobs start on schedule or ahead of schedule.
    
    Assume the claim holds up to time step $t$, we will show it holds for time step $t$. Firstly, note that eliminated jobs cannot become live, thus we only need to show that no live job became behind schedule (but not eliminated).

    If $t \neq 0\ (\text{mod } l)$, no job could have become behind schedule, as $T$ only increases on time steps divisible by $l$, and job positions never decrease, thus neither do their virtual times.

    Now, assume $t = Tl$ and, for the sake of contradiction, that there is some live job $\jo$ with $\virt(\jo)_t = T - 1$. Consider the time step $t' = t - l$ (large time step $T' = T - 1$). At this time step, we must have had $\pos(\jo)_{t'} = \pos(\jo)_{t}$, as otherwise we'd have $\virt(\jo)_{t'} < \virt(\jo)_{t} = T - 1 = T'$. Thus, $\jo$ was in the set $Q$ of the machine it was at, and as the job is still live and $t' = 0\ (\text{mod } l)$, we had $|Q| \leq l$.
    
    As the job didn't leave $Q$ for the large time step (only way for that to occur is if the job was worked on, but its position had not increased), the algorithm prioritises having the machine work on jobs in $Q$, and had $l \geq |Q|$ time steps to work on jobs in $Q$ between the time steps $t'$ and $t$, that must mean a job $\jo'$ entered $Q$ at some time $t' < t'' < t$. However, then $\virt(\jo')_{t'' - 1} = \virt(\jo')_{t''} - 1 = T' - 1$, thus $\jo'$ was behind schedule at time $t'' - 1$, and by induction cannot be live.
\end{proof}

By the above lemma, live jobs cannot fall behind schedule. As the virtual times of jobs are either $\infty$ (if completed) or at most $2L - 2$ (assuming $L \geq \dil(J)$), after $2Ll$ time steps, any live job is complete.

\begin{lemma}
    In \Cref{alg:greedyweakschedulingsubroutine}, if $\dil(\js) \leq L$, after $2Ll$ time steps, every live job is complete.
\end{lemma}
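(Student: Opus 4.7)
The plan is to derive a direct contradiction from \Cref{lem:livemeansnotbehindschedule} by comparing the largest possible finite virtual time of an incomplete job with the large time step $T$ reached after $2Ll$ time steps. The overall approach is: at time $t = 2Ll$ we have $T = \lfloor t/l \rfloor = 2L$, while any incomplete job has a virtual time capped by the sum of its hash value and its current position in its sequence, both of which are small.

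More concretely, I would proceed as follows. Suppose for contradiction that some live job $\jo$ is not complete at time $t = 2Ll$. Since $\jo$ is not complete, $\pos(\jo)_t < \len(\seq(\jo))$, so by definition $\virt(\jo)_t = h(\jo) + \pos(\jo)_t$ is finite. Using that $h$ takes values in $\{0,\dots,L-1\}$ and that $\pos(\jo)_t \le \len(\seq(\jo)) - 1 \le \dil(\js) - 1 \le L - 1$ (where the last inequality uses the hypothesis $\dil(\js) \le L$), I would conclude
\[
    \virt(\jo)_t \;\le\; (L-1) + (L-1) \;=\; 2L - 2 \;<\; 2L \;=\; T.
\]
Thus $\jo$ is behind schedule at time $t$, which by \Cref{lem:livemeansnotbehindschedule} implies $\state(\jo)_t = 1$, i.e., $\jo$ is eliminated. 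This contradicts the assumption that $\jo$ is live, so every live job must be complete at time $2Ll$.

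There is essentially no substantive obstacle in this argument: the entire content has been loaded into the invariant of \Cref{lem:livemeansnotbehindschedule}, and what remains is a one-line arithmetic check comparing the maximum finite virtual time attainable under the hypothesis $\dil(\js) \le L$ against the value of $T$ after $2Ll$ steps. The only minor thing to verify carefully is the boundary in the definition of $\virt$: a job at sequence position $\len(\seq(\jo))$ has $\virt = \infty$ and is, by definition, already complete, so the case analysis cleanly splits into \emph{complete} (virtual time $\infty$, nothing to show) versus \emph{incomplete} (virtual time bounded by $2L-2$, leading to the contradiction above).
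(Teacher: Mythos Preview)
Your proposal is correct and follows essentially the same argument as the paper: bound the finite virtual time of an incomplete job by $(L-1)+(L-1)=2L-2$, compare it to $T=2L$ at time $2Ll$, and invoke \Cref{lem:livemeansnotbehindschedule} to get a contradiction. Your write-up is in fact a bit more explicit about the case split and the computation of $T$, but the content is identical.
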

\begin{proof}
We have $\virt(\jo)_{t} = h(\jo) + \pos(\jo)_t \leq (L - 1) + \pos(\jo)_t$. By \Cref{lem:livemeansnotbehindschedule}, if a job $\jo$ has not been completed, we must have $2L \leq \virt(\jo)_{2Ll} \leq (L - 1) + \pos(\jo)_t \leq 2(L - 1)$, a contradiction. Thus, the job must have been completed (thus $\virt(\jo)_{2Ll} = \infty$). 
\end{proof}

Thus, to show at least half of the jobs are completed in $2Ll$ time steps, it suffices to show at most half of the jobs are dropped during the algorithm. We can now introduce \textbf{bad patterns}, the tool we use to bound the probability of more than half the jobs being dropped. By the above lemma, this bounds the probability less than half of the jobs are completed (when $L \geq D(J)$).

Consider any run of the algorithm, even with arbitrary initial job positions, where more than half of the jobs were dropped. We can define sets $B_{T, m}$ of (job, position index) pairs describing what was dropped and where: $(\jo, i) \in B_{T, m}$ if and only if the job $\jo$ was dropped at machine $m = \seq(\jo)_i$ at the start of large time step $T = t / l$. The sets $B_{T, m}$ satisfy the following five properties:
\begin{itemize}
    \item Dropped jobs have virtual time equal to the large time step: $\virt(\jo, i) = T$ for all $(\jo, i) \in B_{T, m}$.
    \item Jobs cannot be dropped at machines they are not at: if $(\jo, i) \in B_{T, m}$, then $\seq(\jo)_i = m$.
    \item As each job is dropped at most once, no job $\jo$ occurs multiple times over the sets $B_{T, m}$.
    \item Whenever jobs are dropped, strictly more than $l$ of them are, so either $|B_{T, m}| = 0$ or $|B_{T, m}| > l$.
    \item Lastly, as more than half of the jobs are dropped, $\sum_{T, m} |B_{T, m}| > |\js| / 2$.
\end{itemize}
We call a collection of sets $B_{T, m}$ with the last four of the above properties a \textbf{bad pattern}, and say the bad pattern \textbf{occurs} if the first property holds. Now, more than half of the jobs can be dropped \emph{only if} some bad pattern occurs.

\begin{restatable}{definition}{badpatterndefinition}(Bad Patterns)\label{def:badpattern}
For fixed $M$, $L, l$ and $\js$, a \textit{bad pattern} is a collection of sets $B_{T, m}$ of (job, sequence position) pairs for $0 \leq T < 2L$ and $m \in M$, such that
    \begin{itemize}
        \item $\seq(\jo)_i = m$ for all $(\jo, i) \in B_{T, m}$,
        \item any $\jo \in \js$ appears at most once in $\sqcup B_{T, m}$, and
        \item the set sizes satisfy $|B_{T, m}| \in \{0\} \cup (l, |\js|]$ and $\sum |B_{T, m}| > |\js| / 2$.
    \end{itemize}
    For a hash function $h : \seqset \times I \mapsto \{0, \dots, L - 1\}$, we say a bad pattern $B$ \textit{occurs} if for all $(T, m)$ and $(\jo, i) \in B_{T, m}$, we have $T = \virt(\jo, i) := h(\jo) + i$.
\end{restatable}

When no bad pattern occurs with a hash function $h$, we call the hash function \textbf{good}:

\begin{restatable}{definition}{goodhash}(Good Hash Function)\label{def:goodhash}
For fixed $L$ and $l$, a hash function $h$ is $(L, l)$-\textit{good} for a job set $\js$ if no bad pattern occurs for $L, l, \js$ and $h$.
\end{restatable}

We now have a simple condition for \Cref{alg:greedyweakschedulingsubroutine} to complete at least half of the jobs:

\begin{lemma}\label{lem:badpatterncertifierlemma}
For fixed $M$, $L$, $l$ and $\js$ with $L \geq D(\js)$, let $h$ be a good hash function for $\js$. Then, for any decisions on which eligible jobs to work on and for any greedy-enabled adversary, \Cref{alg:greedyweakschedulingsubroutine} on $\js$ with $h$ completes at least half of the jobs in $2Ll$ time steps.
\end{lemma}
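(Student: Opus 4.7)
The plan is to argue the contrapositive of the statement at the level of dropped jobs: if more than half of the jobs were ever dropped during the $2Ll$ time steps, then one can read off a bad pattern from the run that occurs under $h$, contradicting goodness of $h$.

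More concretely, first I would invoke the previously established lemma that when $L \ge D(\js)$, after $2Ll$ time steps every \emph{live} job is complete. This immediately reduces the task to showing that at most half of the jobs have been dropped over the course of the run. Fix an arbitrary decision procedure for which eligible job to work on and an arbitrary greedy-enabled adversary, and suppose for contradiction that the number of dropped jobs exceeds $|\js|/2$.

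Next I would construct the candidate bad pattern $B$ directly from the execution. For each large time step $T \in \{0, \ldots, 2L-1\}$ and each machine $m \in M$, define
\[
B_{T,m} \;:=\; \bigl\{\,(\jo, \pos(\jo)_{Tl})\,:\, \jo \text{ is dropped at machine } m \text{ at the start of large time step } T\,\bigr\}.
\]
I would then verify the four defining conditions of \Cref{def:badpattern}: (i) membership in the queue at the moment of dropping gives $\seq(\jo)_{\pos(\jo)_{Tl}} = m$; (ii) once a job has $\state(\jo) = 1$ it is never again in the set $Q$ on line~4 of \Cref{alg:greedyweakschedulingsubroutine}, so it is dropped at most once across the run, meaning no $\jo$ appears twice in $\sqcup_{T,m} B_{T,m}$; (iii) the if-condition on line~5 only triggers when $|Q| > l$, and in that case every element of $Q$ is marked eliminated simultaneously, so $|B_{T,m}| \in \{0\} \cup (l, |\js|]$; and (iv) the contradictory assumption yields $\sum_{T,m} |B_{T,m}| > |\js|/2$. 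Finally, since $Q$ on line~4 consists exactly of live jobs $\jo$ with $\virt(\jo)_{Tl} = T$, every pair $(\jo, i) \in B_{T,m}$ satisfies $\virt(\jo, i) = h(\jo) + i = T$, so $B$ \emph{occurs} under $h$. This contradicts $h$ being good for $\js$.

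Hence at most half of the jobs are dropped, so at least half remain live throughout the $2Ll$ steps and by the preceding lemma are complete at the end, establishing the claim. The only part requiring any care is point (iii): one must observe that the elimination step happens atomically at a time $t \equiv 0 \pmod{l}$ and removes the \emph{entire} set $Q$ at once, so every nonempty $B_{T,m}$ inherits the threshold $|B_{T,m}| > l$ from the if-condition; the rest of the verification is essentially bookkeeping on definitions and is insensitive to the adversary's greedy pushes, since virtual times depend only on $h$ and $\pos(\jo)_t$ at the moment of dropping.
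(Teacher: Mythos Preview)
Your proposal is correct and is essentially the same argument as the paper's. The paper's proof of this lemma is a two-line reference back to exactly the bad-pattern construction and five-property verification you spell out; you have simply reconstructed that preceding discussion in full detail, so there is no meaningful difference in approach.
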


\begin{proof}
By the argument above, no bad pattern occurring certifies at least half of the jobs are completed. Since the hash function is good, no bad pattern occurs.
\end{proof}

Before continuing, we quickly get a variant of \Cref{lem:badpatterncertifierlemma} for the stateless scheduling algorithm. To do this, we note that against a greedy-enabled adversary restricted not to push any jobs at any point after the first time step, the two algorithms are \textbf{equivalent}. Thus, in particular, results for \Cref{alg:greedyweakschedulingsubroutine} against a greedy-enabled adversary hold for \Cref{alg:weakschedulingsubroutine} in the non-greedy-enabled model with arbitrary job starting positions.

\begin{lemma}\label{lem:schedulingequivalency}
Against a greedy-enabled adversary restricted to not push any jobs at any point after the first time step, \Cref{alg:weakschedulingsubroutine} and \Cref{alg:greedyweakschedulingsubroutine} are equivalent.
\end{lemma}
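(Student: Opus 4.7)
The plan is to proceed by induction on the time step $t$, maintaining the invariant that at every $t$ the positions of every job are identical under both algorithms, so their queues coincide. Given equal queues, I would compare the candidate sets $Q_1 := \{\jo \in \que(m)_t : \virt(\jo)_t = T\}$ used by \Cref{alg:weakschedulingsubroutine} and $Q_2 := Q_1 \cap \{\jo : \state(\jo)_t = 0\}$ used by \Cref{alg:greedyweakschedulingsubroutine}, and show that both algorithms always make the same decision on which job (if any) to work on. Since the adversary is forbidden from pushing jobs after time $0$, the only way positions change is via an algorithm working on a job, so equal decisions imply equal positions at the next step.

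The crux is the following structural observation: during small time steps strictly after $t = Tl$ but still within large time step $T$, no job can newly arrive at a machine $m$ with virtual time $T$ at $m$. Indeed, any such arrival at time $t' \in (Tl, (T+1)l)$ would have come from a predecessor machine $m''$ at time $t'-1$, which lies in large time step $T$; in order for the job to have been worked on at $m''$, its virt there must have been $T$, and advancing one position raises virt by one, so upon arriving at $m$ the job has virt $= T+1$, not $T$. Consequently, within a single large time step $T$, the set of virt-$T$ jobs at any machine $m$ can only shrink.

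Using this, I would handle the only potential source of divergence, a drop event. Suppose at time $t_0 = Tl$ machine $m$ sees $|Q_2| > l$, so \Cref{alg:greedyweakschedulingsubroutine} flips the state of all of $Q_2$ to $1$. By the induction hypothesis $Q_1 = Q_2$ at that moment (any previously dropped job has virt strictly below $T$ and is excluded from $Q_1$ too). For the remaining small time steps of large time step $T$, \Cref{alg:greedyweakschedulingsubroutine} has $Q_2 = \emptyset$ and does nothing at $m$; on the Alg 1 side, $|Q_1| > l$ forces \Cref{alg:weakschedulingsubroutine} to also do nothing at $m$, which leaves $|Q_1|$ unchanged and hence $> l$ for the entire large time step. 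Thus no dropped job moves in either algorithm. At subsequent large time steps $T' > T$, the dropped jobs remain at $m$ with virt $= T < T'$ and so lie outside $Q_1$, so \Cref{alg:weakschedulingsubroutine} never works on them; this matches \Cref{alg:greedyweakschedulingsubroutine} and closes the induction.

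The main obstacle is the bookkeeping around the drop event: once a drop happens, the two algorithms differ in their notion of ``eligible'' jobs, and one must carefully argue that this difference is inert -- neither algorithm actually touches those jobs. The structural observation that virt-$T$ arrivals cannot occur during large time step $T$ is precisely what makes this work; without it, $|Q_1|$ could in principle dip below the threshold $l$ within the large time step, triggering work by \Cref{alg:weakschedulingsubroutine} that \Cref{alg:greedyweakschedulingsubroutine} would not perform.
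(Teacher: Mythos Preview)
Your proposal is correct and rests on the same core observations as the paper's proof: once $|Q| > l$ at the start of a large time step, no job in $Q$ is worked on for that entire large time step, after which those jobs are permanently behind schedule and never eligible again. The paper organizes this slightly more slickly---it simply deletes the state-flipping IF from \Cref{alg:greedyweakschedulingsubroutine}, argues that under the restricted adversary this deletion does not change behavior (using exactly the observations you made), and then notes that without the IF the two algorithms are trivially identical---but your direct induction on positions is an equally valid packaging of the same argument.
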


\begin{proof}
    Remove the IF on lines 5 to 6 in \Cref{alg:greedyweakschedulingsubroutine}. Now, note that if the size of $Q$ is greater than $l$ at the start of a large time step, its size will never decrease during the large time step, thus $|Q| \leq l$ is never satisfied and no job in $Q$ is worked on until the end of the large time step. At this point, all those jobs fall behind schedule. The algorithm only ever works on on-schedule jobs, and without the greedy adversary to push jobs between time steps, behind-schedule jobs can never become on-schedule or ahead-of-schedule. Thus, with the restriction on the greedy-enabled adversary, \Cref{alg:greedyweakschedulingsubroutine} works exactly the same without the IF on lines 5 to 6 as with it. Without the IF, the two algorithms are trivially equivalent.
\end{proof}

\begin{lemma}\label{lem:statelessbadpatterncertifierlemma}
For fixed $M$, $L$, $l$ and $\js$ with $L \geq D(\js)$, let $h$ be a good hash function for $\js$. Then, for any starting positions and decisions on which eligible jobs to work on, \Cref{alg:weakschedulingsubroutine} on $\js$ with $h$ completes at least half of the jobs in $2Ll$ time steps.
\end{lemma}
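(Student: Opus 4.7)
The plan is to derive this stateless version as a direct corollary of the greedy-enabled analog \Cref{lem:badpatterncertifierlemma} via the equivalence established in \Cref{lem:schedulingequivalency}. The point of \Cref{lem:schedulingequivalency} is exactly that arbitrary starting positions in the stateless model can be simulated inside the greedy-enabled model by having the greedy adversary perform all its pushes at time $0$ (and then remain idle), and that under this restriction the two algorithms produce identical executions. So instead of re-running the bad-pattern argument from scratch, I would just invoke the greedy-enabled result on a corresponding instance.

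Concretely, given any starting positions for the jobs in $\js$ and any tie-breaking decisions for which eligible job to work on in \Cref{alg:weakschedulingsubroutine}, I would consider the execution of \Cref{alg:greedyweakschedulingsubroutine} on the same job set $\js$ with the same hash function $h$, where the greedy-enabled adversary acts at time step $0$ by pushing each job $\jo$ forward to exactly the starting position prescribed in the stateless instance, and then performs no further pushes for $t \geq 1$. By \Cref{lem:schedulingequivalency} these two executions produce the same queue contents, the same set of jobs worked on at each machine in each time step, and therefore the same set of completed jobs after $2Ll$ time steps. Since $h$ is good for $\js$ and $L \geq D(\js)$, \Cref{lem:badpatterncertifierlemma} applied to this greedy execution guarantees that at least half of the jobs in $\js$ are completed within $2Ll$ time steps. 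Transferring this conclusion back through the equivalence gives the stated bound for \Cref{alg:weakschedulingsubroutine}.

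The only subtlety, and the one place where I would be careful, is checking that the hypothesis ``for any starting positions'' in \Cref{lem:statelessbadpatterncertifierlemma} is correctly matched by an allowed behavior of the greedy adversary in \Cref{lem:schedulingequivalency}: the adversary is permitted to push at the very first time step (the restriction is only ``no pushes after the first time step''), so setting up arbitrary starting positions via a single initial push is indeed within the allowed adversary class. With that observation in place, the two lemmas compose cleanly and no further bad-pattern analysis or case work is needed. I expect this proof to be essentially a one-paragraph invocation, and I do not anticipate a genuine obstacle beyond verifying this compatibility of models.
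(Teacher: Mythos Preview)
Your proposal is correct and matches the paper's own proof exactly: the paper also derives this lemma as an immediate corollary of \Cref{lem:badpatterncertifierlemma} and \Cref{lem:schedulingequivalency}, with no additional argument.
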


\begin{proof}
    Immediate corollary of \Cref{lem:badpatterncertifierlemma} and \Cref{lem:schedulingequivalency}.
\end{proof}

We return from the sidetrack. It remains to bound the probability a random hash function \emph{is not good} for a fixed job set $\js$. \Cref{lem:weakroutingprob} shows that for a large time step count $L$ linear in congestion + dilation and small time step count logarithmic in $|M|$, a random hash function is not good with probability \emph{exponentially small} in $-|\js|$. As discussed in the technical overview, probability this tight is required to later apply an union bound over all possible $\calS$-supported job sets.

\begin{restatable}{lemma}{weakroutingprob}\label{lem:weakroutingprob}
Let $M$ be a set of machines and $\js$ be a polynomial-size set of jobs ($|\js| \leq |M|^c$). Let $L$ and $l$ (the large and small time step counts) be integers satisfying
\begin{align*}
    2|M|^c \geq L &\geq C(\js) + D(\js),\\
    l &\geq 150c \ln |M| / \ln \ln |M|. 
\end{align*}
Then, the probability a random hash function $h$ from $\seqset \times I$ to $\{0, \dots, L - 1\}$ \emph{is not good} for $\js$ is at most $\exp\left(-|\js| \ln (l) / 8\right)$.
\end{restatable}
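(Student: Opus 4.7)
The plan is to apply a union bound over bad patterns (\Cref{def:badpattern}). Since the probability that a fixed bad pattern $B = (B_{T,m})$ with $s := \sum_{T,m} |B_{T,m}|$ occurs is exactly $L^{-s}$ — each of the $s$ distinct jobs appearing in $B$ has its hash value pinned to a specific element of $\{0,\ldots,L-1\}$, and hash values are independent across jobs — the task reduces to bounding $\sum_{B \text{ bad}} L^{-|B|}$.

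To count, I would organize bad patterns by their vector of non-empty bucket sizes $(s_1,\ldots,s_k)$. The lower bound $s_j \geq l+1$ forces $k \leq s/(l+1)$. Each bucket $j$ is specified by a pair $(T_j,m_j) \in \{0,\ldots,2L-1\} \times M$ (at most $2L|M|$ options) and by a choice of $s_j$ (job, position) pairs mapping to machine $m_j$. Because the congestion satisfies $C(\js) \leq L$, the number of (job, position) pairs hitting any one fixed machine is at most $L$, giving $\binom{L}{s_j}$ options per bucket. Dropping the cross-bucket distinctness of jobs (only an overcount, which is fine for an upper bound), the contribution of the bad patterns with a fixed size profile $(s_1,\ldots,s_k)$ is at most
$$
  (2L|M|)^k \cdot \prod_{j=1}^{k} \binom{L}{s_j} L^{-s_j}
  \;\leq\; (2L|M|)^k \cdot \prod_{j=1}^k (e/(l+1))^{s_j}
  \;=\; (2L|M|)^k (e/(l+1))^s,
$$
where the key inequality $\binom{L}{s_j} L^{-s_j} \leq (e/s_j)^{s_j} \leq (e/(l+1))^{s_j}$ uses $\binom{L}{s_j} \leq (eL/s_j)^{s_j}$ together with $e/s_j \leq e/(l+1)$ for $s_j \geq l+1$. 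This is the essential estimate: it converts the bucket lower bound into a factor $(e/l)^{s_j}$ per bucket, which is what will ultimately yield the $l^{-\Omega(|\js|)}$ decay.

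Summing over $k$ and over the at most $2^{s-1}$ compositions of $s$ into parts $\geq l+1$, and using the geometric series in $k \leq s/l$, one obtains
$$
  \Pr[h \text{ is not good for } \js] \;\leq\; 2 \sum_{s > |\js|/2} r^s, \qquad r := \frac{2e}{l+1}\bigl(2L|M|\bigr)^{1/l}.
$$
The final step is a purely arithmetic verification that under the hypotheses $L \leq 2|M|^c$ and $l \geq 150c\ln|M|/\ln\ln|M|$ one has $\ln r \leq -\ln l /4 - 1$: the dominant term of $\ln r$ is $-\ln(l+1) \approx -\ln l$, and the additive term $(1/l)\ln(2L|M|)$ is kept below $\ln\ln|M|/25$ by the lower bound on $l$ (together with $|M| \geq 32$, $c \geq 1$). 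This yields $r \leq 1/(e\,l^{1/4})$, and summing the geometric tail starting from $s > |\js|/2$ delivers the required $\exp(-|\js|\ln l/8)$ bound. The corner case $|\js| \leq l$ is immediate since a bad pattern requires some bucket of size $\geq l+1 > |\js|$.

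The main obstacle, in my view, is organising the count to expose the bucket-size lower bound. A naive per-job enumeration treats each of the $s$ jobs as independently choosing a (position, time) slot, giving $\binom{|\js|}{s}(2L^2)^s L^{-s} = \binom{|\js|}{s}(2L)^s$, which is exponentially too loose once $s \approx |\js|$. The decisive maneuver is to enumerate by buckets rather than by jobs, which lets one convert the congestion bound $C(\js) \leq L$ into the compact per-bucket factor $(e/(l+1))^{s_j}$, fully exploiting $s_j \geq l+1$. Without this reorganisation, no elementary estimate delivers the required exponential-in-$|\js|\ln l$ bound.
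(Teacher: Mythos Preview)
Your proposal is correct and follows essentially the same approach as the paper: a union bound over bad patterns, organized by buckets $B_{T,m}$, with the decisive per-bucket estimate $\binom{C(\js)}{s_j}L^{-s_j}\le(e/l)^{s_j}$ coming from $C(\js)\le L$ and $s_j>l$. The only real difference is bookkeeping for the profile count: the paper bounds the number of ways to choose the nonempty buckets and their sizes by $\sum_k\binom{2L|M|}{k}\binom{s+k}{k}\le 2(4L|M|s)^{s/l}$ (packaged as \Cref{lem:badpatterncount}), whereas you use the cruder $2^{s-1}$ compositions together with $(2L|M|)^k\le(2L|M|)^{s/l}$; both routes give a base of the form $(\text{const}/l)\cdot(2L|M|)^{1/l}$ and the same final bound after the arithmetic check.
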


\begin{proof}
We need to show that the probability a bad pattern occurs is at most $\exp(-|\js| \ln(l) / 8)$.

We can easily upper bound the probability a fixed bad pattern $B$ of size $s = \sum |B_{T, m}|$ occurs by $L^{-s}$:
\begin{itemize}
    \item the probability $T = \virt(\jo, i) := h(\jo) + i$ for fixed $\jo, i, T$ and random $h$ is at most $L^{-1}$ (as $h(\jo)$ is uniformly random over $\{0, \dots, L - 1\}$), and
    \item the values $h(\jo)$ are independent of each other, thus virtual times for distinct $\jo$ are independent.
\end{itemize}

Next, we need to upper bound the number of bad patterns of any fixed size $s = \sum |B_{T, m}|$. Here we use the fact that by the definition of congestion, for any $m$ there are at most $C(\js)$ pairs $(\jo, i)$ with $\seq(\jo)_i = m$. The upper bound, \Cref{lem:badpatterncount}, is then proven by simply summing over ways to select the set sizes $s_{T, m} = |B_{T, m}|$ and multiplying this by the number of ways to select the contents of each individual $B_{T, m}$ from the up to $C(\js)$ options. The proof of the lemma is deferred to \Cref{sec:deferred}.

\begin{restatable}{lemma}{badpatterncount}\label{lem:badpatterncount}
The number of bad patterns of size $s = \sum |B_{T, m}|$ is at most
\begin{equation*}
    2\exp\left(\frac{|\js| \ln l}{8}\right) \left(\frac{eC(\js)}{l}\right)^s.
\end{equation*}
\end{restatable}

Finishing the proof simply involves summing over $s$ and multiplying the probability of a fixed bad pattern occurring with their number. We have $L \geq C(\js) + D(\js) \geq C(\js)$, thus as $l \geq 150 \geq e^5 \approx 148.4$ and $s > |\js| / 2$, we have
\begin{equation*}
    \left(\frac{eC(\js)}{l}\right)^s L^{-s} \leq (l/e)^{-s} \leq l^{-4s/5} = \exp(-(4s/5) \ln l) \leq \exp\left(-\frac{2|\js| \ln l}{5}\right).
\end{equation*}
and, as $1/8 - 2/5 = -11 / 40$, we get
\begin{equation*}
    \sum_{s = \lceil |\js| / 2 \rceil}^{|\js|} 2\exp\left(\frac{|\js| \ln l}{8}\right) \left(\frac{eC(\js)}{l}\right)^s L^{-s} \leq \sum_{s} 2\exp\left(-\frac{11 |\js| \ln l}{40}\right) \leq \exp\left(-\frac{|\js| \ln l}{8}\right)
\end{equation*}
as desired.
\end{proof}

\subsection{Scheduling Algorithm}\label{sec:localrulescheduling}

\Cref{lem:weakroutingprob} gives a probability bound for at least half of the jobs being completed for a fixed set of jobs. We want to apply an union bound over all possible job sets, to show that for a random hash function, with high probability, we can complete at least half of \emph{any} possible job set. However, this is not feasible: there are $|\calS|^{s} \binom{|I|}{s} \leq \exp(s \ln(|\calS||I|))$ possible job sets of size $s$, and the probability that a random hash function guarantees completion of at least half the jobs of a fixed job set of size $s$ is $\exp(-s \ln(l) / 8)$. As the subroutine takes $2Ll$ time steps, directly applying this strategy would lead to a completion time linear in $|\calS| \cdot |I|$, which is not desirable.

Instead, we exploit the subroutines' tolerance to arbitrary starting positions, and for each scale $L$, sample a random set $h_1, \dots, h_k$ of \emph{multiple} hash functions. The algorithm then calls the subroutine consecutively with each hash function. If at least one of the hash functions $h_i$ is good for $\js$ and $L \geq D(\js)$, at least half of the jobs are completed during a subroutine call with that hash function. Notably, this only works because the subroutines are tolerant to arbitrary starting positions. Without that guarantee, after running the weak scheduler for the first time jobs would no longer be on the first edges of their paths, and the preconditions of the subroutine would be violated, thus there would be no guarantees on successfully completing the jobs.

By \Cref{lem:weakroutingprob}, the probability a hash function is not good for a fixed job set $\js$ of size $s$ is at most $\exp(-s \ln(l) / 8)$, thus the probability none of $k$ independent random hash functions is good for $\js$ is at most $\exp(-sk \ln(l) / 8)$. Even with just $k = 8 \ln |\calS| / \ln(l)$, a polylogarithmic amount for polynomial $|\calS|$, this already equals $\exp(-s \ln |\calS|)$. Slightly higher $k$ gives high probability to succeed.

\begin{restatable}{definition}{goodhashset}(Good Set of Hash Functions)\label{def:goodhashset}
For fixed $L$ and $l$, we call a set of hash functions $h_1, \dots, h_k$ $(L, l)$-\textit{good} for a domain sequence set $\calS$ if, for every $\calS$-supported job set $\js$ with $L \geq C(\js) + D(\js)$, at least one $h_i$ is $(L, l)$-good for $\js$.
\end{restatable}

\begin{lemma}\label{lem:iterategoodprob} 
Let $\calS$ be a fixed domain sequence set, and $L \leq 2|M|^c$ and $l \geq 150c \ln |M| / \ln \ln |M|$ be fixed parameters. Then, for $k \geq 8(b + 1) \frac{\ln |\calS| + (c + 1)\ln |M|}{\ln l}$ (for some constant $b$ that will determine the probability of success), the probability a set of $k$ independently generated random hash functions mapping (sequence, identifier) pairs in $\seqset \times I$ to $\{0, \dots, L - 1\}$ is $(L, l)$-good for $\calS$ is at least $1 - 2|M|^{-b}$.
\end{lemma}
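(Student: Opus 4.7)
The strategy is a union bound over all $\calS$-supported job sets $\js$ with $L \geq C(\js)+D(\js)$, combined with the per-job-set failure bound from \Cref{lem:weakroutingprob} and independence across the $k$ hash functions. Concretely, I would proceed as follows.

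First, fix a job set $\js$ with $|\js| \le |M|^c$ and $L \geq C(\js)+D(\js)$. By \Cref{lem:weakroutingprob}, each individual hash function $h_i$ fails to be $(L,l)$-good for $\js$ with probability at most $\exp(-|\js| \ln(l)/8)$. Since the $k$ hash functions are sampled independently, the probability that \emph{none} of $h_1,\dots,h_k$ is good for $\js$ is at most $\exp(-|\js|\, k\, \ln(l)/8)$. With the hypothesis $k \ln(l)/8 \geq (b+1)(\ln|\calS| + (c+1)\ln|M|)$, this is at most $\exp\bigl(-|\js|\,(b+1)(\ln|\calS| + (c+1)\ln|M|)\bigr)$.

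Next, I would bound the number of relevant job sets. Each job is a (sequence, identifier) pair from $\calS \times I$ with $|I| \le |M|^c$, so the number of $\calS$-supported sets of size $s$ is at most $(|\calS|\,|M|^c)^s = \exp\!\bigl(s(\ln|\calS| + c\ln|M|)\bigr)$. Moreover, I only need to union-bound over sets $\js$ with $C(\js) \leq L \leq 2|M|^c$; since every job occupies at least one machine-slot, $|\js| \leq |M|\cdot C(\js) \leq 2|M|^{c+1}$, which bounds the range of $s$.

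Combining these ingredients for a fixed size $s$ and then summing gives
\begin{equation*}
  \sum_{s=1}^{2|M|^{c+1}} \exp\!\bigl(s(\ln|\calS| + c\ln|M|)\bigr)\cdot \exp\!\bigl(-s(b+1)(\ln|\calS| + (c+1)\ln|M|)\bigr) \leq \sum_{s \geq 1} |M|^{-s(b+1)},
\end{equation*}
where the simplification uses $b\ln|\calS| \geq 0$ and $(b+1)(c+1) - c \geq b+1$. This geometric sum is bounded by $2|M|^{-(b+1)} \leq 2|M|^{-b}$, as required. The only real bookkeeping obstacle is keeping the exponents straight in the simplification, and convincing oneself that the restriction $L \geq C(\js)+D(\js)$ in the definition of a good hash-function set automatically bounds $|\js|$ polynomially so that the sum over $s$ terminates. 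No new probabilistic insight is needed beyond \Cref{lem:weakroutingprob}.
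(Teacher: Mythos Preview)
Your proposal is correct and follows essentially the same route as the paper: apply \Cref{lem:weakroutingprob} to each fixed job set, use independence to raise the failure probability to the $k$th power, count $\calS$-supported job sets of each size $s$, and sum. The only cosmetic difference is that the paper does not bother bounding the range of $s$ via $|\js|\le |M|\cdot C(\js)$; it simply observes that the resulting series is geometric with ratio at most $|M|^{-b}<1/2$ (using $|M|\ge 32$) and hence sums to at most $2|M|^{-b}$, so your termination discussion is harmless but unnecessary.
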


\begin{proof}
By \Cref{lem:weakroutingprob}, the probability none of $k$ independent uniform hash functions is good for a fixed $\calS$-supported job set $\js$ is at most $\exp(-\frac{|\js| k}{8} \ln l)$.

For a size $s$, there are at most $|\calS|^{s} \binom{|I|}{s} \leq \exp(s \ln(|\calS| |I|)) = \exp(s (\ln |\calS| + \ln |I|))$ ways to select a $\calS$-supported job set $\js$ with $|\js| = s$. Thus, by a union bound, the probability there exists a set $\js$ for which no $h_i$ is good is at most
\begin{equation*}
    \sum_{s} \exp(s (\ln |\calS| + \ln |I| - \frac{k}{8} \ln l))
\end{equation*}
Since $k \geq 8(b + 1) \frac{\ln |\calS| + (c + 1)\ln |M|}{\ln l} \geq 8(b + 1) \frac{\ln |\calS| + \ln |I| + \ln |M|}{\ln l}$, the exponential is at most $\exp(-bs \ln |M|) \leq |M|^{-bs}$, and as $|M| \geq 32$, the sum is at most $2|M|^{-b}$. Thus, the random set of hash functions is good with probability at least $1 - 2|M|^{-b}$, as desired.
\end{proof}

By the tolerance to arbitrary starting positions, with $L \geq C(\js) + D(\js)$, we can iterate a weak scheduling subroutine a number of times for multiple hash functions to complete at least half of the jobs. Repeating this $\lceil \log |M|^c \rceil$ times (recall the job set has polynomially bounded size $|\js| \leq |M|^c$), all the jobs are guaranteed to be completed. To deal with the dependency on $L$, we start with $L = 4$ and repeatedly double it. Note that the only difference between the stateless and greedy-enabled scheduling algorithms is the subroutine called. The algorithms are stated iteratively purely for convenience, and the constant $b$ controls the success probability in "with high probability".

\alert{TODO: figure out how to make the algorithms look good (no multiline text)}

\begin{algorithm}[h]
  \caption{Stateless Scheduling Algorithm}
  \label{alg:localrulescheduler}
  \begin{algorithmic}[1]
    \Function{StatelessScheduler}{m, $|M|$, $c$}
    \State Let $l := \lceil 150c \ln |M| / \ln \ln |M| \rceil$
    \State Let $k := \lceil 8(b + 1) (2c + 1) \ln |M| / \ln l \rceil$ \Comment{$b$ controls success probability}
    \For{$L \in \{4, 8, \dots, 2^{\lceil \log 2|M|^c \rceil}\}$}
    \State Sample a set of $k$ independent, uniformly random hash functions $h_1, \dots, h_k$ (from $\seqset \times I$ to $\{0, \dots, L - 1\}$) using the shared randomness
    \For{$i \in \{1, \dots, c \lceil \log |M| \rceil + 1\}$}
    \For{$j \in \{1, \dots, k\}$}
    \For{$t \in \{0, \dots, 2Ll - 1\}$}
    \State call $\mathrm{StatelessWeakScheduler}(m, L, l, h_j, t)$
    \EndFor
    \EndFor
    \EndFor
    \EndFor
    \EndFunction
  \end{algorithmic}
\end{algorithm}

\begin{algorithm}[h]
  \caption{Greedy-Enabled Local Rule Scheduling Algorithm}
  \label{alg:greedylocalrulescheduler}
  \begin{algorithmic}[1]
    \Function{GreedyScheduler}{m, $|M|$, $c$}
    \State Let $l := \lceil 150c \ln |M| / \ln \ln |M| \rceil$
    \State Let $k := \lceil 8(b + 1)(2c + 1) \ln |M| / \ln l \rceil$ \Comment{$b$ controls success probability}
    \For{$L \in \{4, 8, \dots, 2^{\lceil \log 2|M|^c \rceil}\}$}
    \State Sample a set of $k$ independent, uniformly random hash functions $h_1, \dots, h_k$ (from $\seqset \times I$ to $\{0, \dots, L - 1\}$) using the shared randomness
    \For{$i \in \{1, \dots, c \lceil \log |M| \rceil + 1\}$}
    \For{$j \in \{1, \dots, k\}$}
    \For{$t \in \{0, \dots, 2Ll - 1\}$}
    \State call $\mathrm{GreedyWeakScheduler}(m, L, l, h_j, t)$
    \EndFor
    \EndFor
    \EndFor
    \EndFor
    \EndFunction
  \end{algorithmic}
\end{algorithm}

% \shyr{I think it would be good to define $c$ in the algorithm for clarity. Also if we can it seems good to try and have this stated before the next section (right now there's a page break thing happening.)} \antti{Yeah, we should be more clear on what $c$ is: the constant defining "polynomially bounded".}

\statelessschedulingexists*

\begin{proof}
The algorithm is indeed local and stateless (as is the subroutine). We'll show that if every sampled set of hash functions (for every $L$) is $(L, l)$-good for $\calS$, it is guaranteed that the algorithm has the desired completion time on \textbf{every} $\calS$-supported job set, and that every sampled set of hash functions is good with high probability. Thus, the algorithm is competitive against an adaptive adversary.

By \Cref{lem:iterategoodprob}, at line 6, the probability the set of $k$ hash functions is $(L, l)$-good for $\calS$ is at least $1 - 2|M|^{-b}$ (as $|\calS| \leq |M|^c$). As there are $\bigO{\log |M|}$ different $L$, every sampled set of hash functions is good with high probability.

Now, assume every sampled set of hash functions is $(L, l)$-good for $\calS$. Fix any $\calS$-supported job set with polynomially bounded size, congestion and dilation (all at most $|M|^c$). Let $L$ be the minimum power of two that satisfies $L \geq C(\js) + D(\js)$. Since the set of hash functions $h_1, \dots, h_k$ sampled for this $L$ on line 5 is $(L, l)$-good for $\calS$, by definition for every $\calS$-supported job set $\js'$ with $L \geq C(\js') + D(\js')$, at least one $h_i$ is good. By \Cref{lem:statelessbadpatterncertifierlemma}, the subroutine with that $h_i$ is guaranteed to complete at least half of $\js'$, regardless of the jobs' starting positions. Thus, each iteration of the loop on line 7 completes at least half of the remaining job set (as the congestion and dilation of the remaining set cannot increase, thus stays below $L$, any subset of a $\calS$-supported job set is $\calS$-supported, and the weak scheduling subroutine supports arbitrary starting positions). As the total size of the job set is at most $|M|^c$, completing half of it $c \lceil \log |M|^c \rceil + 1$ times completes all of it.

As $L$ doubles every iteration of the loop on line 4, the completion time of the scheduling algorithm is dominated by the iteration that completes the job set. This iteration satisfies $L \leq 2(C(\js) + D(\js))$. Thus, assuming every sampled set of hash functions is good, the algorithm on $\js$ has completion time
\begin{equation*}
    \bigO{\log |M| \cdot k \cdot L \cdot l} = L \cdot \poly(c) \cdot \poly(\log(|M|)) = \tildeO(C(\js) + D(\js))
\end{equation*}
as desired.
\end{proof}

The proof of \Cref{thm:greedyschedulingexists} is exactly the same as that of \Cref{thm:statelessschedulingexists}, except we use \Cref{lem:badpatterncertifierlemma} to certify a good hash function guarantees half the jobs are completed, and is omitted. The algorithm has the desired completion time in the greedy-enabled model, as the only claim we care about is that a good hash function results in half of the job set being delivered, regardless of adversary, job starting positions and eligible decisions made inside the weak scheduling algorithm, which is exactly the statement of \Cref{lem:badpatterncertifierlemma}.

% \greedyschedulingexists*

% \begin{proof}
%    In the proof of \Cref{thm:statelessschedulingexists}, only the calls to the weak scheduler where $L \geq C(P) + D(P)$ were considered. When $L$ satisfies this, forwarding arbitrary packets with nonzero state for $L \geq C(P)$ time steps and setting their state to $0$ guarantees every packet is either delivered or has state $0$ afterwards. Thus, the preconditions for the greedy weak scheduler are satisfied. Due to tolerance to arbitrary starting positions, forwarding packets like this does not require modification to the analysis, and the rest of the proof is exactly like the proof of \Cref{thm:statelessschedulingexists}.
% \end{proof}

\subsection{Scheduling with Noise}\label{sec:noisydetsec}

% \antti{do we need to recap here what scheduling with noise is? I think not, we have already done it three times (introduction, formal model, technical overview), and the theorem is explicit about what scheduling with noise means}

Scheduling with noise might at first appear to be significantly harder than regular scheduling. However, remarkably, our weak scheduling subroutines \Cref{alg:weakschedulingsubroutine} and \Cref{alg:greedyweakschedulingsubroutine} extend to scheduling with noise with a very simple modification: to handle a noise level of $\beta$, just scale the small time step count $l$ by $4\beta$. Furthermore, the analysis only requires a small change: to analyse the probability more than three-fourths of a $(\beta, L)$-good subset $\js_{S}$ of jobs is dropped, we divide the job drops involving jobs in the subset to two groups: those where a small number of good jobs were dropped (at most a fourth of the good jobs were dropped in these events in total) and those where a large number were dropped (which we analyze with bad patterns, as before, reapplying the earlier results on good hash functions). The algorithm is presented iteratively for convenience only.

\begin{algorithm}[h]
  \caption{Stateless Noisy Scheduling Algorithm}
  \label{alg:localnoisyscheduler}
  \begin{algorithmic}[1]
    \Function{StatelessNoisyScheduler}{m, $|M|$, $c$, $\beta$, T}
    \State Let $L$ be the minimum power of two at least $T$
    \State Let $l := \lceil 150c \ln |M| / \ln \ln |M| \rceil$
    \State Let $l' := 4\beta l$
    \State Let $k := \lceil 8(b + 1)(2c + 1) \ln |M| / \ln l \rceil$ \Comment{$b$ controls success probability}
    \State Sample a set of $k$ independent, uniformly random hash functions $h_1, \dots, h_k$ (from $\seqset \times I$ to $\{0, \dots, L - 1\}$) using the shared randomness
    \For{$j \in \{1, \dots, k\}$}
    \For{$t \in \{0, \dots, 2Ll' - 1\}$}
    \State call $\mathrm{StatelessWeakScheduler}(m, L, l', h_j, t)$
    \EndFor
    \EndFor
    \EndFunction
  \end{algorithmic}
\end{algorithm}

\begin{lemma}\label{lem:scaleupstillgood}
    Let $\js$ be a $\calS$-supported job set and $\js_{S}$ a $(\beta, T)$-good subset of $\js$. Let the large and small time step counts $L \geq T$ and $l$ be fixed. Let $h$ be a $(L, l)$-good hash function for $\js_{S}$, and let $l' := 4\beta l$. Then, \Cref{alg:weakschedulingsubroutine} completes at least a fourth of $\js_{S}$ in $2Ll'$ time steps, when applied with parameters $L, l', h$ to $\js$. This holds regardless of the jobs' starting positions and decisions made on which eligible jobs to work on.
\end{lemma}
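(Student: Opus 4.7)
The plan is to mirror the noiseless analysis (\Cref{lem:statelessbadpatterncertifierlemma}) but split the drop events by how many $\js_S$-jobs they eliminate, exactly as sketched in \Cref{sec:techoverview-noise}. First, I would set up the basic invariants. By \Cref{lem:schedulingequivalency}, running \Cref{alg:weakschedulingsubroutine} with parameters $(L, l', h)$ from arbitrary starting positions is equivalent to running \Cref{alg:greedyweakschedulingsubroutine} with the same parameters, where the greedy-enabled adversary first pushes jobs to their chosen starting positions and then never pushes again. So I can appeal to the analogue of \Cref{lem:livemeansnotbehindschedule} with $l$ replaced by $l'$: any ``live'' (non-dropped) job stays on or ahead of schedule forever. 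Since $L \geq T \geq \dil(\js_S)$ and every $\jo \in \js_S$ satisfies $\virt(\jo, i) \leq (L-1) + \dil(\js_S) - 1 \leq 2L - 2$, a good job that is never dropped must complete within the $2L$ large time steps, i.e., within $2Ll'$ actual time steps. Thus it suffices to show that at most $\tfrac{3}{4}|\js_S|$ jobs of $\js_S$ are ever dropped.

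Next, I classify drop events. A drop event is a pair $(T, m)$ at which machine $m$ begins large time step $T$ with $|Q| > l'$, so every job in $Q$ is dropped. Call the event \emph{large} if strictly more than $l$ of the dropped jobs belong to $\js_S$, and \emph{small} otherwise. Because each drop event removes strictly more than $l'$ jobs from $\js$ permanently (a behind-schedule job can never re-enter any $Q$-set), the total number of drop events is at most $|\js|/l'$. Consequently, the total number of $\js_S$-jobs eliminated in small events is at most
\begin{equation*}
  \frac{|\js|}{l'} \cdot l \;=\; \frac{|\js|}{4\beta} \;\leq\; \frac{|\js_S|}{4},
\end{equation*}
using the $(\beta, T)$-goodness hypothesis $|\js_S| \geq |\js|/\beta$.

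Finally, I bound the large-event contribution via a bad pattern for $\js_S$ at threshold $l$. For each large event at $(T, m)$, let $B_{T,m}$ be the set of pairs $(\jo, \pos(\jo))$ for those $\jo \in \js_S$ dropped at $(T, m)$; otherwise set $B_{T,m} = \emptyset$. By construction $|B_{T,m}| \in \{0\} \cup (l, |\js_S|]$, each $\jo \in \js_S$ appears in at most one $B_{T,m}$ (a job is dropped at most once), each pair satisfies $\seq(\jo)_i = m$, and $\virt(\jo, i) = T$ since the job sat in $Q$ with virtual time $T$ at the start of that large time step. If more than $|\js_S|/2$ of $\js_S$ were eliminated across large events, then $\sum |B_{T,m}| > |\js_S|/2$, so $\{B_{T,m}\}$ would be a bad pattern for $\js_S$ that \emph{occurs} under $h$, contradicting that $h$ is $(L, l)$-good for $\js_S$ (\Cref{def:goodhash}). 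Combining the two regimes gives at most $|\js_S|/4 + |\js_S|/2 = 3|\js_S|/4$ total drops of $\js_S$, completing the proof.

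The only delicate point, which I expect to be the main obstacle to state carefully, is the interplay between the two thresholds $l$ and $l' = 4\beta l$: the subroutine drops only when $|Q| > l'$, but the goodness hypothesis on $h$ is only at threshold $l$. The scaling $l' = 4\beta l$ is tight for the argument, making the small-event budget $(|\js|/l') \cdot l = |\js|/(4\beta)$ simplify to exactly $|\js_S|/4$, and leaving precisely $|\js_S|/2$ of budget for the bad-pattern argument to absorb via the goodness of $h$ on $\js_S$.
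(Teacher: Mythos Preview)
Your proposal is correct and follows essentially the same argument as the paper: you reduce to bounding the number of $\js_S$-drops, split drop events into small (at most $l$ good jobs) and large (more than $l$), bound the small-event contribution by $(|\js|/l')\cdot l \le |\js_S|/4$, and show that a large-event contribution exceeding $|\js_S|/2$ would constitute an occurring bad pattern for $\js_S$ at threshold $l$, contradicting the $(L,l)$-goodness of $h$. The only cosmetic difference is that the paper phrases the small-event bound via the ratio $l/l'$ of good-to-total jobs per event rather than via the event count, but the arithmetic and conclusion are identical.
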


\begin{proof}
    Since $L \geq T \geq C(\js_S) + D(\js_S)$ as $\js_{S}$ is $(\beta, T)$-good, any jobs in the subset not dropped (defining dropped as in the greedy-enabled algorithm, that is equivalent to the stateless algorithm in the basic (non-greedy-enabled) scheduling model) will be completed. Thus, if less than a fourth of the jobs in $\js_{S}$ are completed, more than three fourths were dropped.
    
    Now, assume the contrary of the lemma: that it is possible that more than three fourths of the jobs in $\js_S$ are dropped during the $2Ll'$ time steps. We divide the job drop events into two groups: those where at most $l$ jobs in $\js_S$ were dropped (small events), and those where strictly more than $l$ were dropped (large events).

    First, note that the total number of jobs in $\js_S$ dropped in small events is at most $|\js| / 4$: every small event, at least $l' = 4\beta l$ jobs are dropped, of which at most $l$ are in $\js_S$, but at most $|\js| \leq \beta |\js_S|$ jobs may be dropped during the algorithm (as no job can be dropped more than once). Thus, small events result in at most
    \begin{equation*}
        \frac{|\js| l}{l'} \leq \frac{\beta |\js_S| l}{4 \beta l} = \frac{|\js_S|}{4}
    \end{equation*}
    jobs in $\js_S$ being dropped, as desired.

    Now, consider the large events. Define the sets $B_{T, m}$ such that if a job $\jo \in \js_S$ is dropped at position $i = \pos(\jo)_t$ at the start of large time step $T = t / l$ as part of a large event, then $(\jo, i) \in B_{T, m}$. We claim this $B$ is a bad pattern that occurs with $h$, contradicting $h$ being $(L, l)$-good for $S$. Indeed,
    \begin{itemize}
        \item $\seq(\js)_i = m$ for all $(\jo, i) \in B_{T, m}$ as jobs can only be dropped at the machine they are at,
        \item any $\jo \in \js_S$ occurs at most once in $\sqcup B_{T, m}$ as no job is dropped more than once,
        \item the set sizes satisfy $|B_{T, m}| \in \{0\} \cup (l, |\js_S|]$ and $\sum |B_{T, m}| > |\js_S| / 2$, as more than 3/4 of jobs in $\js_S$ are dropped and small events cause at most 1/4 of those jobs to drop
        \item $\virt(\jo, i) = T$ for all $(\jo, i) \in B_{T, m}$, as that is a requirement for a job to drop with position $\pos(\jo)_t = i$ during large time step $T$ at machine $\seq(\jo)_i = m$.
    \end{itemize}
    Assuming more than three fourths of the jobs in $\js_S$ were dropped leads to a contradiction, thus the subroutine is guaranteed to complete at least a fourth of the jobs in $\js_{S}$ in $2Ll'$ time steps, as desired.
\end{proof}

\noisyschedulingexists*

\begin{proof}
The algorithm is indeed local and stateless (as the subroutine is). We'll show that if every sampled set of hash functions (for every scale $L$) is $(L, l)$-good for $\calS$, it is guaranteed that the algorithm satisfies the desired property for \textbf{every} $\calS$-supported job set $\js$, and that every sampled set of hash functions is good with high probability. Thus, the algorithm is competitive against an adaptive adversary.

By \Cref{lem:iterategoodprob}, at line 6, the probability the set of $k$ hash functions is $(L, l)$-good for $\calS$ is at least $1 - 2|M|^{-b}$ (as $\log |\calS| \leq c \log |M|$). As there are $\bigO{\log |M|}$ different $L$, every sampled set of hash functions is good with high probability.

Now, assume for every $L$ the sampled set of hash functions is $(L, l)$-good. Take any $\calS$-supported job set $J$, noise level $\beta$ and polynomial completion time bound $T$, and let $L$ be the minimum power of two at least $T$. For every $(\beta, T)$-good job subset, there is then a hash function $h_i$ that is $(L, l)$-good for $J$. By \Cref{lem:scaleupstillgood}, the subroutine call with that $h_i$ is guaranteed to complete at least a fourth of the jobs in the subset, regardless of the starting positions of the jobs and decisions made on which eligible jobs to work on. Thus, calling the subroutine consecutively with each of the hash functions guarantees at least a fourth of \textbf{every} $(\beta, T)$-good subset of $\js$ is completed. Finally, this is clearly achieved in $\beta L \cdot \poly \log |M| = \tildeO(\beta T)$ time steps, as desired.
\end{proof}

A greedy-enabled version of the algorithm works and can be proven to work the exact same way, with only the subroutine call being replaced with one to \Cref{alg:greedyweakschedulingsubroutine}.

\alert{TODO: write something more about the scaling.}

Note that power-of-two scaling on $\beta L$, then on $\beta$ can remove the dependence on the two input parameters, giving completion time $\tildeO(\beta L)$. The result is stated without scaling, as this is the version of noisy scheduling required for our routing result.

\section{Local Routing through Semi-Oblivious Path Sets}\label{sec:routing-semi-obl}

\subsection{Routing Model}\label{sec:routing-model}

We define graphs and paths as usual, except that for convenience, instead of undirected graphs, we work with \textbf{reciprocal} graphs. Instead of an undirected edge, reciprocal graphs contain one edge in each direction.

\textbf{Graphs.} We denote graphs by $G = (V, E)$ where $V$ is the set of nodes and $E$ the set of directed edges. We require graphs to be \textbf{reciprocal}: each edge corresponds to an edge in the other direction. Formally, each edge $e = (u, v)$ is directed from its \textbf{head} $u$ to its \textbf{tail} $v$, and has a corresponding edge $e' = \rev(e) = (v, u)$ in the other direction (with $\rev(\rev(e)) = e$). The edge set is allowed to contain parallel edges and self-loops, but has to have polynomially bounded size.

\textbf{Paths.} A path $p$ in $G$ is a sequence of directed edges $( e_0, e_1, e_2, \ldots, e_{\len(p) - 1} )$ that are adjacent in $G$, i.e., the tail of $e_i$ matches the head of $e_{i+1}$. A path is called \textbf{simple} if no edge appears twice: $e_i \neq e_j$ for all $i \neq j$. The $i$th edge of path $p$ is denoted by $p_i = e_i$. % We denote the \textbf{length}, the number of edges in the path $p$, by $\len(p)$. 

With this, we can define the routing model:

\begin{definition}[Routing Model]
  In the routing model on a reciprocal graph $G = (V,E)$, an algorithm (with knowledge of $G$) is given a set of packets, each with a source and destination. Each edge can pass one packet over each time step, and the goal is to deliver every packet to its destination as quickly as possible.
\end{definition}

We again define some notation for this model: For an $\pa \in \ps$, let $\src(\pa)$ and $\tar(\pa)$ denote the source and destination of $\pa$. To help determinstic algorithms differentiate between packets with the same source and destination, we assume each packet also has a unique identifier $\ind(\pa) \in \mathcal{N}$ that is polynomially bounded in $n$. We let $\pos(\pa)_t \in V$ denote the position of the packet at time $t$. Additionally, every node $v \in V$ has a \textbf{queue}, $\que(v)$, that contains the packets at that node: $\que(v)_t := \{\pa \in \ps : \pos(\pa)_t = v\}$ at time step $t$. Finally, we denote the offline optimal completion time on an instance $\ps$ by $\OPT(\ps)$ and the completion time of a particular algorithm $A$ by $\compl(A,\ps)$.

As with scheduling, we are only interested in \textbf{local} routing algorithms, and further work in an even more restrictive, \textbf{local stateless} routing model: each node must decide which packet to forward over its adjacent edges based \emph{only} on the current time step and the packets in its queue.

\textbf{Known Graph Topology.} Note that in our routing model the only communication between nodes or edges is done via packet forwarding (unlike, say CONGEST or LOCAL). In such a model, one \emph{must} assume that the topology is known (or have some similar assumption). If no additional assumption is made about the graph topology or the packet set, efficient routing is not possible: set the underlying graph topology to be a tree of depth $2$ with branching factor $\sqrt{n}$, and consider a packet set with $\sqrt{n}$ packets: for each of the root's $\sqrt{n}$ subtrees, there is a packet to a leaf in that subtree. Offline optimal routing delivers the packets in $2$ time steps, however, any routing algorithm in CONGEST must take $\tilde{\Omega}(\sqrt{n})$ time steps to deliver all the packets, as to send the correct packet to a subtree, the intersection of the packets' destination node identifier set and the node identifier set of the subtree must be computed; this is a harder problem than even computing if the sets are disjoint, which has communication complexity of $\Omega(k)$ bits (when both sets have size $k$) \cite{BabaiFS86commcomplexity}.

Because of this issue, it is typically assumed in the literature that either $(1)$ the graph topology is known, or $(2)$ for every packet with source $s$ and destination $t$, there is a packet with source $t$ and destination $t$. In our model, we make the former assumption --- that all nodes know the graph topology. Of course, one can allow arbitrary coordination between nodes like in the CONGEST model, and we leave such questions to future work.

\textbf{Demands.} While the routing model does not involve demands, a demand naturally induces an instance of the routing problem, with a packet $\pa_i$ with source $\src(\pa_i) := s_i$ destination $\tar(\pa_i) := t_i$ and some polynomially bounded identifier (assumed to be worst case) for every pair $(s_i, t_i) \in d$. We write $\compl(A, d)$ to mean the completion time of an algorithm $A$ on an instance defined this way, and by $\OPT(d)$ the offline optimum completion time of such an instance. We use demands to represent instances in some parts of the paper for notational simplicity.

\subsection{Proof Overview: Combining Local Scheduling with Semi-Oblivious Path Selection}
Finally, we combine the noisy scheduling with sparse semi-oblivious path sets to achieve \Cref{thm:routingexists}. An $\alpha$-sparse $\gamma$-competitive semi-oblivious path set is a collection of up to $\alpha$ paths for every $(s, t)$-pair $\{R_1(s,t), R_2(s, t), \ldots, R_\alpha(s, t)\}$. The path set has the property that for any demand $d = \{(s_i, t_i)\}_{i=1}^k$ (\Cref{def:demand}), there exists a subset $P' \subseteq P$ of the paths $P := \bigcup_{i=1}^k \bigcup_{j=1}^\alpha R_j(s_i, t_i)$ sourced from the semi-oblivious path set. This $P'$ satisfies the demand $d$ and has congestion + dilation at most at most $\gamma$ times the offline optimum. Note that the graph $G$ is fixed and known to the nodes, hence we can use the $\poly(\log n)$-sparse $\poly(\log n)$-competitive semi-oblivious path set developed in the recent paper~\cite{GBA23}.

Our main contribution is conceptual: we note that our scheduling with noise nicely combines with semi-oblivious path set to yield local routing algorithms. Given a demand $d$, our strategy will be to repeatedly route at least half of the demand. To do this, we take the path set $P$ with all up to $\alpha$ $(s, t)$-paths in the semi-oblivious path set for every $(s, t)$ with nonzero demand and use noisy scheduling on this path set. By the competitiveness of the semi-oblivious path set, there exists a low-congestion low-dilation path set $S$ (the signal), and as it contains one $(s, t)$-path for every $(s, t)$ with nonzero demand, we have $|P| \leq \alpha |S|$. As the noise level is polylogarithmic and the congestion and dilation of the signal have polylogarithmic overhead to the offline optimal routing of the demand (as the semi-oblivious routing is $\poly(\log n)$-competitive), half of the packets are delivered with only polylogarithmic overhead to the offline optimum. Repeating this a logarithmic number of times routes the whole demand.

After the repetitions, we achieve strong routing (i.e., we satisfy the entire demand). This is in spite of scheduling with noise being unable to provide such guarantees. The reason behind this contrast is that each time a single packet is delivered between $s$ and $t$, all the other $\alpha - 1$ paths between $s$ and $t$ are removed from $P$.

% We use the paths in this path set as the domain set for noisy scheduling --- this domain set has polynomial size, thus noisy scheduling has only polylogarithmic overhead.

\subsection{Formal Results} 

In this section, we show the following statement:

\begin{restatable}{lemma}{routingexists}\label{thm:routingexists}
Let $G = (V,E)$ be a graph and $\ps$ a set of packets. \Cref{alg:routingwrapper} is a randomised algorithm with shared randomness for the routing problem with the following properties:

\begin{itemize}
    \item The algorithm is local and stateless. That is, each node decides which packets to forward based only on the packets in its queue, the current time step, the graph topology and the shared random bits.
    \item The algorithm can competitively route any packet set $\ps$ against against an adaptive adversary: with high probability over the shared random string, the algorithm is competitive on all packet sets $\ps$ i.e. $\compl(A,\ps) = \tildeO(\OPT(X))$
\end{itemize}
\end{restatable}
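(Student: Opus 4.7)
The plan is to combine the deterministic $\alpha$-sparse $\gamma$-competitive semi-oblivious path selection of \cite{GBA23} (with $\alpha,\gamma = \poly(\log n)$) with the stateless noisy scheduler of \Cref{thm:noisyschedulingexists}. Since every node knows $G$, every node can agree on the polynomial-size collection $\{R_1(s,t),\ldots,R_\alpha(s,t)\}_{s,t \in V}$, which serves as the domain sequence set $\calS$ that the noisy scheduler requires.

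\Cref{alg:routingwrapper} is organized in two nested loops: an outer doubling loop over a completion-time guess $T \in \{1,2,4,\ldots,n^{O(1)}\}$, and an inner loop that, for each $T$, runs $R = O(\log n)$ invocations of the noisy scheduler with parameters $(\beta,T)$ for $\beta := \alpha$. At the start of each inner round, every source $s$ holding an undelivered request $(s,t)$ locally creates $\alpha$ tagged copies of its packet, one per candidate path $R_j(s,t)$; each intermediate node decides the next edge from its knowledge of $G$ together with the tag. After the $\tildeO(T)$ time steps of that round, the synchronous clock triggers every node to flush any copies still in transit, and only requests that are still unsatisfied (tracked by the source via the standard assumption that a sender learns of successful delivery) re-inject copies in the next round.

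For the analysis, fix the smallest doubling scale $T^{*}$ with $T^{*} \ge \gamma \cdot \OPT(\ps)$, so $T^{*} = \tildeO(\OPT(\ps))$. Consider any inner round at scale $T^{*}$ with remaining request set $d_{\mathrm{rem}}$, and let $P$ be the union of all $\alpha$ candidate paths for pairs in $d_{\mathrm{rem}}$. The semi-oblivious guarantee produces a subset $S \subseteq P$ satisfying $d_{\mathrm{rem}}$ with $C(S) + D(S) \le \gamma \cdot \OPT(d_{\mathrm{rem}}) \le T^{*}$; since $S$ contains at least one path per request, $|P| \le \alpha |S| = \beta |S|$, so $S$ is $(\beta, T^{*})$-good. \Cref{thm:noisyschedulingexists} therefore delivers at least $|S|/4 \ge |d_{\mathrm{rem}}|/4$ of the signal paths, satisfying at least a quarter of the remaining requests. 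Hence $O(\log n)$ inner rounds at scale $T^{*}$ clear the entire demand, and summing $\tildeO(T)$ over all scales up to $T^{*}$ yields overall completion time $\tildeO(T^{*}) = \tildeO(\OPT(\ps))$.

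The main obstacle I expect is reconciling the adaptive-adversary guarantee with the multi-invocation structure: the job set fed into invocation $i{+}1$ depends on the random bits of invocations $1,\ldots,i$. I would resolve this by allocating a fresh disjoint slice of the shared random string to each of the $O(\log^2 n)$ noisy-scheduler invocations; for each fixed slice the adaptive-adversary guarantee of \Cref{thm:noisyschedulingexists} holds uniformly over every $\calS$-supported job set, so a union bound over the invocations suffices. A secondary technicality is respecting the stateless routing model: the per-packet copy-tag and path index are attached at injection time by the source (who knows $G$) and never modified in transit, while the synchronous clock drives flushing and re-injection; the $\tildeO(1)$ copies per request, the $O(\log^2 n)$ phases, and the polynomial size of $\calS$ all remain within the allowed $\poly(\log n)$ overhead.
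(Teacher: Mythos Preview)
Your high-level plan---semi-oblivious paths plus the noisy scheduler, with doubling on $T$ and $O(\log n)$ inner rounds---matches the paper's architecture, but two steps you treat as ``technicalities'' are exactly where the real work lies, and the paper handles them quite differently.

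\textbf{Cloning and tagging are not allowed.} You have each source create $\alpha$ tagged copies and attach an immutable path-index to each. In the stateless routing model of \Cref{sec:routing-model} packets carry only $(\src,\tar,\ind)$ and cannot be duplicated; the paper explicitly discusses the cloning approach as valid only in a \emph{relaxed} model and then says: ``To adapt the above to the more restrictive model where packets cannot be cloned and have no state, instead of cloning the packets, we first apply noisy scheduling to packets with their first candidate paths, then to packets not yet delivered with their second candidate paths, and so on.'' Concretely, \Cref{alg:semioblrouting} loops $j=1,\dots,\alpha$ and in iteration $j$ assigns each undelivered packet the single path $P(\src,\tar)_j$; the index $j$ is recovered from the global clock, not from a tag. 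This changes the analysis: you must argue that the indices $j$ with $2\alpha|S_j|\ge|P_j|$ cover at least half of the signal, so that the $\alpha$ sequential calls together deliver a constant fraction---not that one call on the whole $P$ does.

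\textbf{``Flush and re-inject'' and ``sender learns of delivery'' are not primitives of the model.} After a failed round the physical packet is stranded mid-path, not sitting at its source; there is no standard assumption that the source is notified. The paper solves this with \emph{scheduling with return} (\Cref{sec:returnscheduling}): the weak scheduler is extended so that during a second half of each phase every undelivered packet retraces its steps back to its source (\Cref{lem:identicalbehaviourlemma}, \Cref{lem:returnnoisyschedulingworks}). This is what restores the precondition ``every packet is at its source'' before the next $j$-iteration and avoids any feedback channel or packet state.

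A smaller omission: the semi-oblivious guarantee is stated only for \zeroonedemands, so the paper first passes to a virtual graph $G'$ (adding $n^c$ pendant leaves per node and redirecting packet $\pa$ to $\tar(\pa)^{\ind(\pa)}$) to make every demand a \zeroonedemand; your write-up does not address how multiple packets with the same $(s,t)$ are handled.
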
  

By combining this result with the semi-oblivious path-selection strategy, there exists a good set of seeds for which the above algorithm always suceeds:

\introRouting*
\begin{proof}The claim follows directly from \Cref{thm:routingexists} by fixing the random bits for which the statement is true for all demands.\end{proof}

% \deterministicroutingexists

% \detroutingexists*

We prove \Cref{thm:routingexists} through a combination of noisy scheduling with semi-oblivious path sets. A path set is simply a collection of sets of paths between each node pair in the graph. A semi-oblivious path set is a path set with additional competitiveness properties. In addition, we are interested in \textit{sparse} path sets, where the number of paths between any pair of nodes is upper bounded by a parameter $\alpha$.

\begin{definition}[Path Set]\label{def:path-system}
A \textit{path set} $\mathcal{P} = \{P(s, t)\}_{s, t \in V}$ is a collection of sets $P(s, t)$ of simple paths with endpoints $s$ and $t$, for every node pair $(s, t)$. We say a path set $\mathcal{P}$ is \textit{$\alpha$-sparse} if $|P(s, t)| \leq \alpha$ for all $(s, t)$.
\end{definition}

\begin{definition}[Semi-Oblivious Path Set]
An $\alpha$-sparse $\gamma$-competitive \textit{semi-oblivious path set} $\calP$ is an $\alpha$-sparse path set, such that for every \zeroonedemand $d$, $\calP$ contains routing paths with congestion plus dilation at most $\gamma$ times the optimum: there exists a set of paths $S$ containing exactly one path from $P(s, t)$ per $(s, t) \in d$ such that $C(S) + D(S) \leq \gamma \OPT(d)$.
\end{definition}

Note that semi-oblivious path sets can be defined more generally, but in this paper, we are only interested in semi-oblivious path sets that are integral and completion-time competitive on \zeroonedemands. The above definition reflects this, omitting the quantifiers.

The recent sparse semi-oblivious path sets of \cite{GBA23} achieve polylogarithmic $\alpha$ and $\gamma$. Further, the existence is shown through proving a sampling not only is such a semi-oblivious path set with nonzero probability, but with high probability:
\begin{lemma}[Lemma 2.8 of \cite{GBA23}]\label{lem:bga23lemma28}
On every graph $G$ with $n$ nodes and a polynomially-bounded number of edges, there exists a $\bigO{\log^2 n}$-sparse $\poly(\log n)$-competitive semi-oblivious path set. Moreover, there exists an efficient, randomized algorithm that outputs such a routing with high probability.
\end{lemma}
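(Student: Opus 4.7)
The plan is to build the semi-oblivious path set on top of the hop-constrained oblivious path selection of \cite{GhaffariHZ21}, which provides, for each ordered pair $(s,t)$, an efficiently samplable distribution $\mu_{s,t}$ over $(s,t)$-paths with the following guarantee: for any $\{0,1\}$-demand $d$ (\Cref{def:demand}), drawing one path per pair yields a random path set $S^*$ with $\mathbb{E}[C(S^*)] = \poly(\log n) \cdot \OPT(d)$ and worst-case $D(S^*) = \poly(\log n) \cdot \OPT(d)$. First I would independently draw $\alpha = \Theta(\log^2 n)$ candidate paths per ordered pair $(s,t)$ from $\mu_{s,t}$; these candidates form the path set $\calP = \{R_j(s,t)\}_{s,t,j}$, which is $O(\log^2 n)$-sparse by construction.

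To verify $\poly(\log n)$-competitiveness, fix a $\{0,1\}$-demand $d$ and consider the ``uniform-over-candidates'' selection: for each $(s,t) \in d$, pick $j \in [\alpha]$ uniformly and output the candidate $R_j(s,t)$. Since each $R_j(s,t)$ was drawn from $\mu_{s,t}$, this selection has the same marginal distribution as a fresh hop-constrained sample; hence its expected $C + D$ is $\poly(\log n) \cdot \OPT(d)$. A Markov/Chernoff argument then shows that with probability exponentially small in $|d|$, \emph{no} selection among the sampled candidates achieves congestion+dilation $\poly(\log n) \cdot \OPT(d)$.

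The main obstacle, exactly as in \Cref{sec:weakdetscheduling}, is that a naive union bound over all $\{0,1\}$-demands fails since there are $2^{\Theta(n^2)}$ of them. I would mirror the bad-pattern strategy: for a fixed demand of size $s$, failure probability is at most $\exp(-\Omega(s \log n))$, whereas the number of $\{0,1\}$-demands of size $s$ is at most $\binom{n^2}{s} \le \exp(s \cdot 2 \ln n)$, so choosing the hidden constants in $\alpha = \Theta(\log^2 n)$ and in the Chernoff threshold large enough closes the union bound. Discretizing $\OPT(d)$ into $O(\log n)$ dyadic scales (and, if needed, the demand size as well) adds only another $\poly(\log n)$ factor; this costs us only constants in the exponent. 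The hardest part of the argument is engineering the competitiveness threshold carefully so that the exponential-in-$s$ savings from Chernoff dominate the exponential-in-$s$ growth of the demand count at every scale simultaneously.

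For the algorithmic (second) part, note that the existence proof is already constructive: the $\mu_{s,t}$ distributions can be sampled in polynomial time via \cite{GhaffariHZ21}, and $O(\log^2 n)$ independent samples per pair give a polynomial-time Monte Carlo construction that succeeds with high probability. Thus both conclusions of the lemma follow from the same probabilistic argument, and the efficient randomized algorithm is simply ``sample $\alpha$ paths per pair from the hop-constrained distribution and output the resulting $\calP$.''
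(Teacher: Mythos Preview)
This lemma is not proved in the present paper; it is quoted verbatim from \cite{GBA23} and used as a black box. So there is no in-paper proof to compare against. That said, your sketch does match the approach of \cite{GBA23}: sample $\alpha = \Theta(\log^2 n)$ paths per pair from the hop-constrained oblivious routing of \cite{GhaffariHZ21} (one $\log n$ factor for the dyadic hop scales, one for the sparsification), show that for each fixed demand the probability that no good selection exists is exponentially small in the demand size, and union-bound over demands of each size. The algorithmic conclusion is indeed just ``sample and output,'' as you say.

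One point of confusion in your write-up: you first claim that ``a Markov/Chernoff argument'' yields failure probability exponentially small in $|d|$, and then separately invoke the bad-pattern idea as if it were only needed for the union bound. That is backwards. Markov on $\E[C+D]$ gives failure probability at most $1/2$ for a single fresh sample, and congestion is a maximum over edges so plain Chernoff does not directly apply; neither yields $\exp(-\Omega(|d|\log n))$. Getting that exponent \emph{is} the crux, and it is exactly where the bad-pattern/weak-success machinery enters: one relaxes to routing a constant fraction of the demand, shows a single sample weakly fails with probability $\exp(-\Omega(|d|))$ via an overcounting of overcongested edges, and then chains $\Theta(\log n)$ independent samples. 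This is precisely what \cite{GBA23} does (and what the paper alludes to in the comparison paragraph of \Cref{sec:related-work}). Once the roles are straightened out, the rest of your outline---dyadic scales for $\OPT(d)$, the size-stratified union bound, polynomial-time sampling---is correct.
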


With access to a sparse and competitive semi-oblivious path set, we can route \zeroonedemands competitively:
\begin{restatable}{proposition}{noisyschedulesemiobv}\label{thm:noisyschedulesemiobv}
    Let $G$ be a graph and $\calP$ be a fixed $\alpha$-sparse $\gamma$-competitive semi-oblivious path set on $G$. Then, \Cref{alg:semioblrouting} (parameterized by $\calP$) is a randomised local stateless routing algorithm with shared randomness that is $\tildeO(\alpha^2 \gamma)$-competitive on all \zeroonedemands against an adaptive adversary: with high probability over the shared random string, $\compl(A, d) \leq \OPT(d) \cdot \alpha^2 \gamma \cdot \poly(\log(n))$ for all \zeroonedemands d.
\end{restatable}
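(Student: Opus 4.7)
The plan is to reduce routing to noisy scheduling: for any $\{0,1\}$-demand $d$, the union $P := \bigcup_{(s,t) \in d} P(s,t)$ of candidate paths drawn from the semi-oblivious set satisfies $|P| \le \alpha |d|$, while the semi-oblivious guarantee furnishes a hidden ``signal'' subset $S \subseteq P$ containing exactly one $(s,t)$-path per pair of $d$ and satisfying $C(S) + D(S) \le \gamma \cdot \OPT(d)$. Since $|S| = |d|$ and $|P| \le \alpha|S|$, the subset $S$ is an $(\alpha, \gamma \OPT(d))$-good subset of $P$ in the precise sense required by \Cref{thm:noisyschedulingexists}. Invoking the noisy scheduler with noise level $\beta = \alpha$ and time bound $T = \gamma \OPT(d)$ will therefore complete at least $\tfrac{1}{4}|S|$ paths of $S$, i.e.\ satisfy a quarter of the pending pairs, in $\tildeO(\alpha \gamma \OPT(d))$ time steps.

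Building on this, \Cref{alg:semioblrouting} will proceed in phases. At the start of each phase, every unsatisfied source $s$ will inject one packet along each candidate path in $P(s,t)$ for its still-pending destination $t$, and then the noisy scheduler will be run on the resulting path set with $\beta = \alpha$ and a guess for $T$. Because one phase satisfies a constant fraction of the residual demand, $O(\log n)$ phases will suffice to complete all of $d$. Moreover, each residual demand $d'$ is itself a $\{0,1\}$-demand with $\OPT(d') \le \OPT(d)$, so the semi-oblivious guarantee continues to supply a valid signal of unchanged quality at every phase.

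Since $\OPT(d)$ is unknown, I would have each phase iterate the noisy scheduler over doubled guesses $T = 1, 2, 4, \dots$; the geometric sum is dominated by the first sufficient guess $T = \Theta(\gamma \OPT(d))$, contributing $\tildeO(\alpha \gamma \OPT(d))$ per phase, and summing over $O(\log n)$ phases together with the losses from pair-bookkeeping across phases will yield the claimed $\OPT(d) \cdot \alpha^2 \gamma \cdot \poly(\log n)$ completion time. For the adaptive-adversary clause, I will appeal directly to \Cref{thm:noisyschedulingexists}, which fixes a single shared random string that is good with high probability for \emph{every} $\calS$-supported job set simultaneously; the relevant domain here is $\calS = \bigcup_{s,t} P(s,t)$, which is polynomial in $n$ because $\calP$ is $\poly(\log n)$-sparse on at most $n^2$ node pairs.

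The main obstacle is conceptual rather than computational: I must verify that the signal-to-noise ratio does not deteriorate across phases. This will require each source, once a pair $(s,t)$ has been satisfied, to stop injecting the remaining $\alpha - 1$ candidate paths for that pair in all future phases; otherwise leftover noise accumulates while the signal shrinks, breaking the $\beta$-goodness condition and invalidating the per-phase guarantee. This bookkeeping is compatible with local stateless routing, since ``stateless'' refers to the absence of mutable state on \emph{packets}, while the routing rule at a node may still depend on the node itself, the graph $G$, and the fixed semi-oblivious set $\calP$.
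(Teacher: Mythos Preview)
Your high-level picture---use the semi-oblivious guarantee to identify a hidden signal $S$ with $C(S)+D(S)\le \gamma\OPT(d)$ and invoke noisy scheduling to complete a constant fraction of it---matches the paper. However, the concrete mechanism you propose does not match \Cref{alg:semioblrouting} and would not work in the stateless routing model.

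The central gap is your phase that ``injects one packet along each candidate path in $P(s,t)$''. In the routing model there is exactly one packet per $(s,t)$ pair and packets cannot be duplicated; cloning is explicitly outside the model. \Cref{alg:semioblrouting} therefore does \emph{not} run all $\alpha$ candidates simultaneously. Instead it loops over $j=1,\dots,\alpha$ and, in the $j$th call, assigns the single packet for $(s,t)$ the path $P(s,t)_j$. This changes the analysis: with $P_j$ the set of $j$th candidate paths for the remaining pairs and $S_j:=S\cap P_j$, the noise level $|P_j|/|S_j|$ is \emph{not} bounded by $\alpha$ for every $j$. The paper handles this by splitting indices into $J:=\{j:2\alpha|S_j|\ge |P_j|\}$ and its complement; since $\sum_{j\notin J}|S_j|\le \sum_j |P_j|/(2\alpha)\le |S|/2$, the indices in $J$ carry at least half the signal, and for those $j$ the noisy scheduler with $\beta=2\alpha$ delivers a quarter of $S_j$. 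One sweep over all $j$ thus delivers at least $|S|/8$ pairs, and $O(\log n)$ sweeps finish. The second factor of $\alpha$ in $\alpha^2\gamma$ is precisely the $\alpha$ calls per sweep---not ``pair-bookkeeping'', which contributes nothing to the asymptotics.

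Two further issues you do not address. First, after a failed call the packet sits mid-path, not at its source; the next call (with a different $j$) needs it back at $\src(\pa)$. The paper resolves this with the \emph{return} variant of the scheduler (\Cref{alg:returnnoisyscheduler}, \Cref{lem:returnnoisyschedulingworks}), which guarantees every undelivered packet ends at the first node of its path. Second, your bookkeeping argument is shakier than stated: both packets and nodes are stateless here, so a source cannot ``remember'' a delivery. With the return mechanism this is moot---the source's knowledge is encoded in whether the packet is present in its queue---but your cloning scheme would also need a feedback round for the source to learn that some clone arrived, which you have not accounted for.
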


Combining \Cref{thm:noisyschedulesemiobv} with \Cref{lem:bga23lemma28}, we achieve deterministic polylog-overhead routing for \zeroonedemands. \Cref{alg:semioblrouting} and \Cref{thm:noisyschedulesemiobv} are covered in the next subsection, but before that, we still need to generalise to arbitrary demands. This is done through an application of the pair to a simulated, modified graph, where any arbitrary demand in the original graph maps to a \zeroonedemand.

\begin{algorithm}[h]
    \caption{Local stateless routing algorithm}
    \label{alg:routingwrapper}
    \begin{algorithmic}[1]
%    \State \textbf{Global Inputs:} $\calP$ is an $\alpha$-sparse semi-oblivious path set of node-simple paths
%    \State \textbf{Abuse of Model:} We state the algorithm iteratively for convenience only, it is memoryless.
    \Function{Route}{$v, G = (V,E)$}
    \State Create a virtual graph $G'$ by copying $G$ and for each node $b \in V(G)$ adding $n^c$ nodes $b^1, ..., b^{n^c}$ connected to $b$ with a single edge, where $n^c$ is a bound on the size of packet identifiers
    \State Sample a semi-oblivious routing $\calP'$ of $G'$ using the shared randomness via \Cref{lem:bga23lemma28}
    \State Call $\mathrm{SemiOblRouter(v, \calP')}$, where the original packet $\pa$ maps to the virtual packet $\pa'$ with source $\src(\pa)$, destination $\tar(\pa)^{\ind(\pa)}$ and identifier $\ind(\pa)$.
    \EndFunction
    \end{algorithmic}
\end{algorithm}

\begin{proof} (of \Cref{thm:routingexists}).
Note that we can assume that the paths in $\calP'$ are simple and thus only use edges in $G$ with the exception of their final edge. Moreover, by the guarantee of Lemma \ref{lem:bga23lemma28}, we have that $\calP'$ is a $\bigO{\log^2 n}$-sparse, $\poly(\log(n))$-competitive semi-oblivious path set with high probability. Since the transformation of packets maps any demand to a $\{0,1\}$ demand in $G'$, it follows that with high probability, \Cref{alg:semioblrouting} successfully routes any demand $d$ with completion time at most $\OPT_{G'}(d) \cdot \poly(\log(|(V(G')|))$. But we are now done as the optimal completion time in $G'$ is one larger than that in $G$ and the number of nodes in $G'$ is only polynomial larger than the number in $G$.

% We define a graph $G'$ that is like $G$, except with additional, simulated destination nodes: for every node $b$, there are $n^c$ corresponding simulated nodes $t_{b, i}$, each connected to $b$ with a single edge. A packet $\pa$ with source $a = \src(\pa)$, sink $b = \tar(\pa)$ and unique identifier $i = \ind(\pa)$ is then mapped to a virtual packet $\pa'$ with source $a$, sink $t_{b, i}$ and identifier $i$. Now, any possible demand is mapped to a \zeroonedemand in the virtual graph.

% We can simulate algorithm \Cref{alg:semioblrouting} with these packets and a $\poly(\log n)$-sparse $\poly(\log n)$-competitive semi-oblivious routing $\calP'$ on $G'$, except that a simulated packet $\pa'$ corresponding to packet $\pa$ with destination $b$ won't be delivered all the way to $t_{b, i}$, instead already being delivered when reaching $b$.

% The optimal completion time on the modified graph is exactly $1$ higher than the optimal completion time on the original graph. The number of nodes is only polynomially higher, thus polylogarithmic factors remain polylogarithmic, and the resulting algorithm is polylogarithmically competitive.
\end{proof}

\subsection{The Routing Algorithm}

% \goran{change the demand to be a set of pairs .. so we can use always the same definition 1.2}

In this subsection, we present \Cref{alg:semioblrouting} and prove \Cref{thm:noisyschedulesemiobv}.

Before describing the algorithm, consider first a more relaxed model, where packets can be duplicated and have a state with a logarithmic number of modifiable bits. The goal is to deliver at least one clone of each packet.

We have a \zeroonedemand $d$, and for every $(a, b)$-pair in the demand, a packet with source $a$ and destination $b$ that starts at $a$. We will accomplish our delivery goal by repeatedly delivering a constant fraction of packets for which no clone has been delivered yet. Combining an $\alpha$-sparse $\gamma$-competitive semi-oblivious route selection $\calP$ with noisy scheduling, this can be done as follows:
\begin{itemize}
    \item Initially, each node holds the packets that have it as the source. For a packet $\pa$ from $a$ to $b$, $a$ creates $\alpha$ copies of $\pa$, one for each $(a, b)$-path in the semi-oblivious route selection, labeling the copies with integers $1$ to $\alpha$. The copy with label $i$ is assigned path $P(a, b)_i$. Let $P$ be the set of all paths assigned to packets.

    The $\gamma$-competitive semi-oblivious routing guarantees that for any \zeroonedemand $d$, there exists a set $S \subseteq P$ of paths in $\calP$ that routes $d$ and has congestion and dilation at most $\gamma$ times that of the optimal routing of $D$ i.e. $[C(S) + D(S)] \leq \gamma \OPT(d)$. This set $S$ is unknown to individual nodes, but we only need to know that it exists: noisy scheduling guarantees at least a fourth of \emph{any} low-congestion low-dilation subset of large enough size gets through. Since we created a packet for \emph{every} $(a, b)$-path for every source-sink pair $(a, b)$ with nonzero demand, there is guaranteed to be a clone with the path in $S$.
    
    Recall that our scheduling algorithms required knowledge of a \textit{candidate path set}. As we know that every packet has a path from the semi-oblivious path set, we can pass the union of paths in the path set for the candidate path set. This path set has size at most $\alpha n^2$, which is polynomial.
    \item Noisy scheduling is run on the $\alpha |d|$ packets we created, with the assigned paths. Since $S$ contains one path for every node pair in the demand, and we created $\alpha$ copies of each packet, the noise level is $\alpha$. Thus, in $[C(S) + D(S)] \cdot \alpha \cdot \poly \log n \leq \alpha \gamma \cdot \OPT(d) \cdot \poly \log n$ time steps, for at least a fourth of the packets, at least one clone arrived.
    \item While a constant fraction of packets have had a clone arrive, the source nodes don't know this yet. We can apply our stateless scheduling algorithm to send feedback packets from the destinations of delivered packets to their sources, along the same paths the delivered packets used. This path set is guaranteed to be low congestion and dilation, so stateless scheduling may be used, and all feedback packets get delivered in $\alpha \gamma \cdot \OPT(d) \cdot \poly \log n$ time steps. When the source of a packet receives a feedback version of that packet, it writes to its original version of that packet the "delivered" state, at which point that packet no longer affects the algorithm. A constant fraction of packets have been delivered and eliminated!
\end{itemize}

To adapt the above to the more restrictive model where packets cannot be cloned and have no state, instead of cloning the packets, we first apply noisy scheduling to packets with their first candidate paths, then to packets not yet delivered with their second candidate paths, and so on, until all candidate paths have been attempted.

Let $P_i$ be the set of $i$th paths $P(a, b)_i$ between node pairs $(a, b)$ in the remaining demand, and $S_i = P_i \cap S$ the $i$th paths in the signal. For the $i$ such that $2 \alpha |S_i| \geq |P_i|$, the noise level is at most $2\alpha$ and the congestion and dilation of the partial signal $S_i$ are at most those of $S$, thus at most $\gamma \cdot \OPT(d)$: for these $i$, noisy scheduling on $P_i$ can efficiently deliver at least a fourth of the packets in $S_i$. The total size of the sets $S_j$ for which $2 \alpha |S_j| < |P_j|$ is at most $\sum_j |P_j| / (2 \alpha) = |S| / 2$, thus the total size of those with $2 \alpha |S_i| \geq |P_i|$ is at least $|S| / 2$, and at least $|S| / 8$ packets are delivered. Notably, there are no issues with packets being delivered before the time step their path is in the signal thanks to the tolerance to arbitrary starting positions of the algorithm (starting delivered!).

The above analysis doesn't consider one detail: packets that fail to be delivered during noisy scheduling don't magically appear back at their source node. This is resolved through applying \textit{scheduling with return}, scheduling where packets not delivered during weak scheduling are guaranteed to return to their starting node. No new ideas are required to modify \Cref{alg:localrulescheduler} to support scheduling with return in the routing model, and using this modified subroutine in the noisy scheduling algorithm has that subroutine trivially support scheduling with return too. Handling scheduling with return is deferred to \Cref{sec:returnscheduling}, the result of which is \Cref{alg:returnnoisyscheduler} with signature $\mathrm{ReturnNoisyScheduler}(v, \beta, \calP, L)$. \Cref{lem:returnnoisyschedulingworks} shows the subroutine satisfies the claimed properties:

\begin{restatable*}{lemma}{returnnoisyschedulingworks}\label{lem:returnnoisyschedulingworks}
For a fixed graph $G$, a domain set $\calP$ of node-simple paths and $\beta > 1$, \Cref{alg:returnnoisyscheduler} is a local stateless, randomized algorithm with shared randomness, such that with high probability over the shared random string, the following holds:

Let $\ps$ be any packet set with each packet having a path $\pth(\pa) \in \calP$ computable from its source, destination, unique identifier, the current time step and the graph, s.t.
\begin{itemize}
    \item no two packets evaluate to the same path, %there is exactly one packet with path $p$ for every $p \in \calP$,
    \item what path a packet evaluates to does not change for the entire duration of the algorithm, and
    \item every packet starts either at the first or last node of its path.
\end{itemize}
Denote by $\pth(\ps') := \{\pth(\pa) : \pa \in \ps'\}$ the paths assigned to a subset $\ps' \subseteq \ps$ and by $\cng(\ps') := \cng(\pth(\ps'))$ and by $\dil(\ps') := \dil(\pth(\ps'))$ the congestion and dilation of that path set. Running \Cref{alg:returnnoisyscheduler} with any parameters $\beta$, $\calP$ and $T$ on $\ps$,
\begin{itemize}
    \item the algorithm takes $\tildeO(\beta T)$ time steps,
    \item for every $(\beta, T)$-good (satisfying $|\ps| \leq \frac{1}{\beta} |\ps_S|$ and $T \geq \cng(\ps_S) + \dil(\ps_S)$) subset of packets $\ps_{S} \subseteq \ps$, at least a fourth of the packets in $\ps_{S}$ are delivered, and
    \item at the end of the algorithm, each packet is either at the first or last node of its path.
\end{itemize}
\end{restatable*}

% \shyr{Is there a nice way to simplify the statement of this lemma? Also do we want to use the ``good set'' notatation here that we do for jobs?}

The requirement for only node-simple paths in $\calP$ is required, as the scheduling model requires packets to know where along their path they are. For simple paths, this can be uniquely identified from the node the packet is at, but for nonsimple paths, this would require storing additional information in the packet, which is not allowed in the routing model. The quality of any path set is not degraded by simplifying its paths by cutting out loops, thus this requirement is not restrictive.

% Finally, we don't want to scale on $L$ (the upper bound on optimal completion time) both inside the noisy scheduling with return -subroutine and the routing algorithm itself, thus the subroutine takes $L$ as a parameter. 

% \shyr{I'm not sure we need the above explanation of why the algorithm takes $L$ as a parameter.}

\begin{algorithm}[h]
    \caption{Semi-obliviously-assisted local stateless routing algorithm}
    \label{alg:semioblrouting}
    \begin{algorithmic}[1]
%    \State \textbf{Global Inputs:} $\calP$ is an $\alpha$-sparse semi-oblivious path set of node-simple paths
%    \State \textbf{Abuse of Model:} We state the algorithm iteratively for convenience only, it is memoryless.
    \Function{SemiOblRouter}{v, $\calP$}
    \State Let $L = 4$.
    \State Let $\calP' := \bigcup_{s, t} P(s, t)$
    \While{ $L \leq n^{10}$}
    \For{$i \in \{0, \dots, 18 \lceil \log n \rceil\}$}
    \For{$j \in \{1, \dots, \alpha\}$}
    \State Call $\mathrm{ReturnNoisyScheduler}(v, 2\alpha, \calP', L)$ with $\pth(\pa) := P(\src(\pa), \tar(\pa))_j$
    \EndFor
    \EndFor
    \State Multiply $L$ by $2$.
    \EndWhile
    \EndFunction
    \end{algorithmic}
\end{algorithm}

\noisyschedulesemiobv*

% In the proof, we'll assume that $\alpha^2 \gamma \geq n^{10}$. Note that if this is not the case then routing is trivial as routing every pair $(s_i, t_i)$ greedily along the shortest $s_i, t_i$ path in $G$ delivers the packets in time at most $O(n^3)$ as these paths have congestion at most $n^2$ since there are at most $n^2$ pairs in $d$ and dilation at most $n$ as they are shortest paths.

\begin{proof}
Assume that the ReturnNoisyScheduler (\Cref{alg:returnnoisyscheduler}) succeeds every time it is called i.e. we achieve the guarantees from Lemma \ref{lem:returnnoisyschedulingworks}. This holds with high probability even against an adaptive adversary. Now fix a \zeroonedemand $d$, and let $L_0$ be the minimum power of two such that $L_0 \geq \gamma \OPT(d)$. Note that $L_0 \leq n^6$, so at some iteration in our algorithm $L$ takes value $L_0$. Since the noisy scheduler returns undelivered packets, at the start of that iteration some packets have been delivered, and all other packets $\pa$ have $\pos(\pa) = \src(\pa)$. Let $d'$ denote the demand induced by the undelivered packets, $P$ denote the set of $(s, t)$-paths for $(s, t) \in d'$ in $\calP$, and $P_j$ be the set of $j$th paths in $P$, that is, $P_j := \bigcup_{(s, t) \in d'} P(s, t)_j$.

Since $\calP$ is $\gamma$-competitive, there exists a set of paths $S$ on $\calP$ that route $d'$ with congestion + dilation at most $\gamma$ more than the optimum: $C(S) + D(S) \leq \gamma \OPT(d) \leq L_0$. Let $S_j$ be the set of $j$th paths in $S$: $S_j = S \cap P_j$.

We divide the indices into those with a large fraction of signal packets, and those with few: let $J$ be the set of indices such that $2 \alpha |S_j| \geq |P_j|$. We have
\begin{equation*}
    \sum_{j \not \in J} |S_j| \leq \sum_{j \not\in J} |P_j| / (2 \alpha) \leq \sum_{j} |P_j| / (2 \alpha) \leq |S| / 2
\end{equation*}
thus at least half of the signal packets are in indices with a large fraction of signal packets.

Now, consider a call to $\mathrm{ReturnNoisyScheduler}$ on line 7 for $j \in J$. Since $C(S_j) + D(S_j) \leq C(S) + D(S) \leq L$, by \Cref{lem:returnnoisyschedulingworks} and the assumption, the noisy scheduling subroutine delivers at least a fourth of the packets with paths in $S_j$.

Thus, the $\alpha$ iterations of the loop on line 6 deliver at least one eight of the remaining demand. As the maximum size of a unit demand is $n^2$ and $(7/8)^{18 \log n} \leq 2^{-3 \log n} = n^{-3}$, after $18 \lceil \log n \rceil$ iterations, all packets have been delivered.

Again by \Cref{lem:returnnoisyschedulingworks} and the assumption, the time spent in a iteration of the while-loop grows linearly with $L$, so only the last iteration affects the asymptotics. For that $L \leq L_0 \leq 2 \gamma \OPT(d)$, the iteration takes $\bigO{\log n} \cdot \alpha \cdot \alpha L \cdot \poly \log n$ time steps, which is $\OPT(d) \cdot \alpha^2 \gamma \cdot \poly(\log n)$.
\end{proof}

\section{Deterministic Universal Optimality in Supported-CONGEST}\label{sec:univ-opt-congest}
\shyr{TODO: Update theorems here to the new ones}

In this section we use the routing primitive developed in \Cref{sec:routing-semi-obl} to obtain strong algorithms in the supported-CONGEST model. The model is formally defined in \Cref{sec:supported-congest-prelims} and intuitively supports standard message-passing with small messages each time step, while having access to the global topology of the network (but not the input!) The rest of the section is dedicated to proving the following result.
%\Cref{thm:main-supported}: deterministic universally-optimal algorithms for many important distributed problems in the supported-CONGEST model, defined in
\thmMainSupported*

\subsection{Section-Specific Preliminaries}\label{sec:supported-congest-prelims}
\textbf{CONGEST and supported-CONGEST models.} We start with the standard message-passing model of distributed computing, often referred to as the CONGEST model~\cite{peleg2000distributed}. The network is abstracted as an $n$-node connected \emph{undirected} graph $G = (V, E)$ where each node represents one of the computers in the network (i.e., has its own processor and private memory). Communication takes place in synchronous time steps.  Per time step, each node can send one $O(\log n)$-bit message to each of its neighbors. The nodes do not know the topology of the network at the start of the algorithm. Each node only knows the unique $O(\log n)$-bit ID of itself and of its neighbors. The supported-CONGEST model inherits everything from the CONGEST model, with the addition that all nodes initially know $G$. We denote the (hop) diameter of $G$ as $D_G$.

\textbf{Tasks and distributed inputs.} As a general rule, nodes initially do not know the entire input but only their own part; at termination, the nodes should output their own part of the output~\cite{haeupler2021universally}. For example, in the minimum spanning tree (MST) problem, each node initially knows the weights of edges adjacent to it; at termination, each node must learn the weight of the MST and which of its adjacent edges belong to the MST. For the shortest path problem, all nodes know the ID of the source node and the weights of adjacent edges; at termination, each node must learn the (approximate) distance to the source and the adjacent nodes of some (approximate) shortest path tree. We note that different reasonable representations of inputs and outputs are generally equivalent.

\textbf{Part-wise aggregation.} This problem was originally defined by Ghaffari and Haeupler\cite{GH16}. Since then, it became a central primitive for many problems in distributed computing. In this problem, nodes are partitioned into \emph{connected and node-disjoint} \textbf{parts} $V = V_1 \sqcup V_2 \sqcup \ldots \sqcup V_k$. Each node $v$ initially knows the ID $i$ of its part $v \in P_i$. Additionally, each node $v$ is initially given a \textbf{private input} $x_v$. The goal is that each node computes some simple pre-defined aggregate function $\bigoplus$ (e.g., sum or max) of all private values in the same part as $v$.

\textbf{Universal optimality.} In the supported-CONGEST model and for a network $G$, an algorithm $A$ is universally optimal if the worst-case running time of $A$ across all inputs on $G$ is $\poly(\log n)$-competitive with the worst-case running time of any other correct algorithm running on $G$.

\subsection{Proof of \Cref{thm:main-supported}}

Our main tool will be \Cref{thm:intro-routing} which shows that we can solve the point-to-point permutation routing problem optimally. As an aside, we note that this solution is even stronger than the universal optimality defined above, which is a worst-case notion over all inputs. We can solve the routing problem in an \emph{instance optimal way}: our algorithm is optimal for every demand $d$.

%USE from above \semiobvrouting*
%\Cref{thm:semiobvrouting} 

\newcommand{\SQ}{\mathrm{SQ}}

\begin{lemma}
  There exists a deterministic and universally-optimal algorithm for part-wise aggregation in supported-CONGEST.  
\end{lemma}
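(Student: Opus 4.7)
The plan is to reduce the lemma directly to our main routing result by invoking the shortcut-based derandomization framework of Ghaffari and Zuzic~\cite{ghaffari2021universally}. The crucial structural observation is that in supported-CONGEST every node knows $G$ in advance, so each node can locally precompute the deterministic routing table $A_G$ from \Cref{thm:intro-routing} without any communication and before the partition or the private inputs arrive. This turns the strong notion of ``global coordination'' used implicitly by the routing table into something available for free at the start of the distributed algorithm.

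Once the partition $\mathcal{V} = V_1 \sqcup \cdots \sqcup V_k$ and the private inputs are revealed, I would follow the standard low-congestion shortcut approach. The universally-optimal complexity of part-wise aggregation on $G$ is characterized up to polylogarithmic factors by the optimal shortcut quality $Q^*_G(\mathcal{V})$, which is also a classical lower bound on the problem; and given any shortcut of quality $Q$ for $\mathcal{V}$, every part can aggregate in $\tildeO(Q)$ rounds via pipelined sums along the shortcut-augmented spanning subgraphs. Hence it suffices to (i) construct a shortcut of quality $Q = \tildeO(Q^*_G(\mathcal{V}))$ deterministically, and (ii) execute the aggregation along it deterministically. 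Both subtasks are reduced by~\cite{ghaffari2021universally} (building on the line~\cite{haeupler2021universally,goranci2022universally}) to a polylogarithmic number of point-to-point routing instances on $G$, each with offline-optimal completion time $\tildeO(Q^*_G(\mathcal{V}))$. That reduction is already deterministic apart from the routing step itself, which was the single remaining use of randomness.

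Plugging $A_G$ into each such routing instance solves it deterministically with only $\poly(\log n)$ overhead, and composing the pieces yields a deterministic algorithm running in time $\poly(\log n) \cdot Q^*_G(\mathcal{V})$, matching the $\Omega(Q^*_G(\mathcal{V}))$ lower bound up to polylog factors and therefore establishing universal optimality. The only obstacle throughout this agenda was historically the absence of a deterministic point-to-point routing primitive; since \Cref{thm:intro-routing} fills precisely this gap, the remainder of the proof is a black-box composition of previously-known reductions with our new routing primitive, and I do not expect any additional technical difficulty beyond carefully tracking the polylog overhead through the composition.
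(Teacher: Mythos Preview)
Your proposal is correct and follows essentially the same high-level approach as the paper: both treat the lemma as a black-box composition of known shortcut-quality reductions with the new deterministic routing primitive from \Cref{thm:intro-routing}. The only notable difference is the specific reduction cited: the paper invokes Lemma~7.2 of~\cite{haeupler2021universally} (which reduces part-wise aggregation to $\poly(\log n)$ many permutation-routing instances on \emph{connectable} demands, each with $\OPT \le \SQ(G)\poly(\log n)$) and then explicitly derandomizes the one remaining randomized step there via Cole--Vishkin difference labeling~\cite{cole1986deterministic}; you instead lean on the already-derandomized framework of~\cite{ghaffari2021universally}. Both routes are valid and yield the same conclusion. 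One small point of care: your use of the partition-dependent quantity $Q^*_G(\mathcal{V})$ is slightly stronger than what universal optimality needs (the paper works with the worst-case graph parameter $\SQ(G)$), but since your bound implies the weaker one this is harmless.
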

\begin{proof}
  We will inherit some notation and terminology from \cite{haeupler2021universally} and adapt the proofs, as recreating them from scratch would yield an unwieldy paper.

  We first define that a demand $d = \{(s_i, t_i)\}_{i=1}^k$ is \textbf{connectable} if there exists a set of \emph{node-disjoint paths} $p_1, \ldots, p_k$ such that the endpoints of $p_i$ are $s_i$ and $t_i$.  

  Next, there exists a (black-box for our purposes) graph parameter $\SQ(G)$ called \textbf{shortcut quality}, defined in \cite{haeupler2021universally}, which assigns to each undirected graph a number $\Z_{\ge 1}$. \cite{haeupler2021universally} proves that for any \emph{connectable} demand we have $\OPT(d) \le \SQ(G) \poly(\log n)$. Moreover, it proves that there exists an input to the part-wise aggregation problem that requires at least $\SQ(G) / \poly(\log n)$ time steps. Hence, solving part-wise aggregation in $\SQ(G) \poly(\log n)$ will yield universally-optimal algorithms, which we achieve by reducing it to $\poly(\log n)$ many calls to point-to-point permutation routing on connectable demands, which is done via \Cref{thm:intro-routing}.

  However, exactly this is done by the proof of Lemma 7.2 in \cite{haeupler2021universally} (arXiv version) in the randomized case. We can directly remove the randomness by using Cole-Vishkin difference-label derandomization~\cite{cole1986deterministic}. %\alert{write out the full proof if there is any time for this!!}
\end{proof}

We are essentially done: the long line of work on low-congestion shortcuts have shown that, given an oracle that can solve the part-wise aggregation problem, one can solve many important global distributed problem with $\poly(\log n)$ calls to the oracle and an additional $D_G \cdot \poly(\log n)$ time steps of (supported-)CONGEST. Combining these together, we get one of our headline results.

\thmMainSupported*
\begin{proof}
  We will inherit some notation and terminology from \cite{ghaffari2021universally} and adapt the proofs, as recreating them from scratch would yield an unwieldy paper.  

  Theorem 17 of \cite{ghaffari2021universally} (arXiv version) exactly shows that, given a deterministic part-wise aggregation algorithm, we can get a deterministic supported-CONGEST simulation of the Minor-Aggregation model with $\poly(\log n)$ overheads. Furthermore, the paper shows how to give a deterministic MST in $\poly(\log n)$ time steps of the Minor-Aggregation model. Furthermore, the paper \cite{2022sssp} shows how to get a deterministic $\poly(\log n, 1/\eps)$-time-step Minor-Aggregation algorithm for the $(1+\eps)$-single-source shortest path algorithm. Combining these together, we obtain the result.
\end{proof}

Additionally, we mention that several other problems have been reduced to part-wise aggregation, but the reductions are not deterministic. These include exact min-cut~\cite{ghaffari2021universally} and Laplacian solving~\cite{anagnostides2021almost}. However, for min-cut, we know how to derandomize the main subproblem of finding the exact 2-respecting tree cut; the barrier lies in the tree packing procedure which hasn't been resolved even for much more powerful models.

\bibliographystyle{alpha}
\bibliography{refs} 

\newcommand{\etalchar}[1]{$^{#1}$}
\begin{thebibliography}{KYY{\etalchar{+}}18b}

\bibitem[ABCP96]{awerbuch1996fast}
Baruch Awerbuch, Bonnie Berger, Lenore Cowen, and David Peleg.
\newblock Fast distributed network decompositions and covers.
\newblock {\em Journal of parallel and distributed computing}, 39(2):105--114,
  1996.

\bibitem[ABD{\etalchar{+}}09]{ahn2009hyperx}
Jung~Ho Ahn, Nathan Binkert, Al~Davis, Moray McLaren, and Robert~S Schreiber.
\newblock Hyperx: topology, routing, and packaging of efficient large-scale
  networks.
\newblock In {\em Proceedings of the Conference on High Performance Computing
  Networking, Storage and Analysis}, pages 1--11, 2009.

\bibitem[AG21]{anagnostides2021deterministic}
Ioannis Anagnostides and Themis Gouleakis.
\newblock Deterministic distributed algorithms and lower bounds in the hybrid
  model.
\newblock {\em arXiv preprint arXiv:2108.01740}, 2021.

\bibitem[AKS83]{ajtai19830}
Mikl{\'o}s Ajtai, J{\'a}nos Koml{\'o}s, and Endre Szemer{\'e}di.
\newblock An o(n log n) sorting network.
\newblock In {\em Proceedings of the fifteenth annual ACM symposium on Theory
  of computing}, pages 1--9, 1983.

\bibitem[ALH{\etalchar{+}}22]{anagnostides2021almost}
Ioannis Anagnostides, Christoph Lenzen, Bernhard Haeupler, Goran Zuzic, and
  Themis Gouleakis.
\newblock Brief announcement: Almost universally optimal distributed laplacian
  solver.
\newblock In {\em Proceedings of the 2022 ACM Symposium on Principles of
  Distributed Computing}, PODC'22, page 372–374, New York, NY, USA, 2022.
  Association for Computing Machinery.

\bibitem[AS16]{alon2016probabilistic}
Noga Alon and Joel~H Spencer.
\newblock {\em The probabilistic method}.
\newblock John Wiley \& Sons, 2016.

\bibitem[Bat68]{batcher1968sorting}
Kenneth~E Batcher.
\newblock Sorting networks and their applications.
\newblock In {\em Proceedings of the April 30--May 2, 1968, spring joint
  computer conference}, pages 307--314, 1968.

\bibitem[BFS86]{BabaiFS86commcomplexity}
L{\'{a}}szl{\'{o}} Babai, Peter Frankl, and Janos Simon.
\newblock Complexity classes in communication complexity theory (preliminary
  version).
\newblock In {\em 27th Annual Symposium on Foundations of Computer Science,
  Toronto, Canada, 27-29 October 1986}, pages 337--347. {IEEE} Computer
  Society, 1986.

\bibitem[BG99]{bertsimas1999asymptotically}
Dimitris Bertsimas and David Gamarnik.
\newblock Asymptotically optimal algorithms for job shop scheduling and packet
  routing.
\newblock {\em Journal of Algorithms}, 33(2):296--318, 1999.

\bibitem[BMI10]{busch2010optimal}
Costas Busch and Malik Magdon-Ismail.
\newblock Optimal oblivious routing in hole-free networks.
\newblock In {\em International Conference on Heterogeneous Networking for
  Quality, Reliability, Security and Robustness}, pages 421--437. Springer,
  2010.

\bibitem[BMIX05]{busch2005oblivious}
Costas Busch, Malik Magdon-Ismail, and Jing Xi.
\newblock Oblivious routing on geometric networks.
\newblock In {\em Proceedings of the seventeenth annual ACM symposium on
  Parallelism in algorithms and architectures}, pages 316--324, 2005.

\bibitem[BMIX08]{busch2008optimal}
Costas Busch, Malik Magdon-Ismail, and Jing Xi.
\newblock Optimal oblivious path selection on the mesh.
\newblock {\em IEEE Transactions on Computers}, 57(5):660--671, 2008.

\bibitem[CHY{\etalchar{+}}16]{chen2016distributed}
Jing Chen, Kun He, Quan Yuan, Ruiying Du, Lina Wang, and Jie Wu.
\newblock Distributed greedy coding-aware deterministic routing for multi-flow
  in wireless networks.
\newblock {\em Computer Networks}, 105:194--206, 2016.

\bibitem[CKMP95]{cidon1995greedy}
Israel Cidon, Shay Kutten, Yishay Mansour, and David Peleg.
\newblock Greedy packet scheduling.
\newblock {\em SIAM Journal on Computing}, 24(1):148--157, 1995.

\bibitem[CP19]{chang2019time}
Yi-Jun Chang and Seth Pettie.
\newblock A time hierarchy theorem for the local model.
\newblock {\em SIAM Journal on Computing}, 48(1):33--69, 2019.

\bibitem[CS20]{chang2020deterministic}
Yi-Jun Chang and Thatchaphol Saranurak.
\newblock Deterministic distributed expander decomposition and routing with
  applications in distributed derandomization.
\newblock In {\em 2020 IEEE 61st Annual Symposium on Foundations of Computer
  Science (FOCS)}, pages 377--388. IEEE, 2020.

\bibitem[CSCS94]{cypher1994hypercube}
Robert Cypher, Jorge~LC Sanz, Robert Cypher, and Jorge~LC Sanz.
\newblock Hypercube-derived computers.
\newblock {\em The SIMD Model of Parallel Computation}, pages 69--77, 1994.

\bibitem[CV86]{cole1986deterministic}
Richard Cole and Uzi Vishkin.
\newblock Deterministic coin tossing with applications to optimal parallel list
  ranking.
\newblock {\em Information and Control}, 70(1):32--53, 1986.

\bibitem[Cze20]{czerner2020semi}
Philipp Czerner.
\newblock Semi-oblivious routing strategies in directed graphs.
\newblock 2020.

\bibitem[ER09]{englert2009oblivious}
Matthias Englert and Harald R{\"a}cke.
\newblock Oblivious routing for the lp-norm.
\newblock In {\em 2009 50th Annual IEEE Symposium on Foundations of Computer
  Science}, pages 32--40. IEEE, 2009.

\bibitem[FS02]{FeigeS02}
Uriel Feige and Christian Scheideler.
\newblock Improved bounds for acyclic job shop scheduling.
\newblock {\em Comb.}, 22(3):361--399, 2002.

\bibitem[Ger73]{gerla1973deterministic}
Mario Gerla.
\newblock Deterministic and adaptive routing policies in packet-switched
  computer networks.
\newblock In {\em Proceedings of the third ACM symposium on Data communications
  and Data networks: Analysis and design}, pages 23--28, 1973.

\bibitem[GG23]{ghaffari2023faster}
Mohsen Ghaffari and Christoph Grunau.
\newblock Faster deterministic distributed {MIS} and approximate matching.
\newblock In {\em Proceedings of the 55th Annual ACM Symposium on Theory of
  Computing}, pages 1777--1790, 2023.

\bibitem[GH16]{GH16}
Mohsen Ghaffari and Bernhard Haeupler.
\newblock Distributed algorithms for planar networks ii: Low-congestion
  shortcuts, mst, and min-cut.
\newblock In {\em Proceedings of the twenty-seventh annual ACM-SIAM symposium
  on Discrete algorithms (SODA)}, pages 202--219, 2016.

\bibitem[GH21]{ghaffari_haeupler2021shortcuts_in_minor_closed}
Mohsen Ghaffari and Bernhard Haeupler.
\newblock Low-congestion shortcuts for graphs excluding dense minors.
\newblock In {\em Proceedings of the 2021 ACM Symposium on Principles of
  Distributed Computing}, PODC'21, page 213–221, New York, NY, USA, 2021.
  Association for Computing Machinery.

\bibitem[Gha15]{ghaffari2015near}
Mohsen Ghaffari.
\newblock Near-optimal scheduling of distributed algorithms.
\newblock In {\em Proceedings of the 2015 ACM Symposium on Principles of
  Distributed Computing}, pages 3--12, 2015.

\bibitem[Gha20]{ghaffari2020network}
Mohsen Ghaffari.
\newblock Network decomposition and distributed derandomization.
\newblock In {\em International Colloquium on Structural Information and
  Communication Complexity}, pages 3--18. Springer, 2020.

\bibitem[GHR06]{gupta2006oblivious}
Anupam Gupta, Mohammad~T Hajiaghayi, and Harald R{\"a}cke.
\newblock Oblivious network design.
\newblock In {\em Proceedings of the seventeenth annual ACM-SIAM symposium on
  Discrete algorithm}, pages 970--979, 2006.

\bibitem[GHZ21]{GhaffariHZ21}
Mohsen Ghaffari, Bernhard Haeupler, and Goran Zuzic.
\newblock Hop-constrained oblivious routing.
\newblock In {\em {STOC} '21: 53rd Annual {ACM} {SIGACT} Symposium on Theory of
  Computing, Virtual Event, Italy, June 21-25, 2021}, pages 1208--1220. {ACM},
  2021.

\bibitem[GKP98]{GKP98}
Juan~A Garay, Shay Kutten, and David Peleg.
\newblock A sublinear time distributed algorithm for minimum-weight spanning
  trees.
\newblock {\em SIAM Journal on Computing}, 27(1):302--316, 1998.

\bibitem[GKS17]{ghaffari2017distributed}
Mohsen Ghaffari, Fabian Kuhn, and Hsin-Hao Su.
\newblock Distributed {MST} and routing in almost mixing time.
\newblock In {\em Proceedings of the ACM Symposium on Principles of Distributed
  Computing}, pages 131--140, 2017.

\bibitem[GL18]{GL18b}
Mohsen Ghaffari and Jason Li.
\newblock New distributed algorithms in almost mixing time via transformations
  from parallel algorithms.
\newblock {\em arXiv preprint arXiv:1805.04764}, 2018.

\bibitem[GPSS01]{GoldbergPSS01}
Leslie~Ann Goldberg, Mike Paterson, Aravind Srinivasan, and Elizabeth Sweedyk.
\newblock Better approximation guarantees for job-shop scheduling.
\newblock {\em {SIAM} J. Discret. Math.}, 14(1):67--92, 2001.

\bibitem[GZ22]{ghaffari2021universally}
Mohsen Ghaffari and Goran Zuzic.
\newblock Universally-optimal distributed exact min-cut.
\newblock In {\em Proceedings of the 2022 ACM Symposium on Principles of
  Distributed Computing}, pages 281--291, 2022.

\bibitem[HHW21]{haeupler2019near}
Bernhard Haeupler, D.~Ellis Hershkowitz, and David Wajc.
\newblock {Near-Optimal Schedules for Simultaneous Multicasts}.
\newblock In {\em 48th International Colloquium on Automata, Languages, and
  Programming (ICALP 2021)}, volume 198 of {\em Leibniz International
  Proceedings in Informatics (LIPIcs)}, pages 78:1--78:19, Dagstuhl, Germany,
  2021. Schloss Dagstuhl -- Leibniz-Zentrum f{\"u}r Informatik.

\bibitem[HIZ16]{haeupler2016near}
Bernhard Haeupler, Taisuke Izumi, and Goran Zuzic.
\newblock Near-optimal low-congestion shortcuts on bounded parameter graphs.
\newblock In {\em International Symposium on Distributed Computing}, pages
  158--172. Springer, 2016.

\bibitem[HKL07]{HajiaghayiKL07}
Mohammad~Taghi Hajiaghayi, Robert Kleinberg, and Tom Leighton.
\newblock Semi-oblivious routing: lower bounds.
\newblock In {\em Proceedings of the Eighteenth Annual {ACM-SIAM} Symposium on
  Discrete Algorithms, {SODA} 2007, New Orleans, Louisiana, USA, January 7-9,
  2007}, pages 929--938. {SIAM}, 2007.

\bibitem[HLZ18]{haeupler2018minor}
Bernhard Haeupler, Jason Li, and Goran Zuzic.
\newblock Minor excluded network families admit fast distributed algorithms.
\newblock In {\em Proceedings of the 2018 ACM Symposium on Principles of
  Distributed Computing}, pages 465--474, 2018.

\bibitem[HRG22]{GHR21}
Bernhard Haeupler, Harald R{\"a}cke, and Mohsen Ghaffari.
\newblock Hop-constrained expander decompositions, oblivious routing, and
  distributed universal optimality.
\newblock In {\em Proceedings of the 54th Annual ACM SIGACT Symposium on Theory
  of Computing}, pages 1325--1338, 2022.

\bibitem[HWZ20]{haeupler2020network}
Bernhard Haeupler, David Wajc, and Goran Zuzic.
\newblock Network coding gaps for completion times of multiple unicasts.
\newblock In {\em 2020 IEEE 61st Annual Symposium on Foundations of Computer
  Science (FOCS)}, pages 494--505. IEEE, 2020.

\bibitem[HWZ21]{haeupler2021universally}
Bernhard Haeupler, David Wajc, and Goran Zuzic.
\newblock Universally-optimal distributed algorithms for known topologies.
\newblock In {\em Proceedings of the 53rd Annual ACM SIGACT Symposium on Theory
  of Computing}, pages 1166--1179, 2021.

\bibitem[KKT91]{KaklamanisKT91}
Christos Kaklamanis, Danny Krizanc, and Thanasis Tsantilas.
\newblock Tight bounds for oblivious routing in the hypercube.
\newblock {\em Math. Syst. Theory}, 24(4):223--232, 1991.

\bibitem[KYY{\etalchar{+}}18a]{kumar18semi}
Praveen Kumar, Yang Yuan, Chris Yu, Nate Foster, Robert Kleinberg, Petr
  Lapukhov, Chiun~Lin Lim, and Robert Soul{\'e}.
\newblock Semi-oblivious traffic engineering with smore.
\newblock In {\em Proceedings of the Applied Networking Research Workshop, ser.
  ANRW}, volume~18, page~21, 2018.

\bibitem[KYY{\etalchar{+}}18b]{KumarYYFKLLS18}
Praveen Kumar, Yang Yuan, Chris Yu, Nate Foster, Robert Kleinberg, Petr
  Lapukhov, Chiunlin Lim, and Robert Soul{\'{e}}.
\newblock Semi-oblivious traffic engineering: The road not taken.
\newblock In {\em 15th {USENIX} Symposium on Networked Systems Design and
  Implementation, {NSDI} 2018, Renton, WA, USA, April 9-11, 2018}, pages
  157--170. {USENIX} Association, 2018.

\bibitem[Len13]{lenzen2013optimal}
Christoph Lenzen.
\newblock Optimal deterministic routing and sorting on the congested clique.
\newblock In {\em {ACM} Symposium on Principles of Distributed Computing,
  {PODC} '13, Montreal, QC, Canada, July 22-24, 2013}, pages 42--50. {ACM},
  2013.

\bibitem[LMR94]{LeightonMR94}
Frank~Thomson Leighton, Bruce~M. Maggs, and Satish Rao.
\newblock Packet routing and job-shop scheduling in \emph{O}(congestion +
  dilation) steps.
\newblock {\em Comb.}, 14(2):167--186, 1994.

\bibitem[LMRR94]{leighton1994randomized}
Frank~Thomson Leighton, Bruce~M Maggs, Abhiram~G Ranade, and Satish~B Rao.
\newblock Randomized routing and sorting on fixed-connection networks.
\newblock {\em Journal of Algorithms}, 17(1):157--205, 1994.

\bibitem[LP98]{leighton1998hypercubic}
Tom Leighton and C~Greg Plaxton.
\newblock Hypercubic sorting networks.
\newblock {\em SIAM Journal on Computing}, 27(1):1--47, 1998.

\bibitem[MadHV95]{auf1995packet}
Friedhelm Meyer auf~der Heide and Berthold V{\"o}cking.
\newblock A packet routing protocol for arbitrary networks.
\newblock In {\em STACS}, pages 291--302, 1995.

\bibitem[Mag06]{maggs2006survey}
Bruce~M Maggs.
\newblock A survey of congestion+ dilation results for packet scheduling.
\newblock In {\em 2006 40th Annual Conference on Information Sciences and
  Systems}, pages 1505--1510. IEEE, 2006.

\bibitem[MPS91]{mansour1991greedy}
Yishay Mansour and Boaz Patt-Shamir.
\newblock Greedy packet scheduling on shortest paths.
\newblock In {\em Proceedings of the Tenth Annual ACM Symposium on Principles
  of Distributed Computing}, pages 165--175, 1991.

\bibitem[MS11]{MastrolilliS11}
Monaldo Mastrolilli and Ola Svensson.
\newblock Hardness of approximating flow and job shop scheduling problems.
\newblock {\em J. {ACM}}, 58(5):20:1--20:32, 2011.

\bibitem[OR97]{ostrovsky1997universal}
Rafail Ostrovsky and Yuval Rabani.
\newblock Universal o (congestion+ dilation+ log1+ $\varepsilon$ n) local
  control packet switching algorithms.
\newblock In {\em Proceedings of the twenty-ninth annual ACM symposium on
  Theory of computing}, pages 644--653, 1997.

\bibitem[Pel00]{peleg2000distributed}
David Peleg.
\newblock {\em Distributed computing: a locality-sensitive approach}.
\newblock SIAM, 2000.

\bibitem[Pin22]{Pinedo2022}
Michael Pinedo.
\newblock {\em {Scheduling: Theory, Algorithms, and Systems}}.
\newblock Springer New York, New York, NY, 6th edition, 2022.

\bibitem[PS96]{panconesi1996complexity}
Alessandro Panconesi and Aravind Srinivasan.
\newblock On the complexity of distributed network decomposition.
\newblock {\em Journal of Algorithms}, 20(2):356--374, 1996.

\bibitem[Rab89]{rabin1989efficient}
Michael~O Rabin.
\newblock Efficient dispersal of information for security, load balancing, and
  fault tolerance.
\newblock {\em Journal of the ACM (JACM)}, 36(2):335--348, 1989.

\bibitem[R{\"{a}}c08]{Racke08}
Harald R{\"{a}}cke.
\newblock Optimal hierarchical decompositions for congestion minimization in
  networks.
\newblock In {\em Proceedings of the 40th Annual {ACM} Symposium on Theory of
  Computing, Victoria, British Columbia, Canada, May 17-20, 2008}, pages
  255--264. {ACM}, 2008.

\bibitem[RGH{\etalchar{+}}22]{2022sssp}
V{\'a}clav Rozho{\v{n}}, Christoph Grunau, Bernhard Haeupler, Goran Zuzic, and
  Jason Li.
\newblock Undirected $(1 + \eps)$-shortest paths via minor-aggregates:
  Near-optimal deterministic parallel \& distributed algorithms.
\newblock In {\em Proceedings of the 54th Annual ACM SIGACT Symposium on Theory
  of Computing}, pages 478--487, 2022.

\bibitem[RS19]{Racke019}
Harald R{\"{a}}cke and Stefan Schmid.
\newblock Compact oblivious routing.
\newblock In {\em 27th Annual European Symposium on Algorithms, {ESA} 2019,
  September 9-11, 2019, Munich/Garching, Germany}, volume 144 of {\em LIPIcs},
  pages 75:1--75:14. Schloss Dagstuhl - Leibniz-Zentrum f{\"{u}}r Informatik,
  2019.

\bibitem[SS13]{schmid2013exploiting}
Stefan Schmid and Jukka Suomela.
\newblock Exploiting locality in distributed {SDN} control.
\newblock In {\em Proceedings of the second ACM SIGCOMM workshop on Hot topics
  in software defined networking}, pages 121--126, 2013.

\bibitem[SSW94]{ShmoysSW94}
D.~B. Shmoys, C.~Stein, and J.~Wein.
\newblock Improved approximation algorithms for shop scheduling problems.
\newblock {\em SIAM Journal on Computing}, 23(3):617--632, June 1994.

\bibitem[ST97]{srinivasan1997constant}
Aravind Srinivasan and Chung-Piaw Teo.
\newblock A constant-factor approximation algorithm for packet routing, and
  balancing local vs. global criteria.
\newblock In {\em Proceedings of the twenty-ninth annual ACM symposium on
  Theory of computing}, pages 636--643, 1997.

\bibitem[Val82]{valiant1982scheme}
Leslie~G. Valiant.
\newblock A scheme for fast parallel communication.
\newblock {\em SIAM journal on computing}, 11(2):350--361, 1982.

\bibitem[ZGY{\etalchar{+}}22]{goranci2022universally}
Goran Zuzic, Gramoz Goranci, Mingquan Ye, Bernhard Haeupler, and Xiaorui Sun.
\newblock Universally-optimal distributed shortest paths and transshipment via
  graph-based {L1}-oblivious routing.
\newblock In {\em Proceedings of the 2022 {ACM-SIAM} Symposium on Discrete
  Algorithms}, pages 2549--2579, 2022.

\bibitem[ZHR23]{GBA23}
Goran Zuzic, Bernhard Haeupler, and Antti Roeyskoe.
\newblock Sparse semi-oblivious routing: Few random paths suffice.
\newblock In {\em Proceedings of the 2023 ACM Symposium on Principles of
  Distributed Computing}, pages 222--232, 2023.

\end{thebibliography}

\appendix

\section{Routing Model Scheduling With Return}\label{sec:returnscheduling}

\antti{This section can use some more explanations \textbf{for the final version}, to better illustrate connection between the two models. Hash function should map from (path, identifier) pairs again.}

To avoid packet states in the routing algorithm, we need a scheduling algorithm that supports returning: all packets not delivered during the algorithm return to the start of their paths. %Support for returning requires not tolerating arbitrary starting positions, but non-delivered packets returning guarantees that packets start at the start of their paths, thus this is not an issue.

We modify the stateless weak scheduling algorithm (\Cref{alg:weakschedulingsubroutine}) to support returning undelivered packets. We double the length of the algorithm, and spend the last $2Ll$ time steps returning packets not delivered to the start of their paths. Since packets only move when the edge they would move over is not overcongested and when their virtual time is tight, this is simple.

\Cref{alg:weakreturnschedulingsubroutine} is a scheduling algorithm in the routing model. It assumes packets have paths $\pth(\pa)$ that are not properties of the packet itself, but computable quantities from the packet's properties: its source, sink, unique identifier and the current time step $t$ and the graph topology. Indeed, in the routing algorithm for \zeroonedemands, we assign each packet $\pa$ the path $P(\src(\pa), \tar(\pa))_j$, where $j$ is based on the current time step $t$.

\begin{algorithm}[H]
    \caption{Weak Scheduling Subroutine With Return}
    \label{alg:weakreturnschedulingsubroutine}
    \begin{algorithmic}[1]
    \State \textbf{Preconditions:} We are working in the local rule routing model, but additionally require that every packet has a computable, invariant path $\pth(\pa)$, and that these paths are simple.
    \State \textbf{Notation:} $\mathrm{pre}(\pa, v)$ is the edge leading to $v$ on $\pth(\pa)$ and $\mathrm{nxt}(\pa, v)$ is the edge from $v$ on the path. $\ind(p, v)$ is the index of node $v$ on a simple path $p$, and $\virt(p, v) := h(p) + \ind(p, v)$, where $h$ is the hash function from paths in $G$ to $\{0, \dots, L - 1\}$ received as input
%    \State \textbf{Global Inputs:} Large time step count $L$, small time step count $l$, hash function $h$ from paths in $G$ to $\{0, \dots, L - 1\}$, current time step $t$
    \Function{WeakReturnScheduler}{v, L, l, h, t}
    \If{$t < 2Ll$}

    \State Let $T = \lfloor t / l \rfloor$
    % \State Let $\mathrm{rem} = l - (t - Tl)$
    \For{All edges $e = (v, u)$ from $v$}
    \State Let $Q = \{\pa \in \que(v)_t : \mathrm{nxt}(\pa, v) = e \text{ AND } \virt(\pth(\pa), v) = T\}$
    \If{$0 < |Q| \leq l$}
    \State Forward an arbitrary $\pa \in Q$ over $e$
    \EndIf
    \EndFor

    \Else

    \State Let $T = \lfloor (4Ll - 1 - t) / l \rfloor$
    \For{All edges $e = (v, u)$ from $v$}
    \State Let $Q = \{\pa \in \que(v)_t : \mathrm{pre}(\pa, v) = \rev(e) \text{ AND } \virt(\pth(\pa), u) = T\}$
    \State If $Q$ is nonempty, forward an arbitrary $\pa \in Q$ over $e$
    \EndFor
    
    \EndIf
    \EndFunction
    \end{algorithmic}
\end{algorithm}

\begin{lemma}\label{lem:identicalbehaviourlemma}
    For a set of packets $\ps$ starting at the first edges of their paths $\pth(\pa)$ and fixed parameters $L$ and $l$, the set of packets delivered in $2Ll$ time steps by \Cref{alg:weakreturnschedulingsubroutine} (in the routing model) and \Cref{alg:weakschedulingsubroutine} (in the job-shop scheduling model applied to packet scheduling) is exactly the same, no matter how the choice on which eligible packet to forward over an edge is done.

    Furthermore, after $4Ll$ time steps, every packet not delivered by \Cref{alg:weakreturnschedulingsubroutine} is at its source $\src(\pa)$.
\end{lemma}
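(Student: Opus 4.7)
I would proceed in three steps.

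First, I would establish a direct bijection between the first $2Ll$ time steps of \Cref{alg:weakreturnschedulingsubroutine} on the routing instance and the full execution of \Cref{alg:weakschedulingsubroutine} on the natural corresponding job-shop instance (edges $\leftrightarrow$ machines, paths $\leftrightarrow$ sequences). Identifying a packet $\pa$ at node $v$ with $\mathrm{nxt}(\pa,v)=e$ with the job at machine $e$, the simplicity of $\pth(\pa)$ guarantees that this correspondence is well-defined, and the virtual times agree: $\virt(\pth(\pa),v) = h(\pth(\pa))+\ind(\pth(\pa),v) = \virt(\jo,i)$. The forward branch of \Cref{alg:weakreturnschedulingsubroutine} at each outgoing edge $e=(v,u)$ then applies exactly the same filtering rule as \Cref{alg:weakschedulingsubroutine} at machine $e$: take packets with $\mathrm{nxt}(\pa,v)=e$ and matching virtual time, and advance one if the resulting set has size in $(0,l]$. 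Since packets start at the first edges of their paths, the initial configurations correspond, and an induction on $t<2Ll$ (with matched tie-breaking) shows that the two executions make identical moves. In particular, the same set of packets is delivered.

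Second, I would characterise the position of undelivered packets at $t=2Ll$. Let $p=\pth(\pa)=(e_0,\dots,e_{\len(p)-1})$ with $u_i$ the head of $e_i$ (and $u_{\len(p)}=\tar(\pa)$). From the analysis of \Cref{alg:weakschedulingsubroutine}, an undelivered packet $\pa$ ends the forward phase at some node $u_{j^*}$ with $0\le j^*<\len(p)$: the packet moves forward by one node per large time step from forward time $h(p)$ onward and gets stuck once either the eligible set at its next edge first exceeded $l$, or the forward phase ran out of time. In either case $j^*\le 2L-h(p)$, and thereafter the packet's virtual time $h(p)+j^*$ no longer matches any forward large time step.

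The main obstacle is the return phase, $t\in[2Ll,4Ll-1]$. The plan is to show by induction on the reverse times $T=h(p)+j^*-1,\,h(p)+j^*-2,\dots,h(p)$ that each stuck packet traverses the return edges $(u_{j^*},u_{j^*-1}),(u_{j^*-1},u_{j^*-2}),\dots,(u_1,u_0)$, one per large time step. The key claim is that the return eligible set at edge $(u_i,u_{i-1})$ at reverse time $T$ has size at most $l$. Indeed, this set is a subset of the forward eligible set at $(u_{i-1},u_i)$ at forward time $T$: both are exactly the packets whose path contains the step $u_{i-1}\to u_i$ at position $(i-1)\to i$ and whose hash satisfies $h(p)+i-1=T$, with the extra restriction on the return side that the packet is currently at $u_i$. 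If the forward eligible set had size exceeding $l$, nothing crossed $(u_{i-1},u_i)$ in the forward phase, so no packet is at $u_i$ via that step and the return set is empty; otherwise the forward set had size at most $l$, so the return set does too, and the $l$ small time steps within the return large time step suffice to advance every eligible packet. For timing, $j^*\le 2L-h(p)$ gives $h(p)+j^*-1\le 2L-1$, and $h(p)\ge 0$, so every needed reverse time lies inside the range $\{0,\dots,2L-1\}$ covered by the return phase. Hence each undelivered packet reaches $u_0=\src(\pa)$ by time $4Ll$, proving the second claim.
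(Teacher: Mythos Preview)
Your approach matches the paper's: show the forward phase is a faithful simulation of \Cref{alg:weakschedulingsubroutine}, then argue the return phase undoes the forward crossings one large step at a time. The overall plan is correct, but two points need tightening.

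First, the lemma asserts that the delivered set is the same \emph{regardless} of which eligible packet is chosen, not merely under matched tie-breaking. The missing observation (which the paper makes explicit) is that within a large forward step either all of $Q$ crosses or none does: if $|Q|>l$ at the start nothing moves for the whole large step, and if $|Q|\le l$ the $l$ small steps suffice while no new on-schedule packet can join $Q$ mid-step. Hence the end-of-large-step configuration is determined, and the delivered set does not depend on the choices.

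Second, your justification for the inclusion ``return eligible set $\subseteq$ forward eligible set'' is not right as stated. The forward $Q$ at $(u_{i-1},u_i)$ at forward time $T$ is \emph{not} ``exactly the packets whose path contains the step $u_{i-1}\to u_i$ with $h(p)+i-1=T$'': it additionally requires the packet to be at $u_{i-1}$ at that moment, which fails for packets stuck earlier on their paths. So the set you describe can strictly contain the actual forward $Q$, and your case ``size exceeding $l \Rightarrow$ nothing crossed'' need not hold for it. The correct argument is: any packet in the return set at $(u_i,u_{i-1})$ at reverse $T$ is currently at $u_i$; since paths are simple and the return phase only moves packets backward, it must have crossed $(u_{i-1},u_i)$ during the forward phase, and since its virtual time at $u_{i-1}$ equals $T$, it crossed precisely at forward large step $T$ --- hence it was in the \emph{actual} forward $Q$, which therefore had size at most $l$. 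The paper packages this as the global invariant that at the start of reverse step $T$ every undelivered packet is exactly where it was at the end of forward step $T$; your per-packet induction does not by itself establish the positions of all packets needed for the $|Q|$ bound.
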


\shyr{Do we want to say something about the correspondence between packets and paths etc? We are using this implicitly} \antti{can explain more for full version, it's good for now} \antti{also we cheat because these hash functions don't take the identifier}

\begin{proof}
    The forwarding part of \Cref{alg:weakschedulingsubroutine} and \Cref{alg:weakreturnschedulingsubroutine} is exactly the same, and the decisions on which packets to forward do not matter, as within each large time step, every packet in $Q$ is forwarded over the edge (if initially $|Q| \leq l$) or none of the packets in $Q$ are.

    We prove the second claim by induction: the last $2Ll$ time steps rollback large time steps one by one, starting from the last one. The induction claim is as follows: at the start of a large rollback time step $T = (4Ll - 1 - t) / l$ (when $t \geq 2Ll$ and $t = 0\ (\text{mod } l)$), every packet is either at its destination or exactly where it was after the last time step of large time step $T$ in the forward direction.

    This holds for the last large time step, as it is rollbacked right after it occurs, and packets certainly do not move between small time steps. Finally, for the induction step, note that at large rollback time step $T$, the packets that want to return over an edge $e$ are exactly the packets that passed over $\rev(e)$ during large time step $T$, minus those already delivered. Thus, there are at most $l$ of them, and all of them can be returned over the edge within the rollback large time step.

    The algorithm can determine the next and previous edges on paths, as the paths are guaranteed to be simple.
\end{proof}

The noisy scheduling algorithm with return works exactly like the noisy scheduling algorithm in the scheduling model with \Cref{alg:weakreturnschedulingsubroutine} replacing \Cref{alg:weakschedulingsubroutine}. 

\begin{algorithm}[H]
    \caption{Local Rule Weak Noisy Scheduling Algorithm With Return}
    \label{alg:returnnoisyscheduler}
    \begin{algorithmic}[1]
    \State \textbf{Preconditions:} We are working in the local rule routing model, but additionally require that every packet has a computable, invariant path $\pth(\pa)$, and that these paths are simple
%    \State \textbf{Global Inputs:} Noise level $\beta$, domain path set $\calP$ satisfying $\pth(\pa) \in \calP$ for all $\pa$, optimal completion time scale $L$
%    \State \textbf{Abuse of Model:} We state the algorithm iteratively for convenience only, it is memoryless.
    \Function{ReturnNoisyScheduler}{v, $\beta$, $\calP$, $T$}
    \State Let $L$ be the minimum power of two at least $T$
    \State Let $l := \lceil 150c \ln n / \ln \ln n \rceil$
    \State Let $l' := 4\beta l$
    % \State Let $k := \lceil 16(\ln |\calP| + \ln n) / \ln l \rceil$
    \State Let $k := \lceil 8(b + 1)(2c + 1) \ln |M| / \ln l \rceil$ \Comment{$b$ controls success probability}
    \State Sample a set of $k$ independent, uniformly random hash functions $h_1, \dots, h_k$ (from $\calP$ to $\{0, \dots, L - 1\}$) using the shared randomness
    \For{$j \in \{1, \dots, k\}$}
    \For{$t \in \{0, \dots, 4Ll' - 1\}$}
    \State call $\mathrm{WeakReturnScheduler}(v, L, l', h_j, t)$
    \EndFor
    \EndFor
    \EndFunction
    \end{algorithmic}
\end{algorithm}

\returnnoisyschedulingworks

\begin{proof}
By \Cref{lem:iterategoodprob}, the probability that for any scale $L$ the set of $k$ hash functions is $(L,l)$-good for $\calP$ (slightly abusing the definition as the hash function is mapping from paths, not (path, identifier) pairs, but this only makes the resulting probability higher) is at least $1 - 2n^{-b}$ for an arbitrarily large constant $b$ of our choosing. As there are only logarithmically many scales $L$, the set of hash functions at every scale is good with high probability.

Now consider an arbitrary packet set $\ps$ and a $(\beta, T)$-good subset $\ps_S$. Since the set of hash functions for $L$ is $(L, l)$-good, there is a $(L, l)$-good hash function for $\ps_S$. By \Cref{lem:identicalbehaviourlemma} and \Cref{lem:scaleupstillgood}, at least a fourth of the packets in $\ps_S$ are delivered in the call to the weak return scheduler with that hash function. Packets not delivered return to their sources, as the algorithm only consists of some amount of calls to the weak scheduling subroutine with return, and that subroutine guarantees packets not delivered return to the first nodes of their paths.
\end{proof}

\section{Deferred Proofs}\label{sec:deferred}

\badpatterncount*

Recall first the definition of a bad pattern:

\badpatterndefinition*

\begin{proof} (of \Cref{lem:badpatterncount})
    We can select a bad pattern in two parts: first, we select the sizes $s_{T, m} = |B_{T, m}|$ of the sets, then, we select the contents. For the latter, we ignore the requirement that no job $\js$ can appear twice. As the number of pairs $(\js, i)$ satisfying $\seq(\js)_i = m$ for any machine $m$ is at most $C(\js)$, and the size of any nonempty $B_{T, m}$ is greater than $l$, there are then at most
    \begin{equation*}
        \prod_{T, m} \binom{C(\js)}{s_{T, m}} \leq \prod_{s_{T, m} \neq 0} \left(\frac{e C(\js)}{s_{T, m}}\right)^{s_{T, m}} \leq \prod_{T, m} \left(\frac{e C(\js)}{l}\right)^{s_{T, m}} = \left(\frac{e C(\js)}{l}\right)^{s}
    \end{equation*}
    ways to select the contents of the sets $B_{T, m}$. For the former, we first select the number $k$ of nonzero $s_{T, m}$ (at most $\lfloor s / (l + 1) \rfloor \leq \lfloor s / l \rfloor$ as any set is either nonempty or has size greater than $l$), then select for which $(T, m)$ $s_{T, m}$ is nonzero ($\binom{2L|M|}{k}$ ways), then divide the total size $s = \sum s_{T, m}$ among those $k$ (splitting $a$ identical items into $b$ groups can be done in $\binom{a + b - 1}{b - 1} \leq \binom{a + b}{b}$ ways). Thus, the sizes can be selected in at most
    \begin{equation*}
        \sum_{k = 1}^{\lfloor s / l \rfloor} \binom{2L|M|}{k} \binom{s + k}{k} \leq \sum_k (2L|M|(s + k))^k \leq 2 (2L|M|(s + \lfloor s / l \rfloor))^{\lfloor s / l \rfloor} \leq 2 (4L|M|s)^{s / l}
    \end{equation*}
    many ways. As $l \geq 150c \ln |M| / \ln \ln |M|$, $2|M|^c \geq L$ and $|M| \geq 32$, $c \geq 1$, we have
    \begin{equation*}
    \frac{4}{l} \ln(4L|E||\js|) \leq \frac{4}{l} \ln(8|M|^{3c}) = \frac{12c}{l} \ln |M| + \frac{12}{l} \leq \frac{1}{12} \ln \ln |M| + 1 \leq \ln |M|
    \end{equation*}
    thus, the number of ways to select the sizes is at most
    \begin{align*}
        2 (4L|M|s)^{s / l} &\leq 2 (4L|M||\js|)^{|\js| / (2l)} = 2 \exp(|\js| / (2l) \ln(4L|M||\js|))\\
        &\leq 2\exp\left(\frac{|\js|}{8} \cdot \frac{4}{l} \ln(4L|M||\js|)\right)\\
        &\leq 2\exp\left(\frac{|\js| \ln l}{8}\right)
    \end{align*}
    and, as desired, the number of bad patterns of size $s$ is at most
    \begin{equation*}
    2\exp\left(\frac{|\js| \ln l}{8}\right) \left(\frac{eC(\js)}{l}\right)^s.
    \end{equation*}
\end{proof}

\end{document}